\documentclass[12pt]{article}
\usepackage{amsmath, amsthm, amssymb, bbm, setspace,bigints}
\usepackage[margin=1 in]{geometry}
\usepackage{caption, enumitem}
\usepackage{subcaption}
\usepackage[toc,page]{appendix}
\usepackage{graphicx, amsfonts, tikz, multirow}
\usepackage{algorithm, algpseudocode, bm, nicematrix}
\usepackage{adjustbox}
\usepackage{natbib}
\usepackage{mathtools}
\usetikzlibrary{bayesnet}
\usepackage{pdfpages}
\usepackage{epstopdf, pdflscape}
\usepackage{booktabs}
\usepackage[colorlinks,citecolor=blue,urlcolor=blue]{hyperref}
\usepackage[utf8]{inputenc}
\usepackage{authblk}
\usepackage{comment}
\usepackage{xr}

\def\Gauss{{ \mathrm{N} }}
\def\T{\mathrm{\scriptscriptstyle{T}}}

\def\ind{\mathbbm{1}}
\newcommand{\supp}{\mathrm{Supp}}
\newcommand{\norm}[1]{\Vert#1\Vert}

\allowdisplaybreaks
\newtheorem{theorem}{Theorem}[section]
\newtheorem{lemma}[theorem]{Lemma}
\newtheorem{remark}{Remark}

\newtheorem{example}{Example}

\newtheorem{proposition}[theorem]{Proposition}
\newtheorem{corollary}[theorem]{Corollary}

\newtheorem{assumption}[theorem]{Assumption}

\DeclareMathOperator{\tr}{tr}

\newcommand{\Cov}{\mathrm{Cov}}

\usepackage{mathtools}
\usepackage{cleveref}

\date{}
\title{Empirical Bayes data
integreation for multi-response regression}
\author[1]{Antik Chakraborty\thanks{antik015@purdue.edu}}
\author[1]{Fei Xue \thanks{feixue@purdue.edu}}
\affil[1]{Department of Statistics, Purdue University}
\begin{document}
\maketitle
\begin{abstract}
    Motivated by applications in tissue-wide association studies (TWAS), we develop a flexible and theoretically grounded empirical Bayes approach for integrating 
data obtained from different sources. We propose a linear shrinkage estimator that effectively shrinks singular values of a data matrix. This problem is closely connected to estimating covariance matrices under a specific loss, for which we develop asymptotically optimal estimators. The basic linear shrinkage estimator is then extended to a local linear shrinkage estimator, offering greater flexibility. Crucially, the proposed method works under sparse/dense or low-rank/non low-rank parameter settings unlike well-known sparse or reduced rank estimators in the literature. Furthermore, the empirical Bayes approach offers greater scalability in computation compared to intensive full Bayes procedures. The method is evaluated through an extensive set of numerical experiments, and applied to a real TWAS data obtained from the Genotype-Tissue Expression (GTEx) project.
\end{abstract}
\noindent {\it Key words:}
Covariance matrix estimation, GTEx, reduced rank, shrinkage, TWAS.

\section{Introduction}\label{sec:intro}
Genome-wide association studies (GWAS) aim to identify potential genotype markers associated with a particular phenotype, typically a disease. GWAS data usually involve a large number of genes, which caused the explosion of statistical methods that are able to handle many variables. 
More recently, genetic scientists are collecting gene expression data from multiple tissues, e.g., brain tissues and heart tissues \citep{mai2023transcriptome, xue2022empirical}. 
To leverage these multi-tissue gene expression data in identification of genotype markers, tissue-wide association studies (TWAS) 
prioritize genes that are functionally linked to the phenotype by associating genetically predicted gene expression with the phenotype \citep{wainberg2019opportunities}, where we need to predict multi-tissue gene expression values based on genotype data.

Although individual analysis on the prediction of gene expression in each tissue is possible with existing methods, the resulting analysis does not integrate potential shared information across tissues. Indeed, when data from multiple sources with some commonality are available, a joint analysis across all the data sources allows for borrowing of strength. 
{
However, data obtained from TWAS studies might
not adhere to 
sparsity or low-rank structure \citep{heap2009complex, gresle2020multiple}, which is a crucial assumption for many of the available statistical methods 
\citep{velu2013multivariate}.}
Such assumptions are also very hard to verify in practice.
In this article, our aim is to develop methods that a) successfully integrate data across multiple sources (e.g., multiple tissues), b) are computationally scalable, and c) perform well regardless of specific structures within the parameter. 

Specifically, we focus on the case where the ordinary least squares (OLS) estimate is available under a linear regression model with a response and a set of predictors in each data source (or tissue).
Across different data sources,
the predictors are the same but the response varies. 
Vector-valued outcomes from multiple sources are also encountered in other scientific disciplines including finance, bioinformatics, and growth curve models. For example, in finance stock prices of multiple companies are studied in relation to the same set of predictors. 

Under the assumption of a linear relationship between  the predictors and the response variables in all data sources, the interest centers on recovering the matrix of regression coefficients. 
Traditionally, this problem was studied through the lens of reduced-rank regression \citep{anderson1951estimating, izenman1975reduced, velu2013multivariate, geweke1991efficient}. More recently, methods that are able to handle high-dimensional (many predictors) along with the reduced rank nature of the coefficient matrix have been developed \citep{yuan2007dimension, bunea2011optimal, bunea2012joint, chen2012sparse} catering to modern applications, including denoising Gaussian matrices, a closely related problem. Within the frequentist framework, the reduced rank constraint and presence of effect of a subset of predictor variables are most naturally expressed in terms of penalized regression which can be interpreted as suitable priors over the parameter space leading to a maximum {\it a posteriori} (MAP) interpretation of the estimators. Full Bayes treatment of the problem has also been carried out \citep{bai2018high, chakraborty2020bayesian}. The resulting procedures principally shrink the coefficient matrix towards low-rank structures. 

A common thread between all penalty-based methods and Bayesian versions thereof is the assumption of an underlying structure (low-rank and/or row-sparse) in the coefficient matrix, which statistically is meaningful but is very hard to verify in practice. Here, we take a different view to the problem in line with the three objectives outlined earlier. Our solution is an empirical Bayes one, which is able to borrow information from multiple sources, avoids computationally intensive full Bayes procedures, and applies to situations where no specific structural information about the parameter is available.
In fact, the OLS estimates from each source can be treated as observed data under an additive model where Gaussian noise is added to the true matrix of coefficients. 
\cite{efron1972empirical} proposed an empirical Bayes estimator for this mean matrix estimation problem 
which induces linear shrinkage on the singular values of the observation matrix. More recently, \cite{matsuda2015singular} developed superharmonic priors for singular value shrinkage for matrix-valued mean parameters. In their development, they closely follow \cite{stein1981estimation} who showed optimality properties of Bayes estimates with a superharmonic prior distribution. These singular value superharmonic priors were then used in the context of matrix completion \citep{matsuda2019empirical}, estimation under matrix quadratic loss \cite{matsuda2022estimation}. This class of priors also place increasing amount of mass near low-rank matrices, thus implicitly assuming such an underlying structure.
Moreover, \cite{wang2021linear} developed an empirical Bayes estimator for multivariate linear regression problems but they mainly focused on prediction.

\cite{efron1972empirical, efron1976multivariate} noted that the linear shrinkage estimator of the mean matrix can alternatively be interpreted as the posterior mean under Gaussian priors; the optimal decision in this context under the Frobenius loss. However, from a practical perspective, specification of Gaussian priors requires one to specify a prior covariance matrix which is not immediate, especially without certain structural assumption on the parameter. 
\cite{efron1976multivariate} subsequently show that the success of linear shrinkage estimators relies on the accurate estimation of the marginal covariance of the data under the relative savings loss for estimating covarince matrices. \cite{efron1976multivariate} considered rotation invariant estimators of the covariance matrix.
Their initial suggestion was to use a linear shrinkage of eigenvalues for the covariance estimation problem. Linear shrinkage of eigenvalues was pioneered by \cite{ledoit2004well}. In a series of papers the authors have developed general non-linear shrinkage estimators and studied their properties; see for example \cite{ledoit2011eigenvectors, ledoit2012nonlinear, ledoit2018optimal, ledoit2022quadratic}. Although \cite{ledoit2022quadratic} established an optimal (asymptotic) shrinkage rule under several loss functions within the class of rotation invariant estimators, they have not considered the relative savings loss which is important for the regression coefficient matrix estimation. {Coincidentally, the need to accurately estimate the covariance matrix is also necessary for prediction problems. \cite{banerjee2021improved} studies Bayes predictive estimators for multivariate Gaussian models. Here, the optimal Bayes rule involves a quadratic form in the unknown covariance. The authors assume a spiked covariance structure \citep{paul2007asymptotics} for estimation of the unknown covariance, whereas, in this work, we take a loss-minimization-based approach.} 

In this paper, we consider the regression coefficient matrix estimation problem under the empirical Bayes framework, which mainly relies on the estimation of the covariance matrix of standardized OLS estimates.
We first develop an asymptotically optimal shrinkage rule for estimating the covariance matrix under the relative savings loss, and then propose a linear shrinkage estimator of the regression coefficient matrix based on the estimated covariance matrix. In this way, our proposed coefficient estimator is optimal asymptotically within the class of linear shrinkage estimators of the regression coefficient matrix. 
The proposed estimators are derived under settings when the number of data sources available is larger than the number of variables, and when it is not.

The proposed shrinkage rule that minimizes the relative savings loss is defined in terms of a smoothing parameter.
Our next contribution is to develop a data-dependent choice of this smoothing parameter,
using the technique of unbiased risk estimation (SURE). Our numerical experiments reveal that such data-dependent tuning often results in improved risk results. 

Finally, to gain further flexibility, we extend our estimator to a situation where the prior is a mixture of Gaussian, which results in an adaptively weighted local linear shrinkage rule for estimating the mean matrix. This is useful for capturing any complex prior structure in the parameter. However, a fundamental benefit of the proposed approach is that we can compute the proposed estimator without having to carefully devise an explicit prior that embeds this complex structure. Our numerical experiments reveal that the local linear shrinkage estimator has better or at par performance with estimators that are specifically designed for structural parameters, without resorting to such assumptions.

In Section \ref{sec:estimator} we introduce the problem and propose the estimator. Section \ref{sec:covariance_shrinkage} is devoted to the development of the covariance estimator and its data-dependent version. Section \ref{sec:local_linear_shrinkage} describes a version of the proposed estimator under a mixture prior. 
In Sections \ref{sec:experiments} and \ref{sec:real_data}
we 
evaluate the proposed estimator through numerical experiments, compare it with other approaches, and apply it to the  Genotype-Tissue Expression (GTEx) data \citep{lonsdale2013genotype}.

\section{Data integration by linear shrinkage}\label{sec:estimator}

{
In this section, we introduce the problem setup and propose an empirical Bayesian framework for estimation of genotype effects on gene expression levels.}
For a certain gene, we let $y_i^{(t)}$ be its expression level for the $i$-th subject in the $t$-th tissue, and $x_i$ be a fixed $p$-dimensional vector of single nucleotide polymorphisms (SNPs) for the $i$-th subject. Here $i = 1, \ldots, N$ and $t = 1, \ldots, n$. We assume the sample size $N>p$ but allow the tissue size $n$ to be either larger or smaller than $p$. Consider a linear regression model for the $t$-th tissue 
\begin{equation}\label{eq:tissue_linear_regression}
    y_i^{(t)}= x_i^\T \beta^{(t)} + \varepsilon_i^{(t)},
\end{equation}
where $\beta^{(t)}$ is a $p$-dimensional coefficient vector and $\varepsilon_i^{(t)}\sim N(0, \sigma^2)$ is the error term. We assume that $\varepsilon_1^{(t)}, \dots, \varepsilon_N^{(t)}$ are conditionally independent given the tissue-specific coefficient $\beta^{(t)}$. 

The OLS estimator $\widehat{\beta}^{(t)}=(X^\T X)^{-1}X^T y^{(t)}$ is a single-tissue estimator of $\beta^{(t)}$, where $X=(x_1, \dots, x_N)^T$ is the $N\times p$ fixed design matrix, and $y^{(t)}$ is a vector consisting of  $y_i^{(t)}$ ($1\le i \le N$). 
Since $\varepsilon_i^{(t)}\sim N(0, \sigma^2)$, we have $\widehat{\beta}^{(t)} \mid \beta^{(t)} \sim \Gauss(\beta^{(t)}, \sigma^2 (X^\T X)^{-1})$.
Let $B$ be a $n \times p$ matrix with $\beta^{(t)}$ as the $t$-th row, and $\widehat{B}$ be a $n \times p$ matrix with $\widehat{\beta}^{(t)}$ as the $t$-th row. 

By sufficiency of the OLS estimator, we can consider $\widehat{B}$ as our data matrix and aim to estimate $B$ under the model
\begin{equation}\label{eq:base_model}
    \widehat{ \beta}^{(t)} = {\beta^{(t)}} + {u}^{(t)} \ \  \text{with} \ \ {u^{(t)}} \sim \Gauss(0, Q), 
\end{equation}
 where $Q = \sigma^2(X^\T X)^{-1}$. 
To leverage shared information across multiple tissues, we propose to assign a common prior $\pi$ to $\beta^{(t)}$ for $t=1,\dots, n$. 
 For an estimator $\widetilde{B}$ 
 of $B$, we consider the Frobenius loss
\begin{align}\label{eq:general_squared_loss}
    \mathbf{L}(B, \widetilde{B})= \sum_{t=1}^n (\widetilde{\beta}^{(t)} - \beta^{(t)})^T (\widetilde{\beta}^{(t)} - \beta^{(t)}) = \tr [(\widetilde{B}- B)(\widetilde{B}-B)^\T]
\end{align}
where for any matrix $A$, we write $\mathrm{tr}(A) = \sum_{j} A_{jj}$. The corresponding posterior expected loss is $\mathbb{E}_{B\mid \widehat{B}}[\mathbf{L}(B, \widetilde{B})]$, and the minimizer of the $\mathbb{E}_{B\mid \widehat{B}}[\mathbf{L}(B, \widetilde{B})]$ is the posterior expectation $\mathbb{E}(B \mid \widehat{B})$. This is a vector-valued version of the canonical Normal means problem, which was considered by \cite{efron1972empirical} as an extension of Stein's shrinkage idea to vector-valued observations. In the most general setup of the problem, $\sigma^2$ is unknown. However, it can be estimated from the tissue-specific regressions, and in this paper we use the average of those estimators as a fixed value of $\sigma^2$.







In particular, 
a Gaussian prior distribution for $\beta^{(t)}$ yields a linear shrinkage decision rule.
Specifically, 
if $\pi$ is $\Gauss(0, Q^{1/2}\Omega Q^{1/2})$, then $\mathbb{E}(\beta^{(t)} \mid \widehat{B})=(\mathrm{I} - C) \widehat{\beta}^{(t)}$, where $C = Q^{1/2}(\mathrm{I} +\Omega)^{-1} Q^{-1/2}$. 
Here, we scale the prior with the observation noise $Q$ which is quite common in the Bayesian literature \citep{park2008bayesian}. In order to use this estimator, one needs to specify $\Omega$, which is not straightforward without making structural assumptions on $\beta^{(t)}$. However, an empirical Bayes analysis could still be carried out without this specification by noting that $\Sigma = (\mathrm{I} + \Omega) = \Cov(\widehat{\beta}^{(t)}_\star)$, where $\widehat{\beta}^{(t)}_\star = Q^{-1/2} \widehat{\beta}^{(t)}$. That is, we can estimate $\Omega$ or $\Sigma$ from the observed data.

Therefore, given an estimate $\widehat{\Sigma}^{-1}$ of $\Sigma^{-1}$, 
we propose to estimate the parameter $\beta^{(t)}$ using an estimated posterior mean  $\mathbb{E}(\beta^{(t)} \mid \widehat{B}) = (\mathrm{I} - \widehat{C})\widehat{\beta}^{(t)}$, where $\widehat{C} = Q^{1/2}\widehat{\Sigma}^{-1}Q^{-1/2}$ is
a plug-in estimate of $C$ since $Q$ is known. 
This implies that there is an intricate connection between estimating the parameter matrix $B$ under the Frobenius loss and estimating the marginal covariance matrix $\Sigma$.
In fact, by \cite{efron1976multivariate}, estimating $B$ under Frobenius loss, within the class of linear shrinkage estimators, i.e. {$\widetilde{\beta}^{(t)} = (\mathrm{I} - \widehat{C})\widehat{\beta}^{(t)}$}, is equivalent to the problem of estimating $\Sigma^{-1}$ under 
a relative savings loss $\mathbf{L}(\Sigma^{-1}, \widehat{\Sigma}^{-1}) = \mathrm{tr}((\Sigma^{-1} - \widehat{\Sigma}^{-1})^2 S)$, where $S =\sum_{t=1}^n \widehat{\beta}^{(t)}_\star \widehat{\beta}_\star^{{(t)}^\T}$. {This equivalence is recorded in the following Proposition \ref{prop:equivalence} for the sake of completeness. Its corresponding proof is provided in Section S.6.1 
of the Supplement.
\begin{proposition}\label{prop:equivalence}
    Suppose $\beta^{(t)} \sim \Gauss(0, Q^{1/2}\Omega Q^{1/2})$ with known $Q$ and $\hat{\beta}^{(t)} \mid \beta^{(t)} \sim \Gauss(\beta^{(t)}, Q)$. Consider the loss $\mathbf{L}(B, \widetilde{B})= \tr [(\widetilde{B}- B)(\widetilde{B}-B)^\T]$ and estimators $\widetilde{B} = \widehat{B}(\mathrm{I} - \widetilde{C})$, where $\widetilde{C}$ is an estimator of $C = Q^{1/2}(\mathrm{I} + \Omega)^{-1} Q^{-1/2}$. Then 
    $$\mathbb{E}_{(B, \widetilde{B})}[\mathbf{L}(B, \widetilde{B})] = \mathbb{E}_{\widehat{B}}[\tr \{(\widetilde{\Sigma}^{-1} - \Sigma^{-1})^2 \}\widehat{B}^\T \widehat{B}] +\text{constant}. $$
\end{proposition}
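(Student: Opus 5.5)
The plan is the usual Bayes-risk decomposition around the posterior mean, carried out in the standardized coordinates $\widehat{\beta}^{(t)}_\star = Q^{-1/2}\widehat{\beta}^{(t)}$ and $\beta^{(t)}_\star = Q^{-1/2}\beta^{(t)}$. In these coordinates the model becomes $\widehat{\beta}^{(t)}_\star \mid \beta^{(t)}_\star \sim \Gauss(\beta^{(t)}_\star, \mathrm{I})$ with prior $\beta^{(t)}_\star \sim \Gauss(0,\Omega)$. The marginal is therefore $\widehat{\beta}^{(t)}_\star \sim \Gauss(0,\Sigma)$ with $\Sigma = \mathrm{I}+\Omega$, and the posterior mean is $\mathbb{E}(\beta^{(t)}_\star\mid \widehat{B}) = (\mathrm{I}-\Sigma^{-1})\widehat{\beta}^{(t)}_\star$.

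Under the map $C = Q^{1/2}\Sigma^{-1}Q^{-1/2}$, an estimator $\widetilde{C} = Q^{1/2}\widetilde{\Sigma}^{-1}Q^{-1/2}$ gives $\widetilde{\beta}^{(t)} = Q^{1/2}(\mathrm{I}-\widetilde{\Sigma}^{-1})\widehat{\beta}^{(t)}_\star$. Writing $Z$ for the $n\times p$ matrix with rows $\widehat{\beta}^{(t)\T}_\star$, the estimator is the linear shrinkage $Z(\mathrm{I}-\widetilde{\Sigma}^{-1})$, transported back by $Q^{1/2}$.

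\textbf{Step 1 (orthogonal decomposition).} Write $\widetilde{B}-B = (\widetilde{B} - \mathbb{E}[B\mid\widehat{B}]) + (\mathbb{E}[B\mid\widehat{B}] - B)$ and expand the trace. The first bracket is $\widehat{B}$-measurable, because $\widetilde{\Sigma}$ is a function of the data. Conditioning on $\widehat{B}$ therefore kills the cross term, since $\mathbb{E}[\mathbb{E}[B\mid\widehat{B}]-B\mid\widehat{B}]=0$. The expectation of the second squared term is the Bayes risk of the posterior mean. This depends only on $Q$ and $\Omega$, not on $\widetilde{C}$, and it is the "constant" in the statement.

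\textbf{Step 2 (identify the remaining term).} In standardized coordinates, $\mathbb{E}[B_\star\mid\widehat{B}] - \widetilde{B}_\star = Z(\widetilde{\Sigma}^{-1}-\Sigma^{-1})$. Hence, using symmetry of $\Sigma^{-1}$ and $\widetilde{\Sigma}^{-1}$ and cyclicity of the trace,
\begin{align*}
\tr\{(\widetilde{\Sigma}^{-1}-\Sigma^{-1})Z^\T Z(\widetilde{\Sigma}^{-1}-\Sigma^{-1})\} = \tr\{(\widetilde{\Sigma}^{-1}-\Sigma^{-1})^2 Z^\T Z\}.
\end{align*}
Here $Z^\T Z = S = \sum_t \widehat{\beta}^{(t)}_\star\widehat{\beta}^{(t)\T}_\star$. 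This is the relative savings loss, and taking $\mathbb{E}_{\widehat{B}}$ gives the displayed identity, with $\widehat{B}^\T\widehat{B}$ read as the standardized Gram matrix $S$.

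\textbf{Main obstacle: the factor $Q$.} The main obstacle is bookkeeping of $Q$ when the loss is taken literally in the original coordinates. In that case the first term is $\tr\{Q^{1/2}(\widetilde{\Sigma}^{-1}-\Sigma^{-1})S(\widetilde{\Sigma}^{-1}-\Sigma^{-1})Q^{1/2}\}$, which carries an extra $Q$ weight. I would first check whether the statement is meant with the Frobenius loss on the standardized parameters $B_\star = BQ^{-1/2}$ (equivalently, $Q$ proportional to $\mathrm{I}$, where $Q$ factors out as a scalar). If so, I would state the proposition in those coordinates, so that $\widehat{B}^\T\widehat{B}$ is read as $S$. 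Otherwise I would keep the $Q$-weighted version $\tr\{(\widetilde{\Sigma}^{-1}-\Sigma^{-1})S(\widetilde{\Sigma}^{-1}-\Sigma^{-1})Q\}$, noting that it reduces to the stated form after this standardization. Either way the argument is Steps 1 and 2 unchanged.
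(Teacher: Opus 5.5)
Your proof is correct and follows essentially the paper's route. The paper also sets $Q=\mathrm{I}$, substitutes $\mathbb{E}(B\mid\widehat{B})=\widehat{B}(\mathrm{I}-C)$, and completes the square by writing $\mathbb{E}_{B\mid\widehat{B}}\tr(BB^\T)$ as a posterior-variance constant plus the squared posterior mean, which is just an explicit version of your orthogonal decomposition. Your caveat about $Q$ is well-founded and the paper glosses over it: for general $Q$ the remaining term is $\tr\{(\widetilde{\Sigma}^{-1}-\Sigma^{-1})S(\widetilde{\Sigma}^{-1}-\Sigma^{-1})Q\}$, so the paper's ``without loss of generality $Q=\mathrm{I}$'' means the displayed identity holds as written only in standardized coordinates or for $Q\propto\mathrm{I}$.
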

}

Hence, the best linear shrinkage estimator of $B$ should be based on the estimate $\widehat{\Sigma}^{-1}$ that is optimal in terms of the relative savings loss. 
We now turn our attention to the problem of estimating 
$\Sigma^{-1}$
under the loss $\mathbf{L}(\Sigma^{-1}, \widehat{\Sigma}^{-1}) = \mathrm{tr}((\Sigma^{-1} - \widehat{\Sigma}^{-1})^2 S)$.
\cite{efron1972empirical, efron1976multivariate} suggested two estimators for $\widehat{\Sigma}^{-1}$. The first is the natural unbiased estimator which is obtained by observing that $\widehat{\beta}^{(t)}_\star \overset{iid}{\sim} \Gauss(0, \Sigma)$, and the standard multivariate Gaussian distribution theory yields $S^{-1} \sim \text{inv-Wishart}(\Sigma^{-1}, n)$. Thus, $\widehat{\Sigma}^{-1}=(n-p-1)S^{-1}$ is an unbiased estimator of $\Sigma^{-1}$. The second involves a linear shrinkage estimator of $\Sigma^{-1}$. But linear shrinkage might not perform well under certain situations \citep{ledoit2012nonlinear}.

\section{Covariance shrinkage}\label{sec:covariance_shrinkage}

In this section, we develop a rotation invariant estimator for $\Sigma$.
Since this is an independent problem of interest, we consider a general setup: suppose that $n$ $p$-dimensional independent and identically distributed observations are available with zero mean and covariance matrix $\Sigma_n=\Sigma$. Our following results include cases when $n >p$ and $n<p$. In this section, we use the subscript $n$ to emphasize on the asymptotic framework we work in. 
The observed data is arranged in an $n \times p$ matrix $Z_n$. In the notation of the previous section, the rows of $Z_n$ are given by $\widehat{\beta}^{(t)}_\star$. Let $S_n = n^{-1}Z_n^\T Z_n$ be the sample covariance matrix. Consider the spectral decomposition of $S_n = U_n \Lambda_n U_n^\T = \sum_{i=1}^p \lambda_{n,i} u_{n,i}u_{n,i}^\T$, where $U_n$ is an orthogonal matrix with $u_{n,i}$ as the $i$-th column of $U_n$ and $\Lambda_n$ is a diagonal matrix with elements $ \bm{\lambda}_n= (\lambda_{n,1}, \ldots, \lambda_{n, p})^\T$ as the corresponding eigenvalues arranged in a non-decreasing order. 

We focus on the class of rotation invariant estimators $\widetilde{\Sigma}_n = U_n \widetilde{\Delta}_n U_n^\T$, where $\widetilde{\Delta}_n = \text{diag}(\delta_{n}(\lambda_{n,1}), \ldots, \delta_{n}(\lambda_{n,p}))$ and $\delta_n$ is a  {positive} univariate function that may depend on $S_n$. 
Such estimators $\widetilde{\Sigma}_n$ are 
rotation invariant since multiplying the data $Z_n$ by an orthogonal matrix with a determinant of one rotates the estimators accordingly.
Modern high-dimensional methods often rely on low-dimensional structures of the data when considering the problem of estimating unknown covariance matrices, for instance, sparsity. While properties of these estimators are well understood, relatively little is known about their performance when such assumptions do not hold. Instead, we pursue an estimator that principally shrinks eigenvalues without assuming any structure.
As indicated earlier, we focus on the problem of estimation of $\Sigma_n^{-1}$ under the loss $\mathbf{L}_n(\Sigma_n^{-1}, \widetilde{\Sigma}_n^{-1}) = \mathrm{tr}[(\Sigma_n^{-1} - \widetilde{\Sigma}_n^{-1})^2 S_n]$.

In our study, we consider a general version of the relative savings loss. Specifically, we 
consider 
$$\mathbf{L}_{m,n}(\Sigma_n^{-1}, \widetilde{\Sigma}_n^{-1}) 
=  \frac{1}{p} \mathrm{tr}[(\Sigma_n^{-1} - \widetilde{\Sigma}_n^{-1})^2 S_n^m ]$$ 
for $m = 0,1, 2, \dots$.
{\begin{remark}
    The case $m = 0$ corresponds to the inverse Frobenius loss; see also \cite{ledoit2018optimal, haff1979estimation}, whereas for $m=1$, we recover the relative savings loss, the focus of our paper. \cite{boukehil2021estimation, kubokawa2008estimation} studied the case $m = 2$.
    From a practical perspective, large values of $m$ put increasingly larger weight on large sample eigenvalues in terms of their contribution to the loss. Another major motivation for studying this general class of loss functions is to draw a distinction between the cases $m = 0$ and $m\geq 1$. As it turns out, if $m = 0$, then the optimal shrinkage function depends on the population eigenvalues through the limiting population eigenvalue distribution $H$. This problem was addressed in \cite{ledoit2018optimal}. Their solution was a numerical one, namely, the QuEST function. However, our results will show that for $m\geq 1$,  the optimal solution does not involve $H$. In other words, an explicit solution is available.
\end{remark}
}
Our goal here is to provide an optimal shrinkage rule under the loss function $\mathbf{L}_{m,n}$. We do so using three key steps - 1) we first find an almost sure non-random limit of $\mathbf{L}_{m,n}$, 2) then we find the shrinkage rule which minimizes this limit, and 3) find a consistent estimator of this optimum shrinkage rule. 

To find the almost sure limit of $\mathbf{L}_{m,n}$, we introduce the following notations and re-write the loss function.
We let $F_n(x) = \frac{1}{p} \sum_{j=1}^p \ind_{\{\lambda_{n,j} \leq x\}}$ and  $F_n^*(x) = p^{-1} \sum_{j=1}^p \ind_{\{\lambda_{n,j}^{-1} \leq x\}}$ be the empirical distribution functions of the sample eigenvalues and inverse eigenvalues, respectively. Clearly, for $x>0$, $F_n^*(x) = 1 - F_n(1/x)$. Recall that $\widetilde{\Sigma}_n = U_n \widetilde{\Delta}_n U_n^\T$, where $\widetilde{\Delta}_n = \text{diag}(\delta_{n}(\lambda_{n,1}), \ldots, \delta_{n}(\lambda_{n,p}))$.
Let $\delta_{n, j}=\delta_{n}(\lambda_{n,j})$ for $j=1,\dots, p$.
We then have
\begin{align*}
    \mathbf{L}_{m,n}(\Sigma_n^{-1}, \widetilde{\Sigma}_n^{-1}) 
    = \frac{1}{p} \mathrm{tr}[(\Sigma_n^{-1} &- \widetilde{\Sigma}_n^{-1}) (\Sigma_n^{-1} - \widetilde{\Sigma}_n^{-1}) S_n^{m}]\\
    = \frac{1}{p} \sum_{j=1}^p (u_{n,j}^\T \Sigma_n^{-2} u_{n,j}) \lambda_{n,j}^m & - \frac{2}{p}\sum_{j=1}^p (u_{n,j}^\T \Sigma_n^{-1}u_{n,j}) \frac{\lambda_{n,j}^m}{\delta_{n,j}} + \frac{1}{p} \sum_{j=1}^p \frac{\lambda_{n,j}^m}{\delta_{n,j}^2} \\
     =  \int_{-\infty}^{\infty} x^m d\Phi_n^{(-2)} (x) & - 2 \int_{-\infty}^{\infty} \frac{x^m}{\delta_{n}(x)} d\Phi_{n}^{(-1)}(x) + \int_{-\infty}^{\infty} \frac{x^m}{\delta^2_n(x)} dF_n(x),
\end{align*}
where $\Phi_n^{(-l)}(x) = \frac{1}{p} \sum_{j=1}^p (u_{n,j}^T\Sigma_n^{-l}u_{n,j})\ind_{[\lambda_{n,j}, \infty)}(x)$ for $l=1$ and $2$. 
Let $\Sigma_n = V_n \Gamma_n V_n^\T$ be the spectral decomposition of the true covariance matrix $\Sigma_n$, $v_{n,k}$ be the $k$-th column of $V_n$, and $\gamma_{n,k}$ be the $k$-th diagonal element of $\Gamma_n$. 
Note that, for $l = 2$, we have $\Phi_n^{(-2)}(x) = \frac{1}{p} \sum_{i=1}^p \ind_{[\lambda_{n,i}, \infty)}(x) \sum_{k=1}^p |u_{n,k}^\T v_{n,k}|^2 \gamma_{n,k}^{-2}$. 
We also let $H_n(x) = p^{-1} \sum_{j=1}^p \ind_{\{\gamma_{n,j} \leq x\}}$ be the empirical distribution of population eigenvalues. 

We make the following assumptions on the data and population distribution.
\begin{assumption}(Dimension)\label{assup_ratio}
    The concentration ratio $p/n\to c$ as $n\to \infty$, where $c>0$. We consider two scenarios:
    \begin{enumerate}[label=(\alph*)]
        \item The concentration ratio $c<1$, and there is a compact interval in $(0,1)$ that contains $p/n$ for large $n$.
        \item The concentration ratio $c>1$, and there is a compact interval in $(1,\infty)$ that contains $p/n$ for large $n$.
    \end{enumerate}
\end{assumption}

\begin{assumption}(Population)\label{assup_sigma}\leavevmode
\begin{enumerate}[label=(\alph*)]
    \item  The population covariance matrix $\Sigma_n$ is a $p\times p$ nonrandom symmetric positive-definite matrix. 
    \item  We assume that $H_n$ converges weakly to a limiting spectral distribution $H$, whose support, denoted by $\supp (H)$, is a finite union of closed intervals away from zero.
    \item  There exists a closed interval in $(0, +\infty)$ that contains $\gamma_{n,1}, \dots, \gamma_{n,p}$ for all large $n$.
\end{enumerate}
\end{assumption}

\begin{assumption}(Data)\label{assup_iid}
    The observed matrix $Z_n=W_n\Sigma_n^{1/2}$, where $W_n$ is a $n\times p$ matrix of independent and identically distributed (i.i.d) random variables with mean zero, variance one, and a finite $12$th moment.
\end{assumption}


Following \cite{silverstein1995analysis}, \cite{silverstein1995empirical}, and \cite{silverstein1995strong}, under Assumptions \ref{assup_ratio}(a), \ref{assup_sigma}, \ref{assup_iid}, we have $F_n(x) \overset{a.s.}{\to} F(x)$ as $n\to \infty$ for any $x\in\mathbb{R}$, where $F(x)$ is a continuously differentiable limiting spectral distribution function. This in turn implies the strong convergence of $F_n^*(x)$ to $F^*(x)$ where $F^*(x) = 1 - F(1/x), \, x>0$. Moreover, by \cite[Theorem 1.1]{bai1998no}, under the same assumptions, there exists a finite number $K\ge 1$ such that $\supp (F)=\cup_{k=1}^K [a_k, b_k]$, where $\supp (F)$ denotes the support set of $F$ and $0< a_1 < b_1<a_2 < b_2< \ldots < a_K < b_K < \infty$. Since $H_n$ also has a limit $H$, the limiting spectral distribution $F_n$ of sample eigenvalues which is $F$, is uniquely determined by the concentration ratio and $H$.

\begin{assumption}[Limiting shrinkage function]\label{assup_est_eigen}
    There exists a nonrandom continuously differentiable function $\delta$ defined on $\supp (F)=\cup_{k=1}^K [a_k, b_k]$ such that $\delta_n(x) \overset{a.s.}{\to} \delta (x)$ for $x\in \supp (F)$ as $n\to\infty$.
    In addition, the convergence is uniform for $x\in \cup_{k=1}^K [a_k+\eta, b_k-\eta]$ for any small $\eta>0$.
    Furthermore, there exists a finite nonrandom constant $M$  such that $|\delta_n(x)|$ is uniformly bounded 
away from zero by $M$ almost surely for all $x\in \cup_{k=1}^K [a_k-\eta, b_k+\eta]$, large $n$, and small $\eta>0$.
\end{assumption}


{
These assumptions are also adopted in \cite{ledoit2022quadratic} and \cite{ledoit2018optimal}.
Assumption \ref{assup_est_eigen} requires that shrinkage functions $\delta_n$ to be well behaved asymptotically.}
For a bounded, non-decreasing function $G$, we define its Steiltjes transform by $m_G(z) = \int_{-\infty}^\infty (x - z)^{-1} dG(x)$ for $z \in \mathbb{C}^+$, i.e. $z = x+iy$ with some $x \in \mathbb{R}$ and $y \in \mathbb{R}^+$. For a complex valued function $g(z)$, let $\mathrm{Re}[g(z)]$ and $\mathrm{Im}[g(z)]$ denote the real and imaginary parts, respectively. 

The following lemma is a key ingredient in finding the almost sure non-random limit of $\mathbf{L}_{m,n}$, where we obtain the limit of $\Phi_n^{(-l)}$ as $n\to\infty$.

\begin{lemma}\label{lem_converge_dis}
    Suppose Assumptions \ref{assup_ratio} (a) or (b), \ref{assup_sigma}, \ref{assup_iid} hold. 
    For any integer $l$, we have that $\Phi_n^{(-l)}$ converges almost surely pointwisely to $\Phi^{(-l)}(x)$ as $n\to\infty$ for all $x$ such that $\Phi^{(-l)}(x)$ is continuous, where
    \begin{align*}
    \Phi^{(-l)}(x) &= \lim_{\eta\to 0^+} \frac{1}{\pi} \int_{-\infty}^x \mathrm{Im} [\Theta^{(-l)} (\xi+i\eta)]d\xi\, , \, \text{and} \\
    \Theta^{(-l)}(z) &= \int_{-\infty}^{+\infty} \{\gamma[1-c^{-1}-c^{-1}z m_F(z)]-z\}^{-1}\gamma^{-l} dH(\gamma).
\end{align*}
\end{lemma}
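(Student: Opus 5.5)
The natural route is the Stieltjes-transform method of \cite{ledoit2011eigenvectors}, which treats the case of a positive power $\Sigma_n^{1}$. For $z\in\mathbb{C}^+$, define
$$\Theta_n^{(-l)}(z)=\frac{1}{p}\operatorname{tr}\bigl[(S_n-zI)^{-1}\Sigma_n^{-l}\bigr]=\frac1p\sum_{j=1}^p\frac{u_{n,j}^\T\Sigma_n^{-l}u_{n,j}}{\lambda_{n,j}-z}.$$
This is exactly the Stieltjes transform $m_{\Phi_n^{(-l)}}(z)$ of the (bounded, nondecreasing) function $\Phi_n^{(-l)}$. The plan has three steps:
\begin{enumerate}
\item show that $\Theta_n^{(-l)}(z)\to\Theta^{(-l)}(z)$ almost surely for every $z\in\mathbb{C}^+$;
\item upgrade pointwise convergence to convergence on a countable dense set and invoke the Stieltjes continuity theorem, which gives vague (hence, by uniform boundedness of total mass, weak) convergence $\Phi_n^{(-l)}\to\Phi^{(-l)}$;
\item read off $\Phi^{(-l)}$ by the inversion formula, giving the stated expression $\Phi^{(-l)}(x)=\lim_{\eta\to0^+}\pi^{-1}\int_{-\infty}^x\mathrm{Im}\,\Theta^{(-l)}(\xi+i\eta)\,d\xi$ at continuity points.
\end{enumerate}
The total mass of $\Phi_n^{(-l)}$ is $p^{-1}\operatorname{tr}\Sigma_n^{-l}$, which is bounded by Assumption~\ref{assup_sigma}(c). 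This holds for every integer $l$, positive or negative, because the $\gamma_{n,k}$ lie in a compact interval away from $0$, so tightness is automatic.

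For the first step I would use the deterministic-equivalent approach. Write $S_n=n^{-1}\sum_t \Sigma_n^{1/2}w_tw_t^\T\Sigma_n^{1/2}$ with $w_t$ the rows of $W_n$, and use the resolvent identity with the rank-one (Sherman–Morrison) decomposition as in \cite{silverstein1995strong}. This shows that $(S_n-zI)^{-1}$ is close, in the sense of normalized traces against bounded matrices, to
$$\bigl\{\Sigma_n[1-c_n^{-1}-c_n^{-1}z\,m_{F_n}(z)]-zI\bigr\}^{-1},\qquad c_n=p/n.$$
More precisely, for any deterministic $A_n$ with bounded operator norm,
$$\frac1p\operatorname{tr}\bigl[(S_n-zI)^{-1}A_n\bigr]-\frac1p\operatorname{tr}\bigl[\{\Sigma_n(1-c_n^{-1}-c_n^{-1}zm_{F_n}(z))-zI\}^{-1}A_n\bigr]\to0 \quad\text{a.s.}$$
This is the generalized Marčenko–Pastur / Bai–Silverstein statement; the case $A_n=\Sigma_n$ is Theorem~1.4 of \cite{ledoit2011eigenvectors}, and it also appears in \cite{bai1998no}-type arguments. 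Taking $A_n=\Sigma_n^{-l}$, the second trace is a function of the spectrum of $\Sigma_n$ alone:
$$\int\{\gamma[1-c_n^{-1}-c_n^{-1}zm_{F_n}(z)]-z\}^{-1}\gamma^{-l}\,dH_n(\gamma).$$
It converges to $\Theta^{(-l)}(z)$ because $m_{F_n}(z)\to m_F(z)$ almost surely (from $F_n\to F$), $c_n\to c$, $H_n\to H$ weakly, and the integrand is bounded and continuous in $\gamma$ on the compact support.

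The main obstacle is the almost-sure deterministic-equivalent bound for general bounded $A_n$, together with the case $c>1$. For the bound, I would follow the proof in \cite{ledoit2011eigenvectors} line by line with $\Sigma_n$ replaced by $A_n$. The key ingredients are the quadratic-form concentration lemma, which uses the finite 12th moment to obtain summable fourth-order moment bounds and then Borel–Cantelli, and the bound $\|(S_n-zI)^{-1}\|\le 1/\mathrm{Im} z$. Neither uses positivity of the power, so $\Sigma_n^{-l}$ poses no extra difficulty. Nonvanishing of the denominator $\gamma(1-c^{-1}-c^{-1}zm_F(z))-z$ on $\mathbb{C}^+$ follows from the standard fact that its imaginary part has a definite sign, as in \cite{silverstein1995empirical}.

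For $c>1$, $S_n$ has $p-n$ zero eigenvalues, and $\Phi_n^{(-l)}$ acquires an atom at $0$. The resolvent identity argument does not care about this, since it works for any $z\in\mathbb{C}^+$. The inversion formula, however, then only recovers $\Phi^{(-l)}$ at continuity points, which is exactly the form in which the lemma is stated. So I expect case (b) to go through with the same computation, only checking that $m_F$ here is the transform of the $p$-dimensional spectral distribution including the point mass at zero.
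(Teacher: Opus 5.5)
Your three-step Stieltjes-transform plan is essentially the paper's proof. That proof is a one-line appeal to \cite{ledoit2011eigenvectors}, whose result is already stated for $p^{-1}\operatorname{tr}[(S_n-zI)^{-1}g(\Sigma_n)]$ with a general bounded $g$ and any limiting ratio $c>0$. Since Assumption~\ref{assup_sigma}(c) makes $\gamma^{-l}$ bounded on the population spectrum, taking $g(\gamma)=\gamma^{-l}$ gives your Step 1 directly, with no need to rewrite a proof for $A_n=\Sigma_n$. Their passage from $\Theta$ to $\Phi$ is exactly your Steps 2--3.

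\textbf{A false step.} Your deterministic equivalent is wrong as written. With $c_n=p/n$ it is $\{\Sigma_n[1-c_n-c_nzm_{F_n}(z)]-zI\}^{-1}$, not the version with $c_n^{-1}$. To see this, take $A_n=I$ (the case $l=0$): your formula would force $m_F$ to solve the Mar\v{c}enko--Pastur equation with ratio $1/c$, which contradicts the paper's own Example. The $c^{-1}$ in the displayed $\Theta^{(-l)}$ is a typo in the statement; the paper's later $\phi^{(-1)}(x)=\{1-c-2cx\,\mathrm{Re}[\breve{m}_F(x)]\}/x$ is what the $1-c-czm_F(z)$ form produces. Carried out correctly, your argument therefore proves the lemma with $c$ in place of $c^{-1}$.

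\textbf{A minor point.} Uniformly bounded total mass does not by itself upgrade vague convergence to weak convergence. You do not need that upgrade, however. Every $\Phi_n^{(-l)}$ is supported on $[0,\infty)$, so vague convergence already gives convergence of $\Phi_n^{(-l)}(x)$ at continuity points of $\Phi^{(-l)}$.
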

\begin{proof}
For $x\in\mathbb{R}$,
\begin{align*}
    \Phi^{(-l)}_n(x) = \frac{1}{p} \sum_{i=1}^p \ind_{[\lambda_{n,i}, \infty)}(x) \sum_{j=1}^p \frac{(u_{n,i}^\T v_{n,j})^2}{\gamma_{n,j}^l}.
\end{align*}
By \cite[Lemma 6]{ledoit2011eigenvectors}, we have $\Phi^{(-l)}_n(x)\overset{a.s.}{\to}\Phi^{(-l)}(x)$ as $n\to\infty$. 

\end{proof}
{
Recall the Steiltjes transform $m_G(z)$ defined earlier. When $G$ has derivative $G'$, $m_G(z)$ has an extension to the real line $\breve{m}_G(x)  = \lim_{z \in \mathbb{C}^+ \to x} m_G(z)$ for any $x\in\mathbb{R}$ even though $\breve{m}_G(x)$ could be complex valued. We are now ready to state the almost sure non-random limit of $\mathbf{L}_{m,n}$ and the corresponding minimizer.
}
\begin{theorem}\label{thm_limit}
    Under Assumptions \ref{assup_ratio} (a), \ref{assup_sigma}, 
and \ref{assup_iid}, the loss $\mathbf{L}_{m,n}(\Sigma_n^{-1}, \tilde{\Sigma}_n^{-1}) 
    = \mathrm{tr}[(\Sigma_n^{-1} - \widetilde{\Sigma}_n^{-1})^2  S_n^{m}]/p$
     has the following almost sure limit:
    \begin{align}\label{limit}
    &\mathbf{L}_m =\int x^m d\Phi^{(-2)} (x) 
    - 2 \sum_{k=1}^K \int_{a_k}^{b_k} \frac{x^m}{\delta(x)} d\Phi^{(-1)}(x) 
    + \sum_{k=1}^K \int_{a_k}^{b_k} \frac{x^m}{\delta^2(x)} dF(x) \notag\\
    & = \int x^m d\Phi^{(-2)} (x) 
    - 2\sum_{k=1}^K \int_{a_k}^{b_k} \frac{x^m}{\delta(x)}\phi^{(-1)}(x) dF(x) 
    + \sum_{k=1}^K \int_{a_k}^{b_k} \frac{x^m}{\delta^2(x)} dF(x)
    \end{align}
    as $n\to\infty$,
    where 
    \begin{align*}
        \phi^{(-1)}(x) = 
          \begin{cases}
      0 & \text{if $x\le 0$}\\
      \dfrac{1 - c - 2cx \text{Re}[\breve{m}_F(x)]}{x} & \text{if $x>0$}.
    \end{cases} 
    \end{align*}
\end{theorem}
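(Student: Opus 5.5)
The plan is to pass to the limit separately in each of the three terms of the decomposition of $\mathbf{L}_{m,n}$ written just before the statement:
\[
\int x^m d\Phi_n^{(-2)} - 2\int \frac{x^m}{\delta_n} d\Phi_n^{(-1)} + \int \frac{x^m}{\delta_n^2} dF_n .
\]
Almost surely, all sample eigenvalues eventually lie in $\cup_k[a_k-\eta,b_k+\eta]$. This follows from the no-eigenvalues-outside-the-support result of \cite{bai1998no}, together with the exact separation results, under Assumptions \ref{assup_ratio}(a), \ref{assup_sigma} and \ref{assup_iid}. Hence every integral may be restricted to a fixed compact set bounded away from zero. On that set $x^m$ is bounded and continuous. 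By Assumption \ref{assup_est_eigen}, $1/\delta_n$ and $1/\delta_n^2$ are uniformly bounded by powers of $1/M$.

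\textbf{First and third terms.} The measures $\Phi_n^{(-2)}$ are uniformly bounded in total mass, since $\sum_j (u_{n,i}^\T v_{n,j})^2\gamma_{n,j}^{-2}\le \gamma_{n,1}^{-2}$ and Assumption \ref{assup_sigma}(c) bounds this. By Lemma \ref{lem_converge_dis} they converge vaguely, and with supports in a common compact set this upgrades to weak convergence, giving $\int x^m d\Phi_n^{(-2)}\to \int x^m d\Phi^{(-2)}$.

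For the third term, split
\[
\int x^m\delta_n^{-2}dF_n-\int x^m\delta^{-2}dF
\]
into $\int x^m(\delta_n^{-2}-\delta^{-2})dF_n$ plus $\int x^m\delta^{-2}d(F_n-F)$. The second piece vanishes by weak convergence $F_n\to F$, with $\delta^{-2}$ extended continuously. For the first piece, use uniform convergence on $\cup_k[a_k+\eta,b_k-\eta]$. The mass $F_n$ places on the $2K$ strips of width about $2\eta$ near the edges is asymptotically $O(\eta)$, because $F$ is continuous, while the integrand there is bounded via $M$. Letting $n\to\infty$ and then $\eta\to0$ gives the limit. The middle term is handled identically using $\Phi_n^{(-1)}\to\Phi^{(-1)}$ from Lemma \ref{lem_converge_dis}, whose limit is also continuous. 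This yields the first displayed form of $\mathbf{L}_m$.

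\textbf{Main obstacle.} The hard step is identifying $d\Phi^{(-1)}=\phi^{(-1)}dF$ explicitly. First compute $\Theta^{(-1)}$ from Lemma \ref{lem_converge_dis}. Write $a(z)=1-c^{-1}-c^{-1}zm_F(z)$ and use the algebraic identity
\[
\frac{\gamma^{-1}}{\gamma a - z}=\frac{1}{z}\Big(\frac{a}{\gamma a-z}-\frac1\gamma\Big).
\]
This gives
\[
\Theta^{(-1)}(z)=z^{-1}\Big(a(z)\int\frac{dH(\gamma)}{\gamma a(z)-z}-\int\gamma^{-1}dH\Big).
\]
Next apply the Silverstein equation in the form $m_F(z)=\int\{\gamma a(z)-z\}^{-1}dH(\gamma)$, which is the $l=0$ case consistent with $\Phi^{(0)}=F$. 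This reduces $\Theta^{(-1)}(z)$ to
\[
z^{-1}\big(a(z)m_F(z)-\textstyle\int\gamma^{-1}dH\big).
\]
Then take imaginary parts as $z\to x>0$. The constant term is real, and $x$ is real, so
\[
\mathrm{Im}\,\Theta^{(-1)}(x)=x^{-1}\mathrm{Im}\big[(1-c^{-1})\breve m_F-c^{-1}x\breve m_F^2\big].
\]
Using $\mathrm{Im}[\breve m^2]=2\mathrm{Re}[\breve m]\,\mathrm{Im}[\breve m]$ and $F'(x)=\pi^{-1}\mathrm{Im}[\breve m_F(x)]$, this becomes
\[
\pi^{-1}\mathrm{Im}\,\Theta^{(-1)}=F'(x)\,c^{-1}\big(c-1-2x\mathrm{Re}\,\breve m_F\big)/x .
\]
I expect care is needed with the normalization convention for $m_F$ versus $p$ and $n$ to match the stated sign $1-c-2cx\mathrm{Re}[\breve m_F]$. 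The formula may need rewriting via the companion transform $\underline m=-(1-c)/z+c\,m_F$, and I would check this convention carefully. Finally, justify the interchange of limit and integral in the inversion formula using continuity of $\breve m_F$ on $(0,\infty)$, as in \cite{silverstein1995analysis}. On $x\le0$ there is no mass, since $F$ has support in $(0,\infty)$ when $c<1$. Together these give the second displayed form.
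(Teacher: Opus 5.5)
Your skeleton matches the paper's, but it is more self-contained. The paper simply cites \cite{ledoit2018optimal} (Lemma 11.1) for the weak limit of $\Phi_n^{(-1)}$ and the identity $d\Phi^{(-1)}=\phi^{(-1)}dF$, and says the $\delta_n$-integrals converge ``similarly to Lemma 11.2''. Your $\eta$-strip argument is that lemma written out, and your partial-fraction computation is meant to replace the citation of Lemma 11.1. That computation, however, does not close as written. You end with $F'(x)\,c^{-1}\{c-1-2x\,\mathrm{Re}[\breve{m}_F(x)]\}/x$, which is the target with $c$ replaced by $c^{-1}$. This is not a normalization issue that rewriting via the companion transform could absorb, because the two expressions are different functions. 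For $\Sigma_n=\mathrm{I}$ one must have $\phi^{(-1)}\equiv 1$, since $u_{n,j}^\T\Sigma_n^{-1}u_{n,j}=1$. With $\mathrm{Re}[\breve{m}_F(x)]=(1-c-x)/(2cx)$, the stated formula gives $1$, while yours gives $(c^2-1+x)/(c^2x)$.

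The source of the error is that you took the bracket in Lemma \ref{lem_converge_dis} as printed, $1-c^{-1}-c^{-1}zm_F(z)$, and then used a ``Silverstein equation'' with the same bracket. For $F$ the limiting spectral distribution of the $p\times p$ matrix $S_n$ with $p/n\to c$, the Marcenko--Pastur equation \cite{silverstein1995strong} reads $m_F(z)=\int\{\gamma[1-c-czm_F(z)]-z\}^{-1}dH(\gamma)$. Your own $l=0$ consistency check exposes the problem: with the printed bracket, $\Theta^{(0)}=m_F$ would be the Marcenko--Pastur equation for ratio $1/c$, not $c$. The bracket in $\Theta^{(-l)}$ is the Marcenko--Pastur one, $a(z)=1-c-czm_F(z)$, so the $c^{-1}$ in the printed lemma is a misprint. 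With this $a$, your identity and the Marcenko--Pastur equation give $\Theta^{(-1)}(z)=z^{-1}\{a(z)m_F(z)-\int\gamma^{-1}dH(\gamma)\}$, and
\[
\mathrm{Im}\big[(1-c)\breve{m}_F(x)-cx\,\breve{m}_F(x)^2\big]=\mathrm{Im}[\breve{m}_F(x)]\,\big\{1-c-2cx\,\mathrm{Re}[\breve{m}_F(x)]\big\}.
\]
Hence $\lim_{\eta\to0^+}\pi^{-1}\mathrm{Im}[\Theta^{(-1)}(x+i\eta)]=F'(x)\phi^{(-1)}(x)$ for $x>0$, which is exactly the claimed density, and the rest of your argument goes through.

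A smaller caveat applies equally to the paper. Your reduction to $\cup_k[a_k-\eta,b_k+\eta]$ via \cite{bai1998no} needs the support of the finite-$n$ Marcenko--Pastur law with parameters $(p/n,H_n)$ to lie in $\cup_k(a_k-\eta,b_k+\eta)$ for all large $n$. Assumption \ref{assup_sigma} does not guarantee this if finitely many population eigenvalues are isolated from $\supp(H)$. The resulting outlying sample eigenvalues then fall outside the region where Assumption \ref{assup_est_eigen} controls $\delta_n$.
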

The proof is provided in the Appendix. 
{We also provide simulations for the convergence of the loss $\mathbf{L}_{1,n}$ in Section S.1 
of the supplement.
}

{
Since we focus on rotation invariant estimators $\widetilde{\Sigma}_n = U_n \widetilde{\Delta}_n U_n^\T$ with $\widetilde{\Delta}_n = \text{diag}(\delta_{n}(\lambda_{n,1}),$ $ \ldots, \delta_{n}(\lambda_{n,p}))$, minimizing the loss $\mathbf{L}_{m,n}(\Sigma_n^{-1}, \tilde{\Sigma}_n^{-1})$ with respect to the estimator $\widetilde{\Sigma}_n$ is equivalent to minimizing the loss 
with respect to the shrinkage function $\delta_n$.
Note that the limit of $\delta_n$ is $\delta$
by Assumption \ref{assup_est_eigen}, and that the limit of the loss $\mathbf{L}_{m,n}$ is $\mathbf{L}_m$ by Theorem \ref{thm_limit}.
Thus, minimizing the limiting loss $\mathbf{L}_m$ with respect to the shrinkage rule $\delta$ could lead to the best rotation invariant estimator for $\Sigma$ in terms of the limiting loss, which is provided in the following corollary.
}

\begin{corollary}\label{cor:oracle_estimator}
    The limit $\mathbf{L}_m$ in Theorem \ref{thm_limit} is minimized at 
    $$\delta^*(x) = \frac{1}{\phi^{(-1)}(x)} = \dfrac{x}{1 - c - 2cx \text{Re}[\breve{m}_F(x)]}, \quad \forall x \in \supp (F).$$ 
\end{corollary}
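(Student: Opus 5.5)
The plan is to minimize the limiting loss $\mathbf{L}_m$ of Theorem \ref{thm_limit} pointwise in $x$. The functional has a convenient structure for this. The first term, $\int x^m d\Phi^{(-2)}(x)$, does not depend on $\delta$ and can be discarded. The remaining two terms are integrals against the same measure $dF$ on $\supp(F)=\cup_{k=1}^K[a_k,b_k]$, by the second representation in \eqref{limit}. Writing $t(x)=1/\delta(x)$, the $\delta$-dependent part becomes
\begin{align*}
\sum_{k=1}^K \int_{a_k}^{b_k} x^m\bigl\{t(x)^2 - 2t(x)\phi^{(-1)}(x)\bigr\}\, dF(x)
= \sum_{k=1}^K \int_{a_k}^{b_k} x^m\bigl\{t(x)-\phi^{(-1)}(x)\bigr\}^2 dF(x) - \sum_{k=1}^K \int_{a_k}^{b_k} x^m\{\phi^{(-1)}(x)\}^2 dF(x).
\end{align*}
The last integral is again free of $\delta$.

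Completing the square in this way, and using $x^m\ge 0$ on $\supp(F)\subset(0,\infty)$ (recall $a_1>0$), the integrand $x^m\{t(x)-\phi^{(-1)}(x)\}^2$ is nonnegative. It vanishes exactly when $t(x)=\phi^{(-1)}(x)$. So any admissible $\delta$ gives a value of $\mathbf{L}_m$ at least that obtained at $\delta^*=1/\phi^{(-1)}$. Since $x>0$ on the support, the formula $\phi^{(-1)}(x)=\{1-c-2cx\,\mathrm{Re}[\breve m_F(x)]\}/x$ from Theorem \ref{thm_limit} then yields the stated expression for $\delta^*(x)$. For $m\ge 1$ and $m=0$ alike, $x^m>0$ on the support, and $F$ has a positive density on the interior of each interval. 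The minimizer is therefore unique $dF$-a.e., which I would remark on but not belabor.

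The step I expect to need the most care is legitimacy: $\delta^*$ must be a valid limiting shrinkage function in the sense of Assumption \ref{assup_est_eigen}, namely finite, positive and bounded away from zero, continuously differentiable on $\supp(F)$. This requires $\phi^{(-1)}(x)$ to be positive and bounded on the support. My plan is to identify $\phi^{(-1)}(x)$ as the density $d\Phi^{(-1)}/dF$. Since $\Phi^{(-1)}_n$ places weights $u_{n,j}^\T\Sigma_n^{-1}u_{n,j}$, which lie in $[\gamma_{n,p}^{-1},\gamma_{n,1}^{-1}]$, at the atoms of $F_n$, Assumption \ref{assup_sigma}(c) gives $\Phi^{(-1)}\ll F$ with a Radon–Nikodym derivative bounded between two positive constants. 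Hence $\phi^{(-1)}\in[\gamma_{\max}^{-1},\gamma_{\min}^{-1}]$ $dF$-a.e., and $\delta^*$ is bounded and bounded away from zero. Continuity and differentiability on the interior of each $[a_k,b_k]$ follow from the smoothness of $\breve m_F$ there, as established in \cite{silverstein1995analysis}. If a fully rigorous treatment of the endpoints is needed, I would restrict to $[a_k+\eta,b_k-\eta]$ as Assumption \ref{assup_est_eigen} already does, and let $\eta\to 0$.
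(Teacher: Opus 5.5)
Your proposal is correct and follows the paper's route: the paper also drops the $\delta$-free term $\int x^m\,d\Phi^{(-2)}(x)$ and minimizes $-2x^m\phi^{(-1)}(x)/\delta(x)+x^m/\delta^2(x)$ pointwise in $\delta(x)$, setting the derivative to zero to get $\delta^*=1/\phi^{(-1)}$. Your completion of the square in $t=1/\delta$ shows this stationary point is a global minimizer, and your argument that $\phi^{(-1)}$ is bounded between positive constants shows $\delta^*$ is admissible; the paper leaves both points implicit, citing only \citet[Proposition 3.2]{ledoit2018optimal} for the denominator being bounded away from zero.
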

\noindent Therefore, the oracle 
best rotation invariant  estimator that is based on the true $F$ and minimizes the limiting loss is $\widetilde{\Sigma}_n^* = U_n \Delta^* U_n^\T$, where $\Delta^*$ is a diagonal matrix with diagonal elements $\delta^*$ as defined above. 
The denominator in the definition of $\delta^*$ is bounded away from zero under the same set of assumptions by \citet[Proposition 3.2]{ledoit2018optimal}.
Then it remains to construct a consistent estimator of $\delta^*$.
\begin{theorem}\label{thm:shrinkage_rule}
    Under the assumptions of Theorem \ref{thm_limit}, $\delta_n^*(x)$ defined as
    \begin{align}
        &\delta_{n}^* (x) = \left[ \left(1 - \frac{p}{n} \right) x^{-1} +  \frac{p}{n} x^{-1} 2 g^*_n(x^{-1}) \right]^{-1} 
        \text{ with } \nonumber \\  
        & g^*_n(x) = \frac{1}{p} \sum_{k=1}^p \lambda_{n,k}^{-1} \dfrac{\lambda_{n,k}^{-1} - x}{(\lambda_{n,k}^{-1} - x)^2 + h_n^2 \lambda_{n,k}^{-2}},
    \end{align}
   $h_n\sim C n^{-\alpha}$ for some $C>0$, and $\alpha\in(0,2/5)$ is a consistent estimator of $\delta^*(x)$, i.e. $\delta_n^*(x) \overset{p}{\to} \delta^*(x) $ for any $x \in \supp (F)$.
\end{theorem}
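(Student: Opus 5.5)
The plan is to recognise $g_n^*$ as a Cauchy-kernel (Stieltjes-transform) smoothing of the empirical spectral distribution $F_n$, evaluated at a point slightly off the real axis. Consistency then reduces to convergence of $m_{F_n}$ to the boundary value $\breve{m}_F$. The key algebraic identity comes first. Put $y=x^{-1}$ in the definition of $g_n^*$ and use $\lambda^{-1}-x^{-1}=(x-\lambda)/(\lambda x)$. Multiplying numerator and denominator by $\lambda^2x^2$ gives, for each summand,
\begin{align*}
\lambda^{-1}\frac{\lambda^{-1}-x^{-1}}{(\lambda^{-1}-x^{-1})^2+h_n^2\lambda^{-2}}=\frac{x(x-\lambda)}{(x-\lambda)^2+h_n^2x^2}.
\end{align*}
Hence
\begin{align*}
g_n^*(x^{-1})=-x\int\frac{\lambda-x}{(\lambda-x)^2+(h_nx)^2}\,dF_n(\lambda)=-x\,\mathrm{Re}\big[m_{F_n}(x+ih_nx)\big].
\end{align*}
Consequently
\begin{align*}
\delta_n^*(x)=\frac{x}{1-p/n-2(p/n)\,x\,\mathrm{Re}[m_{F_n}(z_n)]},\qquad z_n=x+ih_nx.
\end{align*}
This has exactly the form of $\delta^*(x)$ in Corollary \ref{cor:oracle_estimator}, with $c$ replaced by $p/n$ and $\breve{m}_F(x)$ replaced by $m_{F_n}(z_n)$.

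It therefore suffices to show $m_{F_n}(z_n)\overset{p}{\to}\breve{m}_F(x)$ for $x\in\supp(F)$. Since $p/n\to c$ by Assumption \ref{assup_ratio}, the continuous mapping theorem will then finish the argument. The denominator limit $1-c-2cx\,\mathrm{Re}[\breve{m}_F(x)]$ is bounded away from zero by \citet[Proposition 3.2]{ledoit2018optimal}, so the map is continuous at the limit. I would split
\begin{align*}
|m_{F_n}(z_n)-\breve{m}_F(x)|\le |m_{F_n}(z_n)-m_{F}(z_n)|+|m_F(z_n)-\breve{m}_F(x)|.
\end{align*}
The second term is deterministic. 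It tends to zero because $m_F$ extends continuously from $\mathbb{C}^+$ to $\mathbb{R}\setminus\{0\}$ (Silverstein and Choi), $x>0$, and $z_n\to x$ as $h_n\to0$.

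For the first (stochastic) term, integration by parts gives
\begin{align*}
m_{F_n}(z)-m_F(z)=\int\frac{F_n(\lambda)-F(\lambda)}{(\lambda-z)^2}\,d\lambda.
\end{align*}
Since $\int|\lambda-z|^{-2}d\lambda=\pi/\mathrm{Im}(z)$, this yields
\begin{align*}
|m_{F_n}(z_n)-m_F(z_n)|\le \frac{\pi}{h_nx}\sup_\lambda|F_n(\lambda)-F(\lambda)|.
\end{align*}
So what is needed is a rate, $\sup_\lambda|F_n-F|=o_p(n^{-\alpha})$ with $\alpha<2/5$. Here I would invoke the known convergence rates for the empirical spectral distribution of sample covariance matrices (Bai; Bai, Miao and Yao). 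Under finite high-order moments (the 12th moment of Assumption \ref{assup_iid}) these give $\sup_\lambda|F_n-F^{p/n,H_n}|=O_p(n^{-2/5})$, where $F^{p/n,H_n}$ is the finite-$n$ deterministic equivalent. Because $h_n\sim Cn^{-\alpha}$ with $\alpha<2/5$, the product $h_n^{-1}\cdot O_p(n^{-2/5})$ is $o_p(1)$.

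This rate step is the main obstacle. One must also control $F^{p/n,H_n}$ versus $F$. My first attempt would be to handle it at the level of Stieltjes transforms rather than distribution functions. The idea is that $m_{F^{p/n,H_n}}(z_n)$ and $m_F(z_n)$ solve the Marčenko–Pastur equation with inputs $(p/n,H_n)$ and $(c,H)$. At a point $z_n$ whose imaginary part is of order $n^{-\alpha}$, stability of that equation near the bulk (where the density is positive) should make their difference small whenever $p/n\to c$ and $H_n\to H$. If a uniform stability estimate is not directly available, I would fall back on carrying out the whole comparison with $F$ replaced by $F^{p/n,H_n}$, together with the continuity of $(c,H)\mapsto\breve{m}_F(x)$. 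Handling $x$ at the edges $a_k,b_k$ may require restricting to interior points, consistent with Assumption \ref{assup_est_eigen}.
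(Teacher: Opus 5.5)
Your reduction is correct, and it takes a genuinely different route from the paper's. The paper gives no argument of its own: it invokes the proof of Theorem~3.1 of \cite{ledoit2022quadratic}. As its proof of Theorem~\ref{thm:shrinkage_rule_p>n} makes explicit, that viewpoint works in the inverse-eigenvalue domain:
\begin{itemize}
\item $g_n^*$ is read as a Cauchy-kernel estimate, with bandwidth $h_n\lambda_{n,k}^{-1}$, of $\pi\mathcal{H}_{\psi}$, the Hilbert transform of the density of $\Psi(x)=\int_{-\infty}^x t\,d\Phi(t)$, where $\Phi$ is the limiting law of the inverse sample eigenvalues;
\item $\delta^*$ is rewritten via $\mathrm{Re}[\breve m_{\Psi}(1/x)]=-x\,\mathrm{Re}[\breve m_F(x)]$;
\item consistency of the kernel Hilbert-transform estimator is imported.
\end{itemize}
Your identity $g_n^*(x^{-1})=-x\,\mathrm{Re}[m_{F_n}(x+ih_nx)]$ shows that this particular variable bandwidth collapses the kernel sum exactly to the ordinary Stieltjes transform of $F_n$ at the single point $z_n=x(1+ih_n)$. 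So you need no kernel-bias analysis and no detour through $\Psi$. The smoothing bias is just $m_F(z_n)-\breve m_F(x)$, which Silverstein--Choi continuity removes. The threshold $\alpha<2/5$ then becomes visibly the requirement $h_n^{-1}\sup_\lambda|F_n-F^{p/n,H_n}|\to0$ under an $n^{-2/5}$ rate. The citation buys the paper brevity; your route buys a self-contained argument whose random-matrix inputs are on the table.

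Two of those inputs are not yet actually supplied.

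\emph{The rate.} The $O_p(n^{-2/5})$ Kolmogorov rates you cite (Bai; Bai, Miao and Yao) are proved for $\Sigma_n=\mathrm{I}$, i.e.\ for the Mar\v{c}enko--Pastur law. For general $\Sigma_n$ under Assumption~\ref{assup_sigma} you need one of two things:
\begin{itemize}
\item a result that genuinely covers general $\Sigma_n$; or
\item a direct bound on $m_{F_n}(z_n)-m_{F^{p/n,H_n}}(z_n)$ at $\mathrm{Im}\,z_n\asymp n^{-\alpha}$.
\end{itemize}
In the second option the fluctuation part is routine: rank-one perturbation bounds plus martingale differences give $\mathrm{Var}\,m_{F_n}(z_n)=O(1/(nh_n^2))$. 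The bias $\mathbb{E}\,m_{F_n}(z_n)-m_{F^{p/n,H_n}}(z_n)$, however, needs a quantitative Bai--Silverstein-type estimate at distance $h_n$ from the axis. (The threshold $2/5$ in the statement suggests the cited Ledoit--Wolf proof presumably rests on an input of this kind too.)

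\emph{The centring.} You are right that centring at $F$ fails, because $H_n\to H$ and $p/n\to c$ come with no rate. At interior points of $\supp(F)$ your fallback can be completed as follows.
\begin{itemize}
\item Write $\underline m=-(1-c)/z+c\,m_F$ for the companion transform, and $\underline m_0$ for its boundary value at $x$; it lies in $\mathbb{C}^+$.
\item Let $\zeta_n(\underline m)=-1/\underline m+(p/n)\int t\,dH_n(t)/(1+t\underline m)$, and let $\zeta$ be the analogous map built from $(c,H)$. Then $\zeta(\underline m_0)=x$.
\item The maps $\zeta_n$ converge uniformly to $\zeta$ on a small closed disk in $\mathbb{C}^+$ around $\underline m_0$; Assumption~\ref{assup_sigma}(c) keeps $1+t\underline m$ away from zero there.
\item Hurwitz's theorem then gives a root of $\zeta_n-z_n$ in that disk.
\item Uniqueness of the $\mathbb{C}^+$-solution of the Mar\v{c}enko--Pastur equation identifies this root with the companion transform of $F^{p/n,H_n}$ at $z_n$.
\end{itemize}
At the edges $a_k,b_k$ this breaks down, since $\underline m_0$ is then real and a critical point of $\zeta$. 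There you need a separate equicontinuity argument, or you must restrict the claim to interior points.
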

\noindent We provide the proof in the Appendix. It is instructive to note that the shrinkage rule $\delta_{n}^* (x)$ is exactly the same as the smoothed Stein shrinker in \cite{ledoit2022quadratic} where the authors show that this same rule is the optimum when one considers estimating $\Sigma_n$ under the Stein loss, i.e. $\mathbf{L}_n(\Sigma_n, \widetilde{\Sigma}_n) = p^{-1} \mathrm{tr}(\Sigma_n^{-1} \widetilde{\Sigma}_n) - p^{-1} \log |\Sigma_n^{-1}\widetilde{\Sigma}_n|$. 
{\begin{example}
    Consider the case when the data matrix $W_n \overset{iid}{\sim} \Gauss(0,1)$ and $\Sigma_n = \mathrm{I}$. For this case, the distribution of the population eigenvalues is a point mass at $1$, i.e. $H(x) = \mathbbm{1}(x \geq 1)$, and the limiting shrinkage function is the famous Marcenko-Pastur law: 
    $$dF(x) = \dfrac{1}{2\pi cx} \sqrt{(c_+ - x) (x - c_-)}\cdot \ind(c_- \leq x \leq c_+),$$
where $c_- = ( 1 - \sqrt{c})^2$ and $c_+ = (1 + \sqrt{c})^2$. 
It is well known that its Stieltjes transform $m_F(z) = (2cz)^{-1} (1 - c - z + \sqrt{(c_+ - z)(c_- - z)})$ for $z \in \mathbb{C^+}$, and hence 
$$\breve{m}_F(x) = \lim_{z \in \mathbb{C}^+ \to x} m_F(z) = \dfrac{1 - c - x + \sqrt{(c_+ - x)(c_- - x)}}{2cx}.$$
Thus, $\text{Re}[\breve{m}_F(x)] = (1 - c - x)/2cx$ for $x \in \text{Supp}(F)$, which implies that $\delta^*(x) = 1$. 
\end{example}
\begin{figure}
    \centering
    \includegraphics[width=0.9\textwidth, height = 6.5cm]{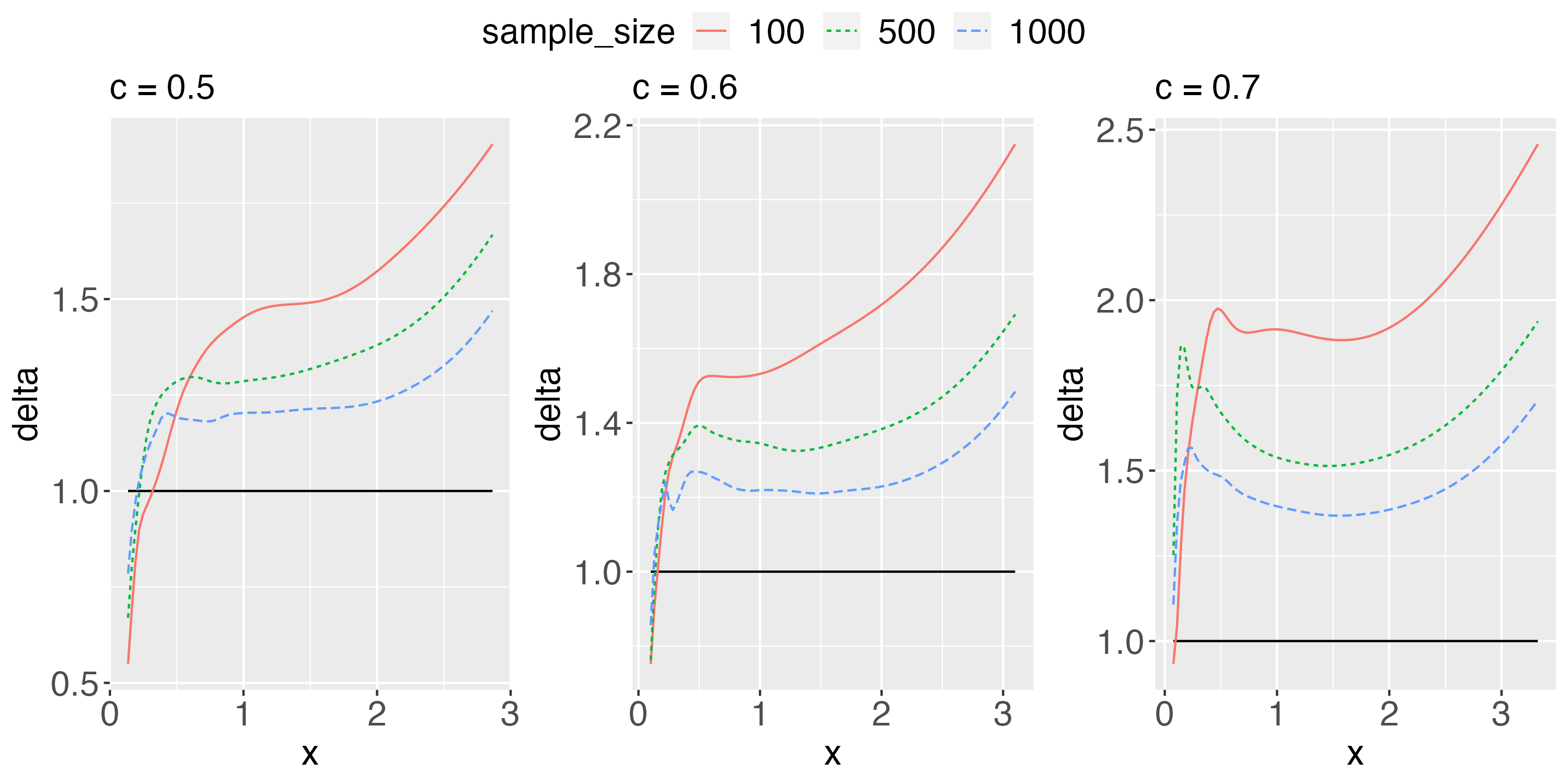}
    \caption{The population ($\delta^*(x)$) and sample shrinkage rule ($\delta(x)$) when the sample covariance matrix $S = n^{-1}XX^\T$ has entries $X_{ij} \overset{iid}{\sim} \Gauss(0,1)$. Three concentration ratios are considered $c = 0.5, 0.6, 0.7$. The black line corresponds to $\delta^*(x)$. The red, blue and chocolate lines correspond to $n = 100, 500, 1000$, respectively. }
    \label{fig:shrinkage_rules_main}
\end{figure}
In Figure \ref{fig:shrinkage_rules_main}, we plot $\delta^*(\cdot)$ and its estimator $\delta(x)$ (a realization based on $X_n$) for three choices of the concentration ratio $c = 0.5, 0.6, 0.7$ and $n = 100, 500, 1000$. {Here, the bandwidth parameter $h$ is chosen to obtain smooth decision rules (large $h$). This parameter plays a critical role in that smaller values lead to fluctuating decision rules. Additionally, for each panel, $h$ decreases with $n$ to ensure the condition of Theorem 6 is satisfied.} It can be seen from Figure \ref{fig:shrinkage_rules_main}, that the estimator $\delta^*$ converges to $\delta$ as $n$ grows. 
}

The shrinkage rule $\delta_n^*(x)$ has a bandwidth parameter $h_n$ which approaches 0 as $n \to \infty$ according to Theorem \ref{thm:shrinkage_rule}. \cite{ledoit2022quadratic} studied the dependence of the tuning parameter $h_n$ on the concentration ratio $c = p/n$ through  simulation experiments and suggested a default choice, which is $h_n = (c)^{0.7} (1/p)^{0.35}$. Although this choice works well in simulations, it is not data-driven.


In the following subsection, we provide a data-dependent choice of $h_n$ based on the estimation of the expected loss. While the results in this section till this point do not require any assumption about the distribution of $Z_n$ other than the existence of its moments, in the following subsection we assume $Z_n \sim \Gauss(0, \Sigma_n)$. 

\subsection{ (Nearly) 
Unbiased estimate of the risk}\label{sec:SURE}
{
In this subsection, we develop a nearly unbiased estimate of the expected loss when $m = 1$, and find the tuning parameter $h_n$ through minimizing the estimated expected loss.
}
Suppose Assumptions \ref{assup_ratio} (a), \ref{assup_sigma}, and \ref{assup_est_eigen} hold and in Assumption \ref{assup_iid} elements of $W_n$ are i.i.d $\mathrm{N}(0,1)$. Consider the estimator $\widetilde{\Sigma}_n = U_n \widetilde{\Delta}_n U_n^\T$ where $\widetilde{\Delta}_n$ is a diagonal matrix with diagonal elements $\delta_n^*(x)$ as defined in Theorem \ref{thm_limit}. The shrinkage rule $\delta_n^*(x)$ clearly depends on $h_n$. Hence, we write $\delta_n^*(x) = \delta_n^*(x;h_n)$ and the resulting estimator $\widetilde{\Sigma}_n = \widetilde{\Sigma}_n(h_n)$. 

Our goal here is to propose a nearly unbiased estimate of the risk $\mathbb{E}_\Sigma[\mathbf{L}_{m,n} (\Sigma_n^{-1}, \widetilde{\Sigma}_n^{-1}(h_n))]$ for any $h_n$ and finite $n$ when $m = 1$. Here, the expectation is computed with respect to the distribution of $Z_n$ which depends on the unknown $\Sigma_n$. Since here we are concerned with finite $n$, we suppress the dependence on $n$ and write $\Sigma_n = \Sigma$, $U_n=U$,
$\widetilde{\Delta}_n=\widetilde{\Delta}$,
$\delta_{n}^* (x)=\delta^* (x,h)$,
$\lambda_{n,j}=\lambda_j$,
$Z_n=Z$, $h_n=h$, $\widetilde{\Sigma}_n = \widetilde{\Sigma}(h)$ and $\mathbf{L}_{1,n} = \mathbf{L}(h) = \mathrm{tr}[\{\Sigma^{-1} - \widetilde{\Sigma}(h)^{-1}\}^2 S ]$ where $S = n^{-1}Z^\T Z$. The corresponding risk is $\mathbf{R}(\Sigma; h) = \mathbb{E}_{\Sigma} \{L(h)\} = n^{-1}\mathbb{E}_{\Sigma}\mathrm{tr}[\{\Sigma^{-1} - \widetilde{\Sigma}(h)^{-1}\}^2 (Z^\T Z) ]$. When $Z \sim \Gauss(0, \Sigma)$, we have $Z^\T Z \sim W_p(n, \Sigma)$ where $W_p(n, \Sigma)$ is the Wishart distribution of dimension $p$, degrees of freedom $n$ and parameter $\Sigma$. 

Let $\underline{S} = Z^\T Z$. Consider the spectral decomposition of $S = U \Lambda U^\T$ which implies that the spectral decomposition of $\underline{S} = n S = U (n \Lambda) U^\T = U \Lambda^* U^\T$ with $\lambda_j^* = n\lambda_j$. Since $\widetilde{\Sigma}(h) = U \widetilde{\Delta}(h) U^\T $ with $\widetilde{\Delta}(h)=\text{diag}(\delta^*(\lambda_{1}, h), \ldots, \delta^*(\lambda_{p}, h))$, 
we have
\begin{align}\label{equ_risk}
    &\mathbf{R}(\Sigma; h) = n^{-1}\mathbb{E}_{\Sigma}\left[ \mathrm{tr}\left( \widetilde{\Sigma}(h)^{-2} \underline{S} \right) - 2 \mathrm{tr}\left( \Sigma^{-1}  \underline{S} \widetilde{\Sigma}(h) ^{-1} \right) + \mathrm{tr}\left( \Sigma ^{-2}  \underline{S}\right)\right] \notag\\
    & = n^{-1} \sum_{j=1}^p \mathbb{E}_{\Sigma}\left[ \dfrac{\lambda_j^*}{\delta^*(\lambda_j, h)^2} \right] - 2 n^{-1} \mathbb{E}_{\Sigma}\left[\mathrm{tr}\left(\Sigma^{-1}  \underline{S} \widetilde{\Sigma}(h) ^{-1} \right)\right] \nonumber \\  + & n^{-1}\mathbb{E}_{\Sigma} \left[\mathrm{tr}\left( \Sigma ^{-2}  \underline{S}\right)\right] .
\end{align}
In the above display, the first term inside the expectation is computable for any observed $S$ and given $h$, and it is trivially an unbiased estimate of its own expectation. We next focus on  the third term $\mathbb{E}_\Sigma \left[ \mathrm{tr}(\Sigma^{-2} \underline{S})\right] = \mathbb{E}_\Sigma \left[ \mathrm{tr}(\Sigma^{-1} \underline{S} \Sigma^{-1})\right] = \mathbb{E}_\Sigma \left[ \mathrm{tr}(\underline{S}^*)\right]$ where $\underline{S}^*=\Sigma^{-1} \underline{S} \Sigma^{-1} \sim \mathrm{W}_p(n, \Sigma^{-1})$. Let $\underline{\Sigma}=\Sigma^{-1}$. Then from properties of the Wishart distribution it is immediate that $\mathbb{E}_\Sigma \left[ \mathrm{tr}(\underline{S}^*)\right] = n \sum_{j=1}^p \underline{\Sigma}_{jj}$, where $\underline{\Sigma}_{jj}$ is the $j$-th diagonal element in $\underline{\Sigma}$. 
Since in the current setup $Z_i \sim \Gauss(0, \Sigma), \, i=1, \ldots, n$, we have that $Z_{ij}\mid Z_{i,-j} \sim \Gauss(\alpha_j^\T Z_{i,-j}, \underline{\Sigma}_{jj}^{-1})$ where $Z_{i, -j}$ is the vector obtained by excluding the $j$-th element from $Z_i$. Thus, an unbiased estimator of $\underline{\Sigma}_{jj}^{-1}$ is available from the error variance estimate obtained by linearly regressing the $j$-th column of $Z$ on the rest of its columns. Inverting this estimator gives us a reasonably good estimator of $\underline{\Sigma}_{jj}$, denoted by $\widehat{\underline{\Sigma}}_{jj}$, although strict unbiasedness can be maintained by employing the so-called sum-estimator \citep{glynn2014exact}. {However, this sum-estimator might result in increased computational burden, and in our numerical experiments we observed the biased estimator to perform better than the unbiased estimator} .

The key issue of estimating the risk is the second term in Equation (\ref{equ_risk}), which is dealt with in the following theorem. {The result is an application of Haff's identity \citep{haff1979estimation} suited to our context.} The proof is given in the Appendix.
\begin{theorem}\label{thm:SURE}
    Under Assumptions \ref{assup_ratio} (a), \ref{assup_sigma}, \ref{assup_iid}, and when elements of $W_n$ are distributed $\Gauss(0,1)$ independently, we have
    $$\mathbb{E}_{\Sigma} \left[\mathrm{tr}\left(\Sigma^{-1}  \underline{S} \widetilde{\Sigma}(h) ^{-1} \right)\right] =  \mathbb{E}_{\Sigma} \left[ \mathrm{tr}\left\lbrace (n-p - 1) \widetilde{\Sigma}(h) ^{-1} + 2 D_{\underline{S}}([\widetilde{\Sigma}(h)^{-1}]^{\T} \underline{S})\right\rbrace\right], $$
    where $D_{\underline{S}}$ is a differential operator defined as $D_{\underline{S}} = \left\lbrace \frac{1}{2}(1 + d_{ij}) \frac{\partial}{\partial \underline{S }_{ij}}\right\rbrace$ for $1\leq i,j\leq p$ with $d_{ij} = 1$ if $i =j$ and $0$ otherwise.
\end{theorem}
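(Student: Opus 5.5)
The plan is to apply Haff's identity for the Wishart distribution. Recall it: if $\underline{S} \sim W_p(n,\Sigma)$ and $G(\underline{S})$ is a $p\times p$ matrix-valued function that is suitably differentiable and integrable, then
\begin{align*}
\mathbb{E}_\Sigma\left[\mathrm{tr}\left(\Sigma^{-1} G(\underline{S})\right)\right] = \mathbb{E}_\Sigma\left[(n-p-1)\,\mathrm{tr}\left(\underline{S}^{-1}G(\underline{S})\right) + 2\,\mathrm{tr}\left(D_{\underline{S}} G(\underline{S})\right)\right],
\end{align*}
where $D_{\underline{S}}$ is the operator in the statement. Since $n>p$ under Assumption \ref{assup_ratio}(a), $\underline{S}$ is almost surely invertible. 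We take $G(\underline{S}) = \underline{S}\,\widetilde{\Sigma}(h)^{-1}$. Then $\mathrm{tr}(\Sigma^{-1}\underline{S}\widetilde{\Sigma}(h)^{-1})$ is exactly the left side, and $\underline{S}^{-1}G = \widetilde{\Sigma}(h)^{-1}$ produces the first term on the right.

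For the second term we use that $\widetilde{\Sigma}(h)^{-1} = U\widetilde{\Delta}(h)^{-1}U^\T$ and $\underline{S}$ share eigenvectors, so they commute. Together with symmetry, this gives $G = \underline{S}\widetilde{\Sigma}(h)^{-1} = [\widetilde{\Sigma}(h)^{-1}]^\T\underline{S}$, which matches the form in the statement.

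The steps are as follows.
\begin{enumerate}
\item Derive or cite Haff's identity. The derivation writes the Wishart density $f(\underline{S}) \propto |\underline{S}|^{(n-p-1)/2}\exp\{-\mathrm{tr}(\Sigma^{-1}\underline{S})/2\}$ and notes that $D_{\underline{S}}\exp\{-\mathrm{tr}(\Sigma^{-1}\underline{S})/2\} = -\tfrac12 \Sigma^{-1}\exp\{\cdot\}$; the factor $\tfrac12(1+d_{ij})$ accounts for the symmetric parametrization. It also uses $D_{\underline{S}}|\underline{S}|^{a} = a|\underline{S}|^{a}\underline{S}^{-1}$.
\item Integrate $\mathrm{tr}\{D_{\underline{S}}(G f)\}$ over the cone of positive definite matrices by parts (Stokes' theorem), which yields the identity.
\item Specialize to our $G$.
\end{enumerate}

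The main obstacle is verifying the regularity conditions. The entries of $G$ must be differentiable in $\underline{S}$, and the boundary terms in the integration by parts must vanish, both at $\partial$ of the positive definite cone and at infinity.

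Differentiability is the delicate part, because $G$ is a spectral function of $\underline{S}$. Writing $G = U\,\mathrm{diag}(\lambda_j^*/\delta^*(\lambda_j,h))\,U^\T$, it is a matrix function $\underline{S}\mapsto \psi(\underline{S})$ generated by the scalar map $\lambda^* \mapsto \lambda^*/\delta^*(\lambda^*/n,h)$. Here $\delta^*(\cdot,h)$ itself depends on all eigenvalues through $g_n^*$, which is smooth in the $\lambda_k$ for $h>0$. Eigenvalues are almost surely distinct under the Gaussian law, so the spectral map is smooth almost everywhere and the standard derivative formulas for spectral functions apply.

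For the boundary terms, we use that $f$ vanishes at the boundary of the cone when $n-p-1>0$ and decays exponentially at infinity. We also use that $1/\delta^*$ is of the form $(1-p/n)x^{-1} + (p/n)x^{-1}2g_n^*(x^{-1})$. For $h>0$ this is bounded by a polynomial in the $\lambda_j^{-1}$, since $|g_n^*| \le \frac{1}{2h}$. Hence the relevant moments $\mathbb{E}\,\mathrm{tr}(\underline{S}^{-k})$ are finite for the needed $k$, at least when $n-p$ is large enough relative to $k$, which holds for large $n$ under Assumption \ref{assup_ratio}(a). With this, the dominated convergence and Stokes arguments go through and the claimed equality follows.
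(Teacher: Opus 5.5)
Your proposal is correct and matches the paper's proof. The paper applies the Stein--Haff identity in the form $\mathbb{E}_{\Sigma}[\mathrm{tr}\{\Sigma^{-1}\underline{S}G(\underline{S})\}] = \mathbb{E}_{\Sigma}[\mathrm{tr}\{(n-p-1)G(\underline{S}) + 2D_{\underline{S}}(G(\underline{S})^{\T}\underline{S})\}]$ with $G(\underline{S})=\widetilde{\Sigma}(h)^{-1}$, which is your standard Haff identity applied to $\underline{S}\widetilde{\Sigma}(h)^{-1}$; the two forms agree since $\mathrm{tr}\,D_{\underline{S}}(A)=\mathrm{tr}\,D_{\underline{S}}(A^{\T})$, or by your commutation remark, and your checks of differentiability and of the boundary and moment conditions go beyond the paper, which cites the identity without verifying them.
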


Therefore, we construct the following nearly unbiased estimate of the risk of the estimator $\widetilde{\Sigma}(h)$: 
\begin{align*}
    \widehat{R}(h) =& n^{-1} \sum_{j=1}^p \left[ \dfrac{\lambda_j^*}{\delta^*(\lambda_j,h)^2} \right] - 2n^{-1} \left[ \mathrm{tr}\left\lbrace (n-p - 1) \widetilde{\Sigma}(h) ^{-1} \right\rbrace\right] \\ 
     & - 4n^{-1} \left[\mathrm{tr} \left\lbrace D_{\underline{S}}([\widetilde{\Sigma}(h)^{-1}]^{\T} \underline{S})\right\rbrace\right] + \sum_{j=1}^p \widehat{\underline{\Sigma}}_{jj}.
\end{align*}
The proposed value $\hat{h}$ for the tuning parameter $h$ is the minimizer of  $\widehat{R}(h)$.
\subsection{Shrinkage in the $p>n$ case}\label{sec:p>n} 
In this subsection, we extend our results for the relative savings loss to the $p>n$ case, where the sample covariance matrix is singular. The overall strategy remains the same. That is, we find an almost sure limit of the loss, find its minimizer, and then construct a consistent estimator of the minimizer. In this case, the limiting spectral distribution of the sample eigenvalues $F$ is a mixture of a point mass at 0 and a continuous component, i.e. $F(x) = \{(c-1)/c\} \mathbb{I}_{[0,\infty]}(x) + (1/c)\underline{F}(x)$ where $\underline{F}(x)$ is the continuous component whose support is bounded away from 0. The Steiltjes transform of $F$ and $\underline{F}$ are related as $\breve{m}_{\underline{F}}(x) = c\breve{m}_{F}(x) + (c-1)/x$ for all $x \in \mathbb{R}$.
From \cite[Lemma 14.1]{ledoit2018optimal} we have $\Phi_n^{(-1)}(x)$ converges almost surely to $\Phi^{(-1)(x)}$ for all $x \in \mathbb{R} - \{0\}$. And, $\Phi^{(-1)}$ is continuously differentiable on $\mathbb{R} - \{0\}$ and can be expressed for all $x \in \mathbb{R}$ as
$\Phi^{-(1)}(x) = \int_{-\infty}^x \phi^{(-1)}(u)dF(u)$ where
\begin{align*}
        \phi^{(-1)}(x) = 
          \begin{cases}
      0 & \text{if $x < 0$}\\
      \frac{c}{c - 1} \breve{m}_H(0) - \breve{m}_{\underline{F}}(0) & \text{if $x = 0$} \\
      \dfrac{1 - c - 2cx \text{Re}[\breve{m}_F(x)]}{x} & \text{if $x >0$}.
    \end{cases} 
    \end{align*}
We then have the following limit of the relative savings loss, noting that in this case $dF(x) = (1/c)d\underline{F}(x)$ for $x>0$.
\begin{theorem}\label{thm:p>n_limit}
  Under Assumptions \ref{assup_ratio} (b), \ref{assup_sigma}, \ref{assup_iid} and in the case when $c>1$, $\mathbf{L}_{m, n} = \mathrm{tr}[(\Sigma_n^{-1} - \widetilde{\Sigma}_n^{-1})^2  S_n^{m}]/p$ has the following almost sure limit:
  \begin{align*}
  \mathbf{L}_m = \begin{cases}
  &\int_{-\infty}^{\infty} x^m d\Phi^{(-2)} (x) - 2\sum_{k=1}^K \int_{a_k}^{b_k} \frac{x^m}{\delta(x)} \phi^{(-1)}(x) dF(x) \\
   & + \sum_{k=1}^K \int_{a_k}^{b_k} \frac{x^m}{\delta^2(x)} dF(x), \,\, m\geq 1\\
  &  \int_{-\infty}^\infty d\Phi^{(-2)}(x) - 2 \sum_{k=1}^K \int_{a_k}^{b_k} \frac{1}{\delta(x)} \phi^{(-1)}(x) d F(x)  \\
  &+ \sum_{k=1}^K \int_{a_k}^{b_k} \frac{1}{\delta^2(x)}dF(x) + \frac{c-1}{c} \left[ \frac{1}{\delta^2(0)} - 2 \frac{\frac{c}{c-1}\breve{m}_H(0) - \breve{m}_{\underline{F}(0)}}{\delta(0)}\right], \,\, m = 0.
  \end{cases}
  \end{align*} 
\end{theorem}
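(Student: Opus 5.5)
The plan is to mirror the proof of Theorem \ref{thm_limit}, starting from the same three-term decomposition
\begin{align*}
\mathbf{L}_{m,n} = \int x^m d\Phi_n^{(-2)}(x) - 2\int \frac{x^m}{\delta_n(x)} d\Phi_n^{(-1)}(x) + \int \frac{x^m}{\delta_n^2(x)} dF_n(x),
\end{align*}
and handling the point mass at zero separately. When $c>1$, exactly $p-n$ sample eigenvalues are zero almost surely for large $n$ (the rank of $S_n$ is $n$). The nonzero eigenvalues, by \cite[Theorem 1.1]{bai1998no} applied to the companion matrix $n^{-1}Z_nZ_n^\T$, eventually lie in $\cup_k[a_k-\eta,b_k+\eta]$ almost surely. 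I will therefore split each sum over $j$ into the indices with $\lambda_{n,j}=0$ and those with $\lambda_{n,j}>0$.

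\textbf{Case $m\ge 1$.}
\begin{itemize}[label={}]
\item \emph{Zero eigenvalues.} Every summand carries a factor $\lambda_{n,j}^m=0$, so these indices contribute nothing to any of the three terms. This is true exactly, not just in the limit.
\item \emph{First term.} $\Phi_n^{(-2)}$ converges to $\Phi^{(-2)}$ (Lemma \ref{lem_converge_dis}, or its $c>1$ analogue \cite[Lemma 14.1]{ledoit2018optimal}). The supports are uniformly bounded by the no-eigenvalues-outside-the-support result. The integrand $x^m$ is bounded and continuous on a compact set containing all supports, so weak convergence gives the limit $\int x^m d\Phi^{(-2)}$.
\item \emph{Third term.} I use Assumption \ref{assup_est_eigen}: uniform convergence of $\delta_n$ to $\delta$ on $[a_k+\eta,b_k-\eta]$, together with $\delta_n$ bounded away from zero on $[a_k-\eta,b_k+\eta]$. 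Split $\int x^m\delta_n^{-2}dF_n$ into the interior pieces and the edge pieces near each $a_k$ and $b_k$. On the interior, uniform convergence plus weak convergence $F_n\to F$ gives convergence to $\int_{a_k+\eta}^{b_k-\eta} x^m\delta^{-2}dF$. On the edges, the integrand is bounded by $(b_K+\eta)^m/M^2$. Since $F$ is continuous, the $F_n$-mass of each edge interval tends to the $F$-mass, which is $O(\eta)$. Letting $\eta\to0$ gives the result.
\item \emph{Middle term.} Same argument with $\Phi_n^{(-1)}$ in place of $F_n$. Then replace $d\Phi^{(-1)}$ by $\phi^{(-1)}dF$ on $(0,\infty)$, using the density representation $\Phi^{(-1)}(x)=\int_{-\infty}^x\phi^{(-1)}dF$ quoted from \cite[Lemma 14.1]{ledoit2018optimal}.
\end{itemize}
This yields the first display of the theorem.

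\textbf{Case $m=0$.} The zero-eigenvalue indices now contribute, and this is the only genuinely new ingredient.
\begin{itemize}[label={}]
\item \emph{First term.} It equals $p^{-1}\mathrm{tr}(\Sigma_n^{-2})=\int d\Phi_n^{(-2)}$, which converges to $\int d\Phi^{(-2)}$. This part is easy.
\item \emph{Third term, zero block.} It equals $(p-n)p^{-1}\,\delta_n(0)^{-2}\to\frac{c-1}{c}\,\delta(0)^{-2}$. This needs Assumption \ref{assup_est_eigen} extended to the point $0$, i.e.\ $\delta_n(0)\to\delta(0)$ with $\delta(0)$ bounded away from zero. I would state that implicitly, regarding $0$ as an atom of $\supp(F)$.
\item \emph{Middle term, zero block.} It equals $-2\delta_n(0)^{-1}\,p^{-1}\sum_{j:\lambda_{n,j}=0}u_{n,j}^\T\Sigma_n^{-1}u_{n,j}$. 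The quantity $p^{-1}\sum_{j:\lambda_{n,j}=0}u_{n,j}^\T\Sigma_n^{-1}u_{n,j}$ is the jump of $\Phi_n^{(-1)}$ at $0$. By \cite[Lemma 14.1]{ledoit2018optimal}, convergence at all $x\neq0$ from both sides identifies this jump's limit as $\Phi^{(-1)}(0)-\Phi^{(-1)}(0^-)=\frac{c-1}{c}\,\phi^{(-1)}(0)=\frac{c-1}{c}\big[\frac{c}{c-1}\breve m_H(0)-\breve m_{\underline F}(0)\big]$.
\end{itemize}
Combining these gives the bracketed extra term in the second display.

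\textbf{Main obstacle.} I expect the hardest step to be this last identification of the jump limit. Pointwise convergence at continuity points does not by itself control an atom. To get it, I would write the zero-block sum as $p^{-1}\mathrm{tr}(\Sigma_n^{-1})$ minus the sum over the nonzero eigenvalues, i.e.\ $\int d\Phi_n^{(-1)}-\int_{(0,\infty)}d\Phi_n^{(-1)}$. The first piece converges to $\int\gamma^{-1}dH$. The second is handled by weak convergence on the support, which is bounded away from zero, so no mass escapes toward $0$. The difference then equals the atom of $\Phi^{(-1)}$ at zero given by the cited lemma.
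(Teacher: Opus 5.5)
Your proposal is correct and follows the same route as the paper. The paper also splits $\mathbf{L}_{m,n}$ into the contribution of the nonzero eigenvalues, handled as in Theorem 5 via Lemma 14.1 of Ledoit--Wolf and $dF=(1/c)\,d\underline{F}$ on $(0,\infty)$, and the contribution of the $p-n$ null eigenvalues, which is killed by the factor $0^m$ when $m\ge 1$ and yields the $\frac{c-1}{c}[\cdots]$ term when $m=0$; your version is simply more careful than the paper's sketch, notably in identifying the limit of the jump of $\Phi_n^{(-1)}$ at zero and in making explicit that $\delta_n(0)\to\delta(0)$, with $\delta(0)$ bounded away from zero, must be assumed.
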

The above limit immediately yields the following oracle decision rule, specifically for the case $m\geq1$.
\begin{corollary}\label{cor:oracle_estimator_p>n}
    For $m\geq1$, the limit $\mathbf{L}_{m}$ in Theorem \ref{thm:p>n_limit} is minimized at $\delta^*(x) = \frac{1}{\phi^{(-1)}(x)}$ for any $x>0$. 
\end{corollary}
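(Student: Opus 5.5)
The plan is a pointwise minimization of the limiting loss in Theorem \ref{thm:p>n_limit}, in the case $m\geq 1$. The first term, $\int x^m d\Phi^{(-2)}(x)$, does not involve $\delta$, so it plays no role. The remaining part of $\mathbf{L}_m$ is
\begin{align*}
\sum_{k=1}^K \int_{a_k}^{b_k} x^m \left[ \frac{1}{\delta^2(x)} - \frac{2\phi^{(-1)}(x)}{\delta(x)} \right] dF(x).
\end{align*}
The point to record first is that for $m\geq 1$ no atom at zero enters. The mass $(c-1)/c$ that $F$ places at $0$ is multiplied by $x^m=0$, which is exactly why the extra bracketed term appearing in the $m=0$ case is absent. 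So only $x$ in the continuous part of the support, where $x>0$, matters.

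Next I complete the square. Setting $t(x)=1/\delta(x)$, the integrand equals
\begin{align*}
x^m\left[(t(x)-\phi^{(-1)}(x))^2 - (\phi^{(-1)}(x))^2\right].
\end{align*}
On $\cup_k[a_k,b_k]$ we have $x^m>0$, and $dF=(1/c)\,d\underline{F}$ is a nonnegative measure. Hence the integral is bounded below by $-\sum_k\int_{a_k}^{b_k} x^m(\phi^{(-1)})^2 dF$, with equality if and only if $t(x)=\phi^{(-1)}(x)$ for $dF$-almost every $x>0$ in the support. This gives $\delta^*(x)=1/\phi^{(-1)}(x)$ with
\begin{align*}
\phi^{(-1)}(x)=\frac{1-c-2cx\,\mathrm{Re}[\breve{m}_F(x)]}{x}, \qquad x>0.
\end{align*}
The same argument works for each fixed $x$, so the minimizer is pointwise, matching Corollary \ref{cor:oracle_estimator}.

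The main obstacle is admissibility. Assumption \ref{assup_est_eigen} requires $\delta$ to be positive, bounded away from zero, and continuously differentiable on the support, so I must check that $\phi^{(-1)}$ is positive, finite, and $C^1$ there. Continuous differentiability of $\Phi^{(-1)}$ on $\mathbb{R}\setminus\{0\}$ is cited above from \cite{ledoit2018optimal}, and $\breve{m}_F$ is smooth on the interior of the support. For positivity I would note that $\phi^{(-1)}(x)$ is the limiting density of $\Phi^{(-1)}$ relative to $F$. Since $\Phi_n^{(-1)}$ has weights $u_{n,j}^\T\Sigma_n^{-1}u_{n,j}\geq \gamma_{n,\max}^{-1}$, which is bounded away from zero by Assumption \ref{assup_sigma}(c), the limit $\phi^{(-1)}$ is bounded below by a positive constant. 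It is also bounded above, by $\gamma_{n,\min}^{-1}$ in the limit. Alternatively, I would invoke the $c>1$ analogue of \citet[Proposition 3.2]{ledoit2018optimal} for non-vanishing of the denominator.

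With these bounds, $\delta^*$ is a legitimate limiting shrinkage function. The value of $\delta$ at $x=0$ is irrelevant for $m\geq1$, so it can be set arbitrarily, for instance by continuity.
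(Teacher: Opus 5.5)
Your proposal is correct and follows the paper's route. The paper's proof, by reference to that of Corollary~\ref{cor:oracle_estimator}, discards the $\delta$-free term $\int x^m d\Phi^{(-2)}(x)$ and minimizes $x^m/\delta^2(x) - 2x^m\phi^{(-1)}(x)/\delta(x)$ pointwise by setting its derivative in $\delta$ to zero. Your completing the square in $t=1/\delta$ and your sandwich bound on $\phi^{(-1)}$ (reciprocals of the eigenvalue bounds in Assumption~\ref{assup_sigma}(c), obtained from the weights $u_{n,j}^\T\Sigma_n^{-1}u_{n,j}$) add what the paper leaves implicit: the critical point is a global minimum, and $\delta^*$ is finite and bounded away from zero.
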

The next result gives us a consistent estimator of this optimal shrinkage rule.
\begin{theorem}\label{thm:shrinkage_rule_p>n}
    Under the assumptions of Theorem \ref{thm:p>n_limit}, $\delta_n^*(x)$ for $x>0$, defined as
    \begin{align*}
        &\delta_{n}^* (x) = \left[ \left(\frac{p}{n} - 1 \right) x^{-1} +   2x^{-1}  g^*_n(x^{-1}) \right]^{-1}, \\
        & g^*_n(x) = \frac{1}{n} \sum_{k=p-n+1}^p \lambda_{n,k}^{-1} \dfrac{\lambda_{n,k}^{-1} - x}{(\lambda_{n,k}^{-1} - x)^2 + h_n^2 \lambda_{n,k}^{-2}},
    \end{align*}
   with $h_n\sim C n^{-\alpha}$ for some $C>0$ and $\alpha\in(0,2/5)$ is a consistent estimator of $\delta^*(x)$, i.e. $\delta_n^*(x) \overset{p}{\to} \delta^*(x) $ for any $x \in \supp (\underline{F})$.
\end{theorem}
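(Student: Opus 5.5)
The plan is to express the oracle rule $\delta^*(x) = 1/\phi^{(-1)}(x)$ for $x>0$ through the Stieltjes transform of the limiting distribution of the nonzero inverse eigenvalues. We then identify the kernel sum $g_n^*$ as a consistent estimator of (the real part of) that transform, in the same way as in the proof of Theorem \ref{thm:shrinkage_rule}.

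\textbf{Step 1: rewrite $\phi^{(-1)}$.} For $x>0$ in $\supp(\underline{F})$, substitute $\breve m_{\underline F}(x) = c\,\breve m_F(x) + (c-1)/x$ into
$$\phi^{(-1)}(x) = \frac{1 - c - 2cx\,\mathrm{Re}[\breve m_F(x)]}{x}.$$
Since $c\,\mathrm{Re}[\breve m_F(x)] = \mathrm{Re}[\breve m_{\underline F}(x)] - (c-1)/x$, this gives
$$\phi^{(-1)}(x) = \frac{(c-1) - 2x\,\mathrm{Re}[\breve m_{\underline F}(x)]}{x}.$$
The first term already matches the term $(p/n - 1)x^{-1}$ in $\delta_n^*(x)^{-1}$. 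It therefore remains to show
$$g_n^*(x^{-1}) \overset{p}{\to} -x\,\mathrm{Re}[\breve m_{\underline F}(x)].$$

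\textbf{Step 2: pass to inverse eigenvalues.} Let $\underline{F}_n$ be the empirical distribution of the $n$ nonzero eigenvalues $\lambda_{n,p-n+1},\dots,\lambda_{n,p}$. Let $\underline{F}_n^*$ be the empirical distribution of their inverses, with almost sure limit $\underline{F}^*(y) = 1 - \underline{F}(1/y)$; this follows from \cite{silverstein1995strong} and the fact that the support of $\underline{F}$ is bounded away from zero, by \cite{bai1998no}. A direct change of variables in the Cauchy principal value gives, for $y = x^{-1}$,
$$x\,\mathrm{Re}[\breve m_{\underline F}(x)] = -y\,\mathrm{Re}\Big[\breve m_{\underline F^*}(y)\Big] - 1 + \text{(known correction)}.$$
I will carry out this identity explicitly; it is the same algebra as in the $c<1$ case, with $F$ replaced by $\underline F$ and the normalization $1/p$ replaced by $1/n$.

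\textbf{Step 3: kernel estimation.} The sum $g_n^*(y)$ equals $\pi$ times the Hilbert transform of a kernel density estimate of the weighted measure $u\,d\underline F_n^*(u)$. This uses a Cauchy kernel with locally adaptive bandwidth $h_n u$, exactly as in \cite{ledoit2022quadratic}. I would invoke their consistency result for the Hilbert-transform kernel estimator, which holds with $h_n\sim Cn^{-\alpha}$, $\alpha\in(0,2/5)$. It requires that the weighted density $u\,\underline f^*(u)$ be bounded and Hölder continuous on $\supp(\underline F^*)$, which is inherited from the smoothness of $\underline F$ established in \cite{silverstein1995analysis}. This yields $g_n^*(y) \overset{p}{\to} \mathcal{H}[u\,\underline f^*](y)$, which by Step 2 equals the required limit.

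\textbf{Step 4: conclude.} Combine Steps 1–3 with $p/n \to c$. Since the denominator $\phi^{(-1)}(x)$ is bounded away from zero on $\supp(\underline F)$ (by the analogue of \citet[Proposition 3.2]{ledoit2018optimal} for $c>1$), the continuous mapping theorem gives $\delta_n^*(x)\overset{p}{\to}\delta^*(x)$.

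The main obstacle is Step 3. The consistency rate in \cite{ledoit2022quadratic} is proved for the $p<n$ normalization, and it must be checked that their argument goes through when we discard the $p-n$ zero eigenvalues and normalize by $n$. The exact bookkeeping of constants in Step 2 also needs care.
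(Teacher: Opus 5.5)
Your proposal is correct and follows essentially the paper's route. You rewrite $\delta^*$ via $\breve m_{\underline F}(x)=c\,\breve m_F(x)+(c-1)/x$, turn $-x\,\mathrm{Re}[\breve m_{\underline F}(x)]$ into $\pi$ times the Hilbert transform of $\underline\psi(u)=u\,\underline f^*(u)$ (your ``known correction'' in Step 2 is zero, since $x\,m_{\underline F}(x)=-1-y\,m_{\underline F^*}(y)$ exactly, and the limit in Step 3 should carry the factor $\pi$), and then close by citing \cite[Theorem D.1]{ledoit2022quadratic}, which is exactly what the paper does; the paper takes that reference as already covering the singular case, so it does not redo the check you flag as the main obstacle.
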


{
The above two results imply that, when $m\ge 1$, the minimizer does not rely on the spectral distribution $H$ and the shrinkage rule is only defined for the non-zero eigenvalues of the sample covariance matrix. 
This is due to the special nature of the relative savings loss for the case $m\ge1$ as shown in Theorem \ref{thm:p>n_limit}.
Indeed, when $p>n$, the sample covariance matrix $S_n$ has $n$ non-zero eigenvalues and $p-n$ zero eigenvalues. This implies for $m\geq 1$, and any $\widetilde{\Sigma}_n = U_n \widetilde{\Delta}_n U_n^\T$,
we have
$$\mathbf{L}_{m,n}(\Sigma_n^{-1}, \widetilde{\Sigma}_n^{-1}) = \frac{1}{p} \sum_{j=1}^n (u_{n,j}^\T \Sigma_n^{-2} u_{n,j}) \lambda_{n,j}^m - \frac{2}{p}\sum_{j=1}^n (u_{n,j}^\T \Sigma_n^{-1}u_{n,j}) \frac{\lambda_{n,j}^m}{\delta_{n}(\lambda_{n,j})} + \frac{1}{p} \sum_{j=1}^n \frac{\lambda_{n,j}^m}{\delta_{n}^2(\lambda_{n,j})}.$$
In other words, the zero sample eigenvalues do not contribute to the loss.
When $m=0$,
the oracle decision rule 
may involve 
the population eigenvalues according to Theorem \ref{thm:p>n_limit} through the term $\breve{m}_H(0)$, which can be estimated by
inversion of the QuEst function \citep{ledoit2018optimal}.
In fact,  \cite{ledoit2022quadratic} and \cite{ledoit2018optimal} have considered this covariance estimation problem for $p>n$ under other loss functions, among which some loss functions such as Stein's loss also needs numerical inversion of the QuEst function.
}

For our purpose of estimating the parameter matrix $B$ in Section \ref{sec:estimator}, we are specifically interested in the case $m=1$. 
Although decision rules for null eigenvalues do not affect the loss, for our primary objective of estimating $B$ by $\widehat{B}(1 - \widehat{\Sigma}^{-1}Q)$, we do need $\widehat{\Sigma}^{-1}$, which would not be possible to compute with zero eigenvalues. For this reason, we consider the following shrinkage rule for zero eigenvalues:
\begin{equation}
    \delta_n^*(x)^{-1} = \left(\frac{p}{n}-1\right) \frac{1}{n} \sum_{j= p-n+1}^p \lambda_{n,j}^{-1} \quad \text{for } x = 0,
\end{equation}
which is the optimal shrinkage rule for estimation under Frobenius loss \cite[Section 5]{ledoit2022quadratic}.

\section{Coefficient matrix $B$ estimation }\label{sec:local_linear_shrinkage}
We now again go back to our original problem of estimating the coefficient matrix $B$. According to the results in Section \ref{sec:covariance_shrinkage}, we propose 
to use 
\begin{align}\label{Equ_proposed_estimator}
\bar{B}=\widehat{B}(\mathrm{I} -   Q^{1/2} \overline{\Sigma}_n Q^{1/2})
\end{align}
with 
$\overline{\Sigma}_n = U_n \text{diag}(\delta_{n}^*(\lambda_{n,1}),$ $ \ldots, \delta_{n}^*(\lambda_{n,p})) U_n^\T$ to estimate $B$, where $\delta_{n}^*(x)$ is defined in Theorem \ref{thm:shrinkage_rule} {with $h_n=\hat{h}$ as defined in Section \ref{sec:p>n}} if $p<n$ and is defined in Theorem \ref{thm:shrinkage_rule_p>n} if $p>n$.
This estimator represents a global shrinkage rule on the data matrix $\widehat{B}$.

In this section, we extend this global shrinkage estimator of $\beta^{(t)}$ to a local one with  a mixture prior distribution. {
Scale mixtures of Gaussian \citep{polson2010shrink} are the cornerstone of modern Bayesian sparse modeling techniques. The two component spike-slab prior is also a mixture of Gaussian random variables with appropriately small variance \citep{narisetty2014bayesian}.
These priors are obtained by mixing a Gaussian (typically with mean 0) with a scale parameter $\lambda$ and some mixing distribution $g$ over $\lambda$. Careful choices of $g$ lead to flexible prior distributions with sufficiently heavier tails than a Normal distribution. Posterior means from these priors apply different levels of shrinkage depending on the magnitude of the signal. In other words, the shrinkage function is not global.}

In particular, we consider the prior $\beta^{(t)} \sim \sum_{k=1}^K \pi_k \Gauss(0, Q^{1/2}\Omega_k Q^{1/2})$, so that the posterior mean is a weighted combination of linear shrinkage rules. Indeed, when $\beta^{(t)} \sim \sum_{k=1}^K \pi_k \Gauss(0, Q^{1/2}\Omega_k Q^{1/2})$, we have
\begin{align*}
    \mathbb{E}(\beta^{(t)}\mid \widehat{B}) &= (\mathrm{I} - \sum_{k=1}^K \pi_k^*C_k)\widehat{\beta}^{(t)}, \quad C_k = Q^{1/2} \Sigma_k^{-1} Q^{-1/2}, \Sigma_k = \mathrm{I} + \Omega_k \\
    \pi_k^\star & = \dfrac{\pi_k f(\widehat{\beta}^{(t)}; 0, Q^{1/2}\Sigma_k Q^{1/2})}{ \sum_{k=1}^K \pi_k f(\widehat{\beta}^{(t)}); 0, Q^{1/2}\Sigma_k Q^{1/2})},
\end{align*}
where $f(x; \mu, \Sigma)$ is the multivariate Gaussian density evaluated at $x$ with mean $\mu $ and covariance $\Sigma$. Moreover, the marginal distribution of the observed data is a mixture of Gaussian as well: $\widehat{\beta}^{(t)} \sim \sum_{k=1}^K \pi_k^* \Gauss(0, Q^{1/2}\Sigma_kQ^{1/2})$. 
The posterior distribution of $\beta^{(t)}$ is also available in closed form: $\beta^{(t)}\mid \widehat{\beta}^{(t)} \sim \sum_{k=1}\pi_k^\star \Gauss\{(\mathrm{I} - C_k)\widehat{\beta}^{(t)}, (\mathrm{I} - C_k)Q\}$. 
Note that more weight is given to component-specific posterior means that are supported by the data through the posterior probabilities. 

As in Section \ref{sec:estimator}, we consider the case when the prior parameters $\Omega_k$ are unknown.
In order to obtain the posterior mean $\mathbb{E}(\beta^{(t)}\mid \widehat{B})$, we need estimates of $\Sigma_k$. Recall that $\widehat{\beta}_\star^{(t)} = Q^{-1/2} \widehat{\beta}^{(t)}$. To estimate $\Sigma_k$, we introduce latent indicators $Z \in \{0,1\}^{K \times 1}$ such that $\widehat{\beta}_\star^{(t)}\mid Z_k = 1 \sim \Gauss(0, \Sigma_k)$. Thus, given these indicators, estimating $\Sigma_k$ can be achieved by resorting to covariance matrix estimates  obtained in Section \ref{sec:covariance_shrinkage}. Since the indicators are unknown, we use a Markov chain Monte Carlo (MCMC) sampler to sample $Z \mid \Sigma_k$, and then estimate $\Sigma_k$ conditional on the sampled indicators $Z$ - the standard data augmentation technique to fit Bayesian mixture models \citep{hobert2011data}. The resulting algorithm is summarized in Algorithm S.9 
of the supplement. 
{Regarding the choice of $K$ in practice, we suggest dividing the data into training and validation subsets, and determining the optimal $K$ by comparing the prediction errors of the proposed method on the validation set for various candidate values of $K$.
}

{
In addition, we provide a discussion of the risk of the proposed global and local srinkage rules in Section S.2  
of the supplement, which quantifies the potential risk reduction for the proposed estimators compared to the maximum likelihood estimator.
}

\section{Simulation experiments}\label{sec:experiments}
In this section, we conduct a series of extensive simulation experiments, and compare the proposed method with existing methods in the literature. 
We focus on the performance of the linear shrinkage estimator from Section \ref{sec:local_linear_shrinkage} in estimating the parameter matrix $B$. \texttt{R} codes to implement our method are available at \href{https://github.com/antik015/EBayes-Integration}{https://github.com/antik015/EBayes-Integration}.

To benchmark the performance of the proposed method, we compare it to the following list of methods - 1) Unified Test for MOlecular SignaTures (UTMOST) developed by \cite{hu2019statistical} for cross-tissue TWAS models where the estimator is obtained by optimizing an error sum of squares for all tissues with added $\ell_1$-penalties on the columns of the mean matrix, and $\ell_2$-penalties on the rows of the mean matrix, 2) Iterated stable autoencoder (ISA) developed by \cite{josse2016bootstrap} where the estimator is obtained by creating an autoencoder of the covariate matrix using a careful bootstrap scheme, and 3) Multivariate Adaptive Shrinkage (MASH) developed by \cite{kim2024flexible} where continuous shrinkage priors are approximated by finite mixtures of Gaussian distributions, and variational methods for scalable computation are combined to estimate the mean matrix. The first two methods make a working structural assumption on the parameter to be estimated, in that it is either sparse or low-rank. The last method is under the empirical Bayes framework.

We consider $N = 200$ points within each source of data (i.e., tissue in the context of TWAS), and $n = 40$ or $50$ as the number of data sources. Within each source, we vary the number of predictors to be $p = 10, 20,$ or $30$ when $n = 40$, and $p = 20, 30,$ or $40$ when $n = 50$. {We provide additional simulations for high-dimensional settings in Section S.4 
of the supplement. }

Each row of the matrix of covariates $X$ for all combinations was generated independently from $\Gauss(0, \Sigma_X)$ where $\Sigma_X  = (1-\rho) \mathrm{I}_p + \rho \mathbf{J}_p$ where $\mathbf{J}_p$ is a $p \times p$ matrix with all entries equal to 1. We consider $\rho = 0, 0.5,$ or $0.8$ in our experiments. The true parameter matrix $B_0\in \mathbb{R}^{p \times n}$ with columns $\beta_0^{(t)}, \, t = 1, \ldots, n$ were generated using four specific settings - a) low-rank (LR) where $B_0 = F_0 G_0^\T$, and $F_0 \in \mathbb{R}^{p \times r}$, $G_0 \in \mathbb{R}^{n \times r}$ with $r = 8$, b) approximate sparse (AS) where each entry of $B_0$ is generated independently from $\Gauss(0, \tau_0^2)$ with $\tau_0 = 0.2$, so that $\mathbb{E}[\norm{B_0}_F^2] =0.2^2 np$, c) Horseshoe (HS) where elements of $\beta_0^{(t)}$ are generated from the Horseshoe prior \citep{carvalho2010horseshoe} by first generating $\tau \sim \text{Unif}[0,1]$, then $\lambda_j \sim C^{+}(0,1)$, and $\beta_{0j}^{(t)}\mid \tau, \lambda_j \sim \Gauss(0, \lambda_j^2\tau^2)$ where $C^+(0,1)$ is the standard Half-Cauchy distribution with pdf $f(x) \propto (1+ x^2)^{-1}$, d) mixture (Mix) where $\beta^{(t)}_0 \sim \pi_1 \Gauss(0, A_1) + \pi_2 \Gauss(0, A_2) $ with $A_1 = 0.1^2 \mathrm{I}_p$, and $A_2 = 10^2 \mathrm{I}_p$. Each of the design settings considered here focuses on specific cases. For example, the low-rank setting is favorable to methods that are designed to recover parameters with this structure, the Horseshoe setting is designed to have a mix of signal strengths within each column of the true coefficient matrix. On the other hand, in the approximate sparse setting all elements of $B_0$ are very small and methods that penalize the size of the coefficient matrix are expected to perform well, such as UTMOST. Finally, the mixture setting generates coefficients that have strong signals for some data sources and weak signals for remaining data sources, a typical scenario in TWAS.  

We then generate  the response within each source as $y^{(t)} = X \beta_0^{(t)} + \epsilon^{(t)}$ where $\epsilon^{(t)} \sim \Gauss(0, 1)$. 
For every method, we compute the mean square error (MSE) computed as $(pn)^{-1}\norm{\widetilde{B} - B_0}_F^2$ and predictive mean squared error (PE) computed as $(nN_t)^{-1}\norm{X_t\widetilde{B} - X_t B_0}_F^2$, averaged over $20$ independent replications within each setting, where $\widetilde{B}$ is an estimator for $B_0$. Here $X_t$ is a $N_t \times p$ matrix of testing data on the covariates; we set $N_t =20$. 
For the proposed local linear shrinkage estimator, we consider three choices of $K = 2, 3, 4$ abbreviated as LLS-2, LLS-3, LLS-4.  
The proposed global linear shrinkage estimator defined in Equation \eqref{Equ_proposed_estimator} is abbreviated as ULS for unbiased linear shrinkage. We also consider the proposed global linear shrinkage estimator with the default smoothing parameter suggested by \cite{ledoit2022quadratic} as the $h_n$, which is abbreviated as LS.


\begin{figure}
    \centering
    \includegraphics[height = 7.5cm, width = 0.85\textwidth]{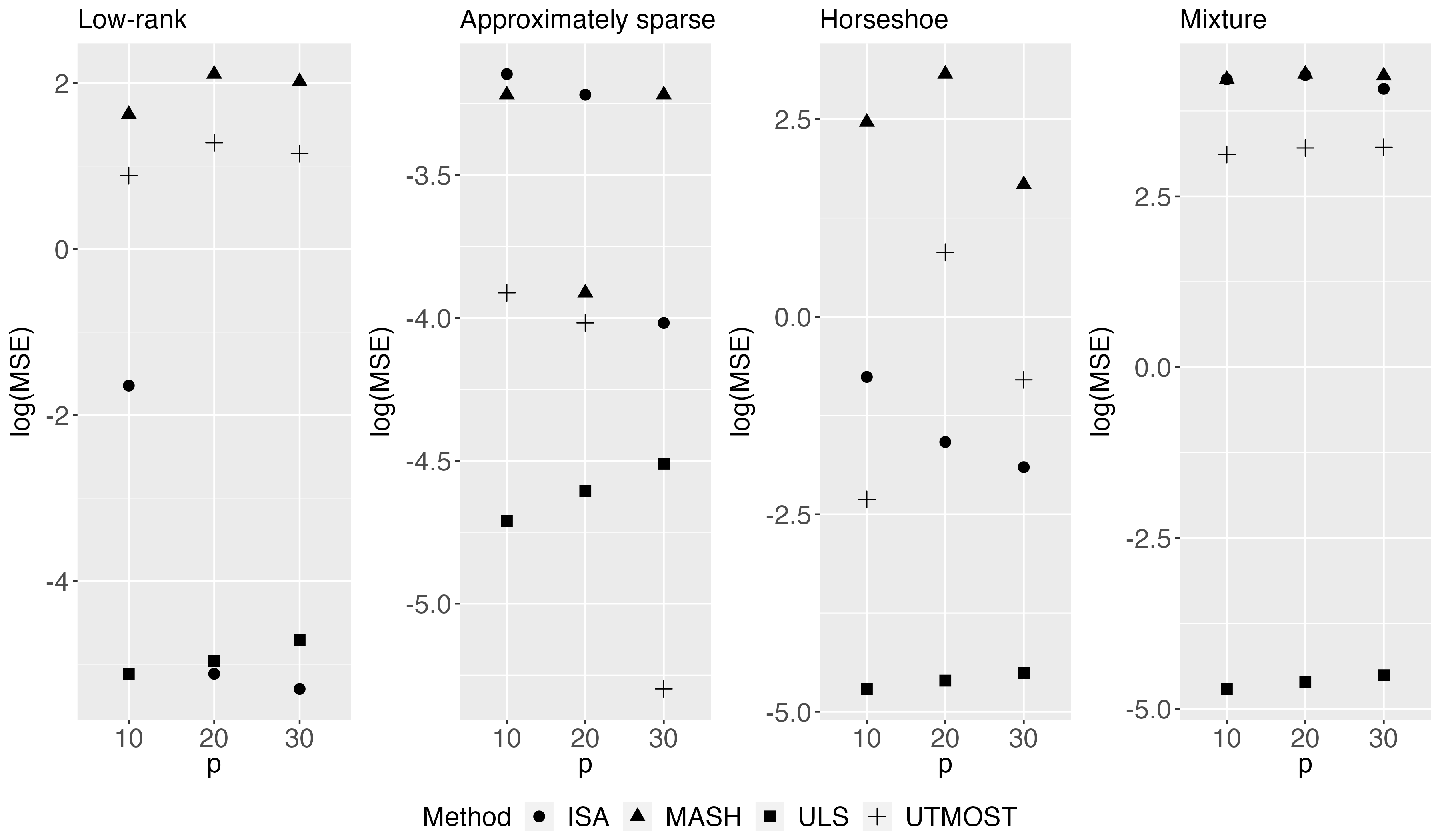}
    \caption{Logarithm of MSE when $n = 40$ and $\rho = 0.5$ under the four experimental settings.}
    \label{fig:MSE_T=40}
\end{figure}
\begin{figure}
    \centering
    \includegraphics[height = 7.5cm, width = 0.85\textwidth]{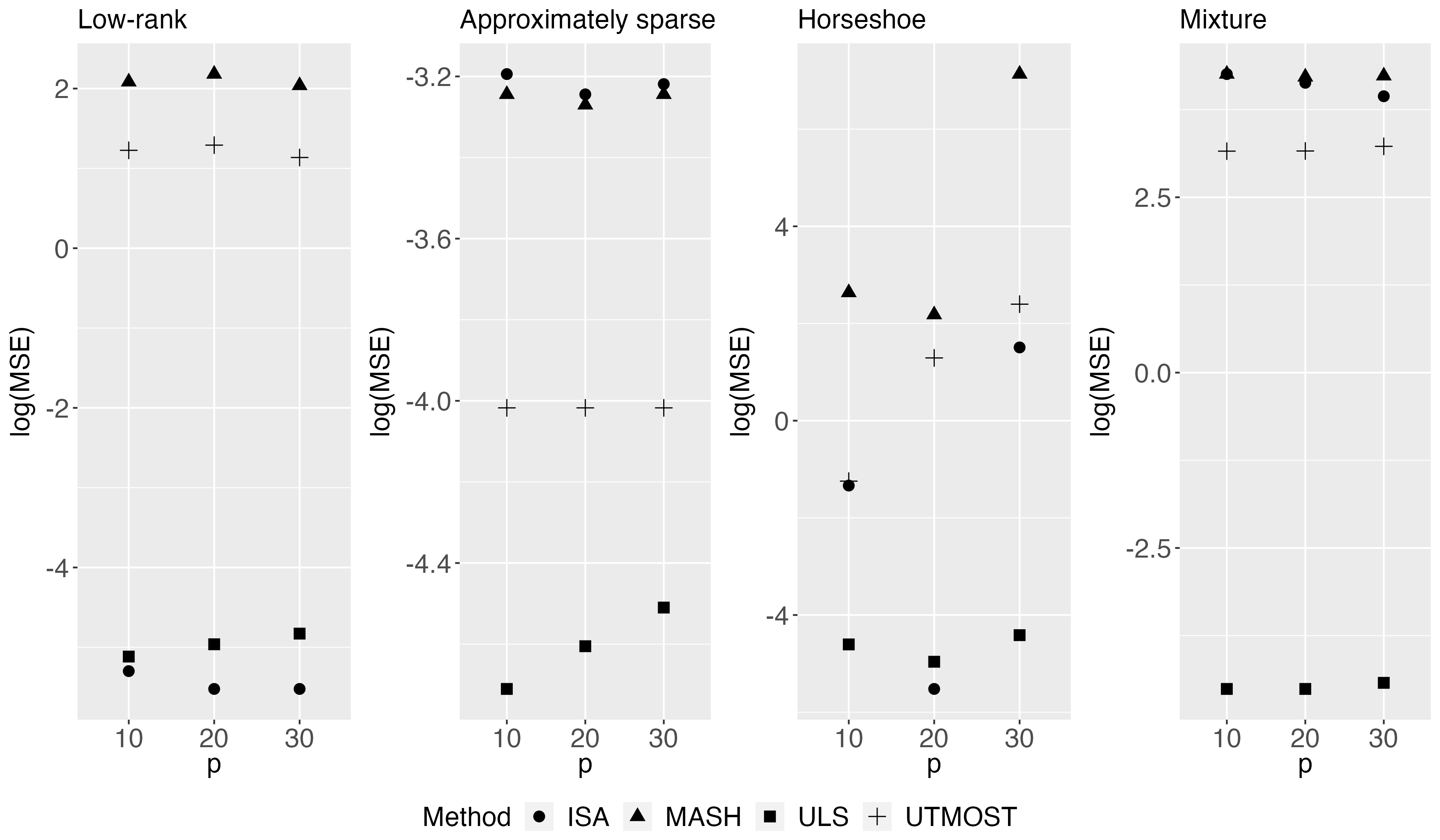}
    \caption{Logarithm of MSE when $n = 50$ and $\rho = 0.5$ under the four experimental settings.}
    \label{fig:MSE_T=50}
\end{figure}
\begin{figure}
    \centering
    \includegraphics[height = 7.5cm, width = 0.85\textwidth]{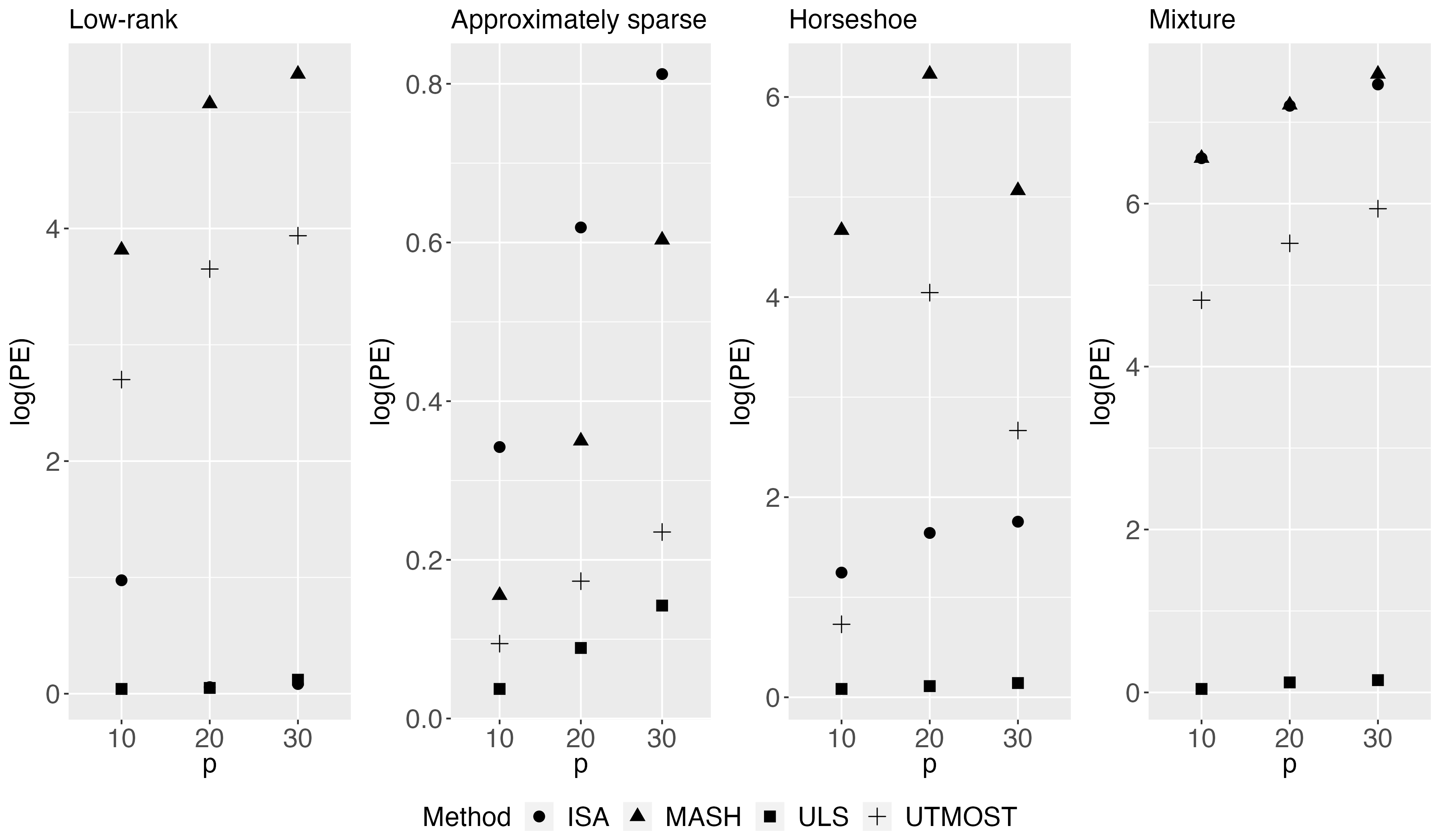}
    \caption{Logarithm of PE when $n = 40$ and $\rho = 0.5$ under the four experimental settings.}
    \label{fig:PE_T=40}
\end{figure}
\begin{figure}
    \centering
    \includegraphics[height = 7.5cm, width = 0.85\textwidth]{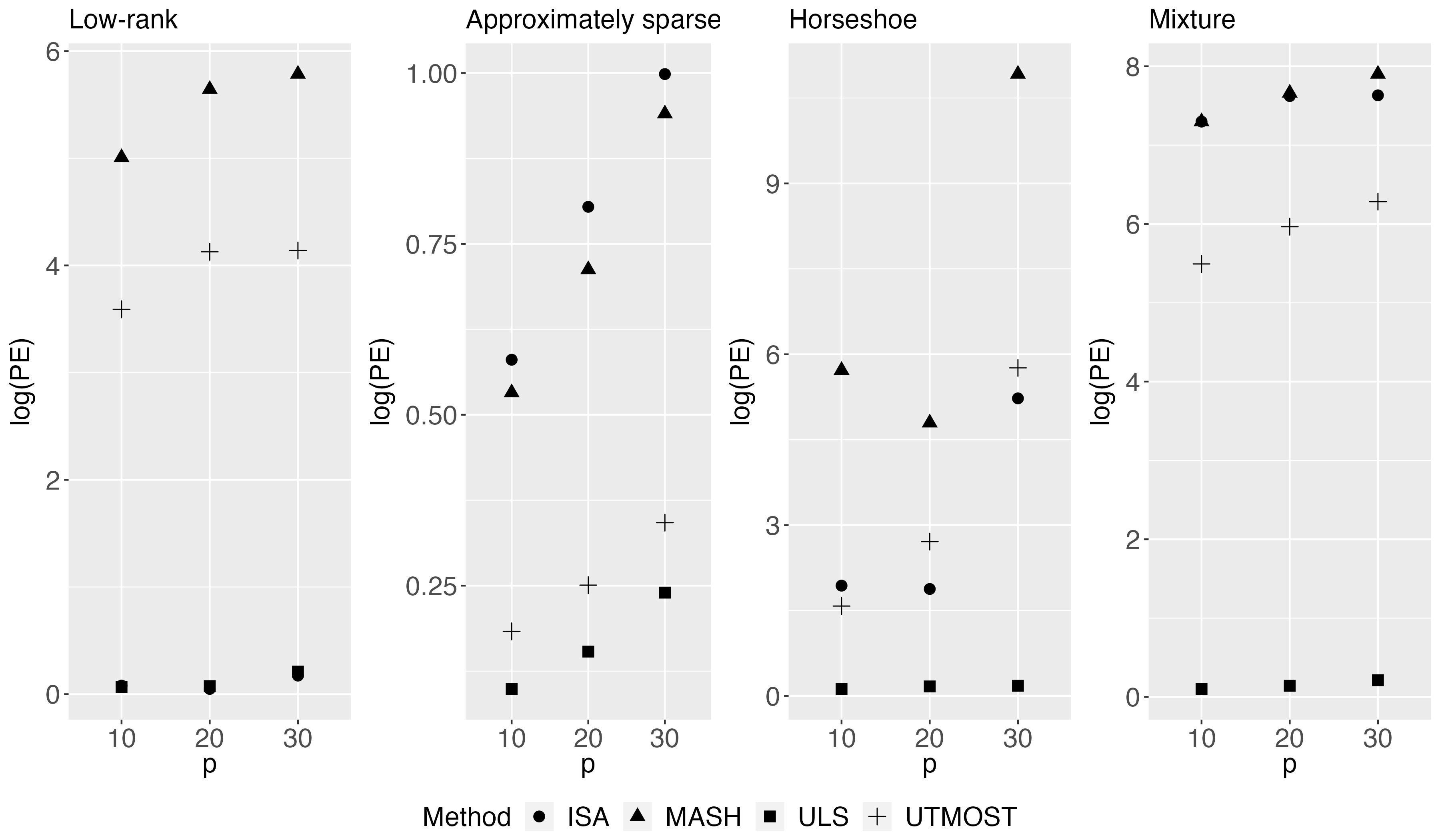}
    \caption{Logarithm of PE when $n = 50$ and $\rho = 0.5$ under the four experimental settings.}
    \label{fig:PE_T=50}
\end{figure}

The detailed results for all the methods are reported in Tables 
S.6--S.9,
which can be found in the Appendix.
In particular, we compare the proposed ULS method with existing methods in Figures \ref{fig:MSE_T=40}--\ref{fig:PE_T=50}.
Naturally, when the true parameter matrix has a low-rank structure, the best performing method is ISA, although errors for ULS are very close as shown in the figures. In other cases, ULS is the best performing method, especially when it comes to MSE. 
As shown in Tables S.6--S.9,
other proposed estimators also maintains good performance across different settings. 
When the true parameter is generated from a two-component mixture, errors of LLS with $K = 3, 4$ are very close to errors of $K =2$.
{In addition, we expect the level of error of the proposed method to increase as $p$ grows, since Figure S.1 
in the supplement shows that the value of the loss function $\mathbf{L}_{1,n}$ increases as $c=\lim_{n\to\infty} p/n$ increases for a fixed $n$.
}

{
We observe a lot of variation of errors for the Horseshoe case. The Horseshoe prior is a heavy-tailed prior (polynomially decreasing tails) with an infinite spike at 0. As a result, samples from this distribution can be very small and very large. The MASH/ISA/UTMOST algorithm is particularly sensitive to these kind of parameter settings and hence the large variations. In particular, the variation comes from the fact that in some of the replications, these algorithms did not converge.
}


 Overall, the results clearly point towards the versatility of the proposed shrinkage method, and highlights the benefits of the approach in that when the underlying structure of the parameter is unknown, it is perhaps better to shrink. This approach may not always provide the best performance. However, one can still achieve reasonable performance in a variety of settings.

\section{Real data application}\label{sec:real_data}

In this section, we apply the proposed method to the GTEx data and compare it with the existing OLS,  UTMOST, ISA, and MASH methods in  predicting tissue-specific expressions using cis-SNPs and estimating the corresponding effects.

We aim to study the relationship between 
 the cis-SNPs and gene expression across various tissues and genes. 
  The GTEx project, initiated in 2010 as part of the NIH Common Fund, provides a comprehensive public resource database to study the association between genetic variation and gene expression.
  This project has collected genotype and gene expression data from $838$ participants in $49$ tissue types, which are extracted from tissue samples by the Laboratory, Data Analysis and Coordinating Center (LDACC) \citep{lonsdale2013genotype}.
  However, we do not have tissue samples of all the $49$ tissue types from each participant.
  In this paper, we focus only on 32 tissues, each of which contains at least $200$ samples.

  We obtain gene expression values from ``GTEx\_Analysis\_v8\_eQTL\_expression\_matrices
  tar'' at \url{https://www.gtexportal.org/home/downloads/adult-gtex/qtl}.  These values have been fully processed and normalized by the GTEx project \citep{gtex2020gtex, lonsdale2013genotype}. The genotype data in ``GTEx\_Analysis\_2017-06-05\_v8\_Who\_leGenomeSeq\_838 Indiv\_Analysis\_Freeze.SHAPEIT2phased.vcf.g'' are also used and 
  processsed 
  as outlined below. The SNPs with minor allele frequencies less than $5\%$ are excluded. The remaining SNPs are filtered for linkage disequilibrium (LD) using PLINK 1.9 
  with a window size of 50 SNPs, a step size of 5 SNPs, and an $R^2$ threshold of 0.2.
  Furthermore, we identify cis-eQTLs for each gene based on the approach in \cite{wang2016imputing}.
  As a result, 3,201 genes have {at least two and at most twenty} associated cis-SNPs.

We adopt the proposed LS method, along with the OLS, UTMOST, ISA, and MASH methods, to predict gene expression based on the corresponding cis-SNPs for each gene. 
{
Here we only use the LS method since other proposed methods perform similarly as the LS based on the simulations in Section \ref{sec:experiments}.}
To assess the prediction performance of each method,
we conduct a 10-fold cross-validation analysis for each gene. Specifically, for each tissue, we randomly split all the observed samples into 10 equally sized folds, and name them Folds 1--10. For each $i=1,\dots, 10$, we treat Fold $i$ in all the tissues as a testing set, and the remaining folds in all the tissues together as a training set.
We then use the average prediction mean squared error (PMSE) across all the folds and genes for the evaluation of prediction accuracy.

We provide the average PMSE across all the 3,201 genes for each method in the first row of Table \ref{Table_PMSE}, which shows that the proposed method, UTMOST, and MASH perform slightly better than the OLS method.
In the second and third rows of Table \ref{Table_PMSE}, we  provide the average PMSEs across 2,736 and 2,019 genes which have at most 10 and 5 cis-SNPs, respectively, showing that the proposed method and the UTMOST perform the best among all the methods. The MASH performs worse than these two methods when there are fewer cis-SNPs.

To further compare all the methods, we plot heat maps of estimated cis-SNP effects by all the methods for the ZNF138 gene, which is shown in Figure \ref{heatmap}. In sub-figures of Figure \ref{heatmap}, each row of the heat map represents a tissue and each column represents a cis-SNP.
We can observe that the MASH and UTMOST methods tend to shrink SNP effects to zero, while the proposed method allows non-zero small effects. Moreover, the MASH method is likely to fuse effects from different tissues to the same value, while the proposed method enable variability of effects across tissues.
In fact, there could exist considerable differences among the cis-SNP effects in different tissue types \citep{fu2012unraveling}.
Furthermore, gene expression could be complex and controlled by many SNPs with small effects \citep{heap2009complex, gresle2020multiple, lloyd2017genetic, boyle2017expanded}.

{
In addition, we have applied the
proposed LS method to the Yeast Cell Cycle dataset and compared that with existing methods. The results are provided in Section S.5 
of the supplement. The proposed method works well in the multi-response regression for the
Yeast Cell Cycle dataset.
}

\begin{table}[hbt!]
\begin{center}
\caption{Average prediction mean squared errors (PMSEs) for different methods, with standard deviation (SD) in the parentheses. 
``PMSE'' represents the average PMSE across all the 3,201 genes.
``$\text{PMSE}_{10}$'' and ``$\text{PMSE}_5$'' 
represent average PMSEs for 
genes with at most $10$ and $5$ cis-SNPs.
}\label{Table_PMSE}
\vspace{2mm}
\resizebox{\columnwidth}{!}{%
\begin{tabular}{l|ccccc}
\hline
Methods & Proposed (LS) & OLS & UTMOST & ISA & MASH \\ 
\hline
PMSE & 0.964 (0.018) & 0.968 (0.017)  & 0.962 (0.016) & 0.982 (0.023) & 0.964 (0.024) \\
\hline
$\text{PMSE}_{10}$ & 0.966 (0.011) & 0.970 (0.011)  & 0.966 (0.010) & 0.987 (0.015) & 0.970 (0.015) \\
\hline
$\text{PMSE}_5$ & 0.969 (0.007) & 0.972 (0.007)  & 0.969 (0.006) & 0.988 (0.010) & 0.974 (0.011) \\
\hline
\end{tabular}}
\end{center}
\end{table}

\begin{figure}
    \centering
    \includegraphics[height = 19cm, width = 0.83\textwidth]{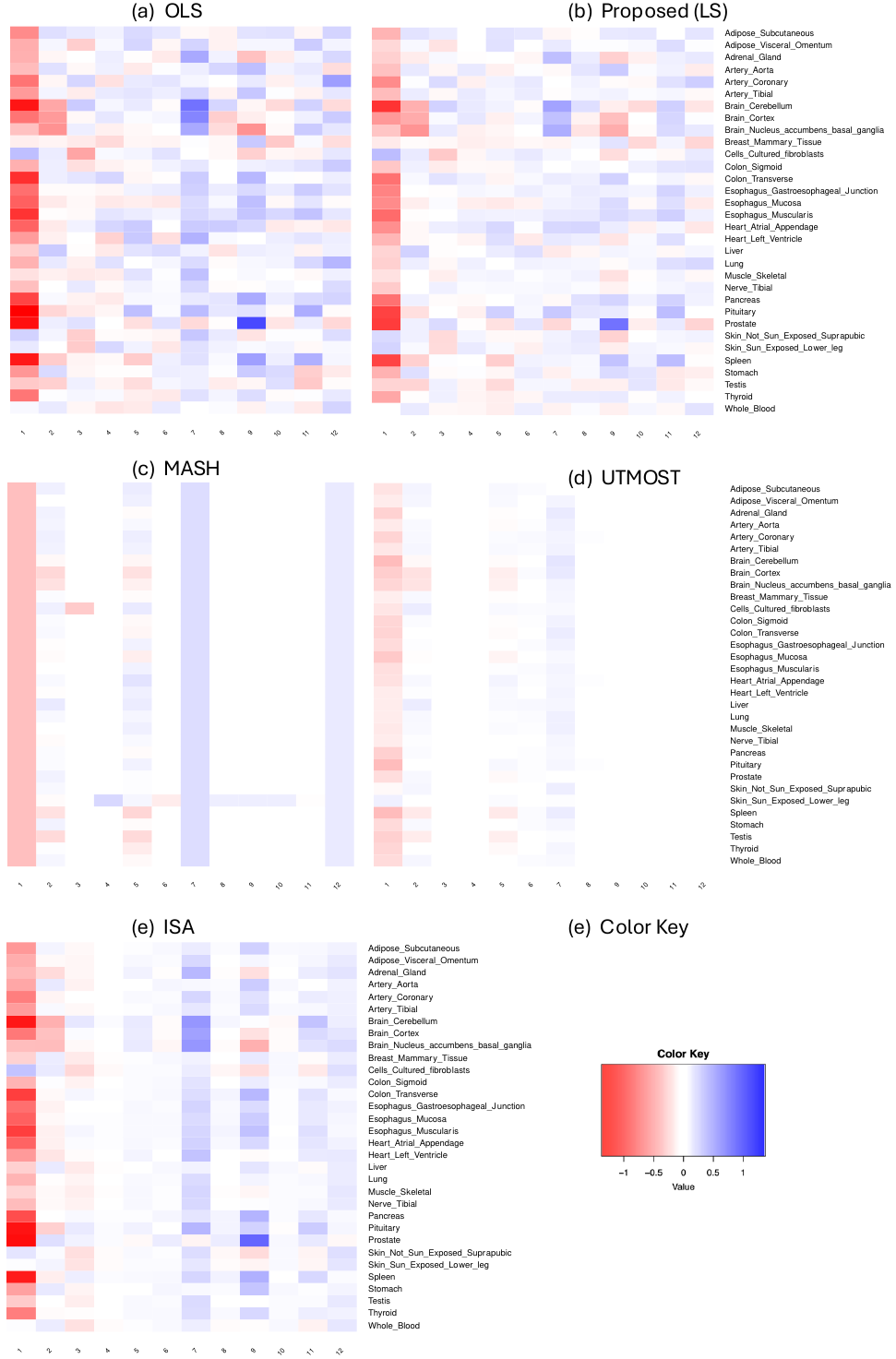}
    \caption{Heatmaps of estimated coefficients of $12$ cis-SNPs for the ZNF138 gene by different methods.}
    \label{heatmap}
\end{figure}

\section{Discussion}
We developed a formal data-integration method for multiple-source linear regressions with applications in TWAS studies. Existing literature views this problem through the lens of multi-response regression, although proposed solutions often make assumptions on the parameter structure that are not possible to verify in practice. We take an empirical Bayes approach here which naturally leads to shrinkage estimators, and are shown through extensive numerical experiments to have excellent performance under several parameter settings. Our work also builds upon the close connection between estimating the regression coefficient matrix and estimating covariance matrices. 

Addressing uncertainty in the estimates is a natural next step. This could be either handled directly or by computational methods such as bootstrap. Another interesting future direction is to consider nonparametric regressions with more predictive power within each source. 
To integrate predictions from different sources, the proposed empirical Bayes idea could still be useful.
In this paper, we only considered the low-dimensional cases where the number of subjects is larger than the number of predictors. In fact, high-dimensional cases are also important and could be another future direction, where we may leverage the asymptotic distribution of the de-biased lasso \citep{javanmard2014confidence} and incorporate more genotype biomarkers. Moreover, for large genetic datasets, summary statistics, such as correlations and standard errors, are easier to share than individual-level data due to privacy issues. Thus, it would be worthwhile to explore extending the proposed method to a summary statistics-based approach in the future.

{
Furthermore, in TWAS applications, certain domain knowledge can potentially be incorporated to further improve prediction and interpretability. For example, one may integrate information of known gene co-expression networks \citep{langfelder2008wgcna} or tissue similarity \citep{hu2019statistical} (e.g., from GTEx tissue correlation structure).
Specifically, suppose that $\Sigma_T^{n \times n}$ is a positive definite symmetric matrix with $\Sigma_T(i,j)$ encoding the similarity/distance between tissue $i$ and tissue $j$. Such a matrix is assumed to be constructed by the user. Then an observation model for the OLS estimates can be formed as $\widehat{B} \sim \text{MN}(B, \Sigma_T, Q)$ with the prior $B \sim \text{MN}(0,\Sigma_T, Q^{1/2} \Omega Q^{1/2})$. This scenario can be easily incorporated within our method since $\mathbb{E}(B \mid \widehat{B}) = \widehat{B} - \Sigma_T^{-1} \widehat{B} Q^{1/2}(\mathrm{I} + \Omega)^{-1} Q^{-1/2}$, which requires estimating the inverse of the marginal covariance matrix $(\mathrm{I} + \Omega)^{-1}$ under a rescaled relative savings loss (the rescaling is through $\Sigma_T^{-1}$).
}

\section*{Supplementary Material}
 Additional simulation results, proofs, and algorithms are provided in the supplementary material.

\section*{Acknowledgments}
This work was supported in part by the National Science Foundation under Grant DMS 2210860. 
We are also grateful to Yunlong Liu of Department of Medical and Molecular Genetics in Indiana University School of Medicine, for his helpful discussions on real data.

\bibliography{ref}
\bibliographystyle{chicago}
\clearpage
\onecolumn
\clearpage\pagebreak\newpage
\setcounter{secnumdepth}{3}
\setcounter{equation}{0}
\setcounter{page}{1}
\setcounter{table}{0}
\setcounter{section}{0}
\setcounter{subsection}{0}
\setcounter{figure}{0}
\setcounter{algorithm}{0}
\renewcommand{\theequation}{S.\arabic{equation}}
\renewcommand{\thesection}{S.\arabic{section}}
\renewcommand{\thepage}{S.\arabic{page}}
\renewcommand{\thetable}{S.\arabic{table}}
\renewcommand{\thefigure}{S.\arabic{figure}}
\renewcommand{\thealgorithm}{S.\arabic{algorithm}}
\section*{\uppercase{Supplementary Material}}
\section{Convergence of the loss}\label{supp_con_loss_sec}
We conduct a thorough simulation experiment to study the asymptotic approximation given by Theorem 5. Specifically, we consider three scenarios:
\begin{enumerate}
    \item {\bf Independence:} Suppose $X_{ij} \overset{iid}{\sim} \pi$ such that $\mathbb{E}(X_{ij}) = 0$ and $\mathbb{E}(X_{ij}^2) = 1$ and $\mathbb{E}(X_{ij}^{12}) < \infty$ for $i = 1, \ldots, n$, $j = 1, \ldots, p$. Here, the population covariance matrix is $\Sigma_n = \mathrm{I}_p$. This is satisfied by a large class of distributions, e.g. the exponential family. We set $\pi$ to be $\Gauss(0,1)$. Let $S_n = n^{-1}X^\T X$. For this setting, $F_n$ converges to the Marcenko-Pastur distribution when $p/n \to c \in (0,1)$. Moreover, it can be shown that $\mathbf{L}_{1, n} = \mathrm{tr}[(\widetilde{\Sigma}_n^{-1} - \Sigma_n^{-1})^2 S_n] \to 0$ as $n \to \infty$. Thus $\mathbf{L}_{1} = \lim_{n \to \infty} \mathbf{L}_{1,n} = 0$. We set $\widetilde{\Sigma}_n$ to be the proposed Stein shrinkage estimator from Theorem 6.

    \item {\bf Weak dependence:} Suppose $X_i^{p \times 1} \overset{iid}{\sim} \Gauss(0, \Sigma_n)$ for $i = 1, \ldots, n$ where $\Sigma_{n, jj} = 1 $ and $\Sigma_{n, jk} = \rho^{|j-k|}$ when $j \neq k$ for $|\rho|< 1$. This is an AR(1) structure. Unlike the previous case, here an analytical form of $\mathbf{L}_{1}$ is unknown although the limiting spectral distribution is known \citep{gray2006toeplitz}. We call it the weak dependence model since $\text{Cov}(X_{ij}, X_{ik}) \to 0$ as $|j-k| \to \infty$. For this case, we look at $\mathbf{L}_{1, n}$ for different choices of $n$. The estimator $\widetilde{\Sigma}_n$ is the same as the previous case.
    
    \item {\bf Strong dependence:} Suppose $X_i^{p \times 1} \overset{iid}{\sim} \Gauss(0, \Sigma_n)$ for $i = 1, \ldots, n$ where $\Sigma_n = \mathrm{I}_p + \rho \mathbf{1}_p\mathbf{1}_p'$. We set $\rho = 0.5$. This is a spike-covariance model. Here also, $\mathbf{L}_1$ is not available analytically. We note however that $F_n$ still converges to the Marcenko-Pastur law \citep{baik2006eigenvalues}. This is a strong dependence model since the $\text{Cov}(X_{ij}, X_{ik})$ is constant with respect to $|j-k|$. Similar to the previous setting, we look at the behaviour of $\mathbf{L}_{1, n}$ as $n$ increases. The same estimator $\widetilde{\Sigma}_n$ is also used here.
\end{enumerate}

\begin{figure}
   \begin{subfigure}{0.4\textwidth}
   \centering
    \includegraphics[width = 7.5cm, height = 4cm]{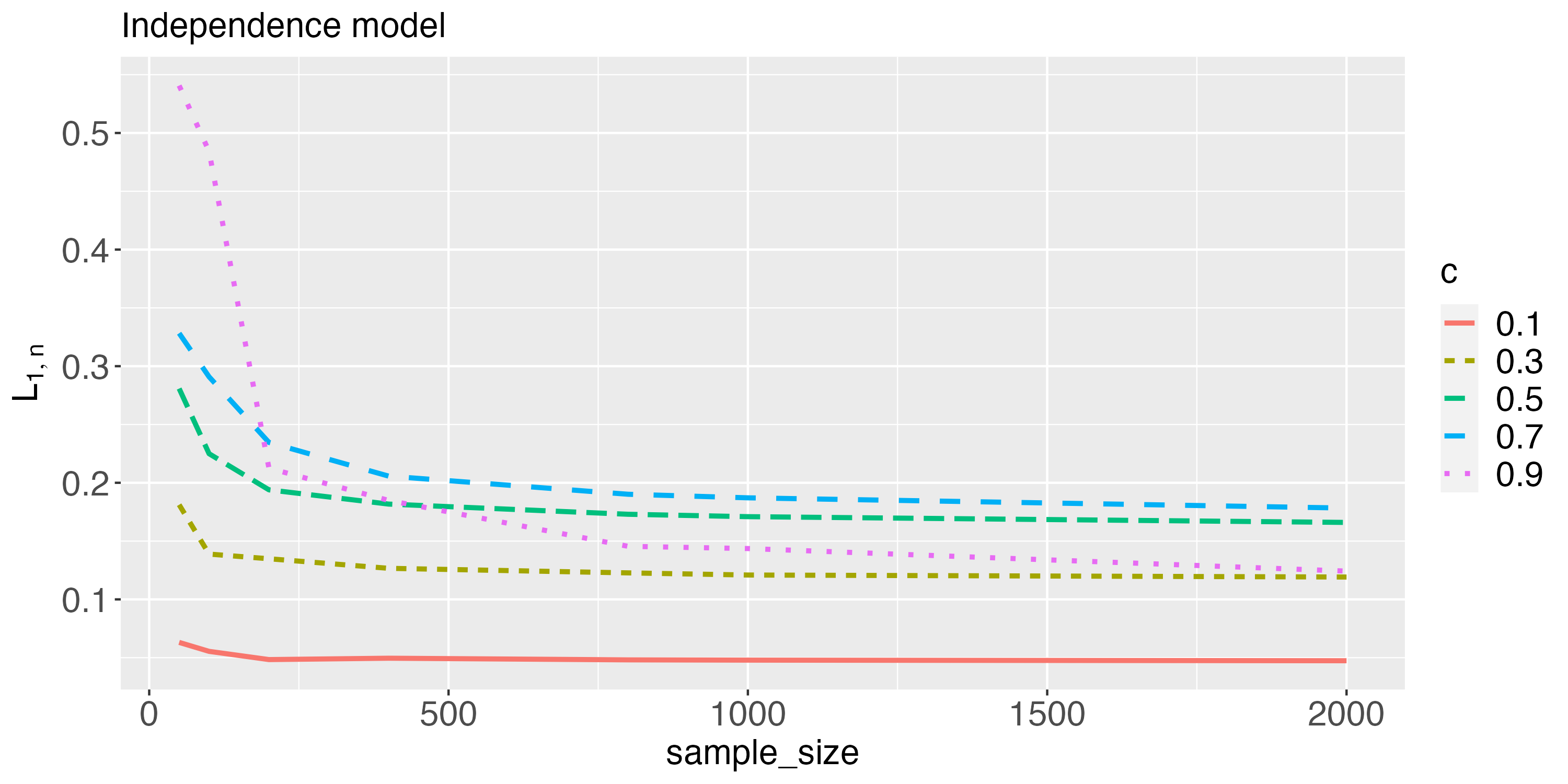}
   \end{subfigure}
   \begin{subfigure}{0.5\textwidth}
   \centering
       \includegraphics[width = 7.5cm, height = 4cm]{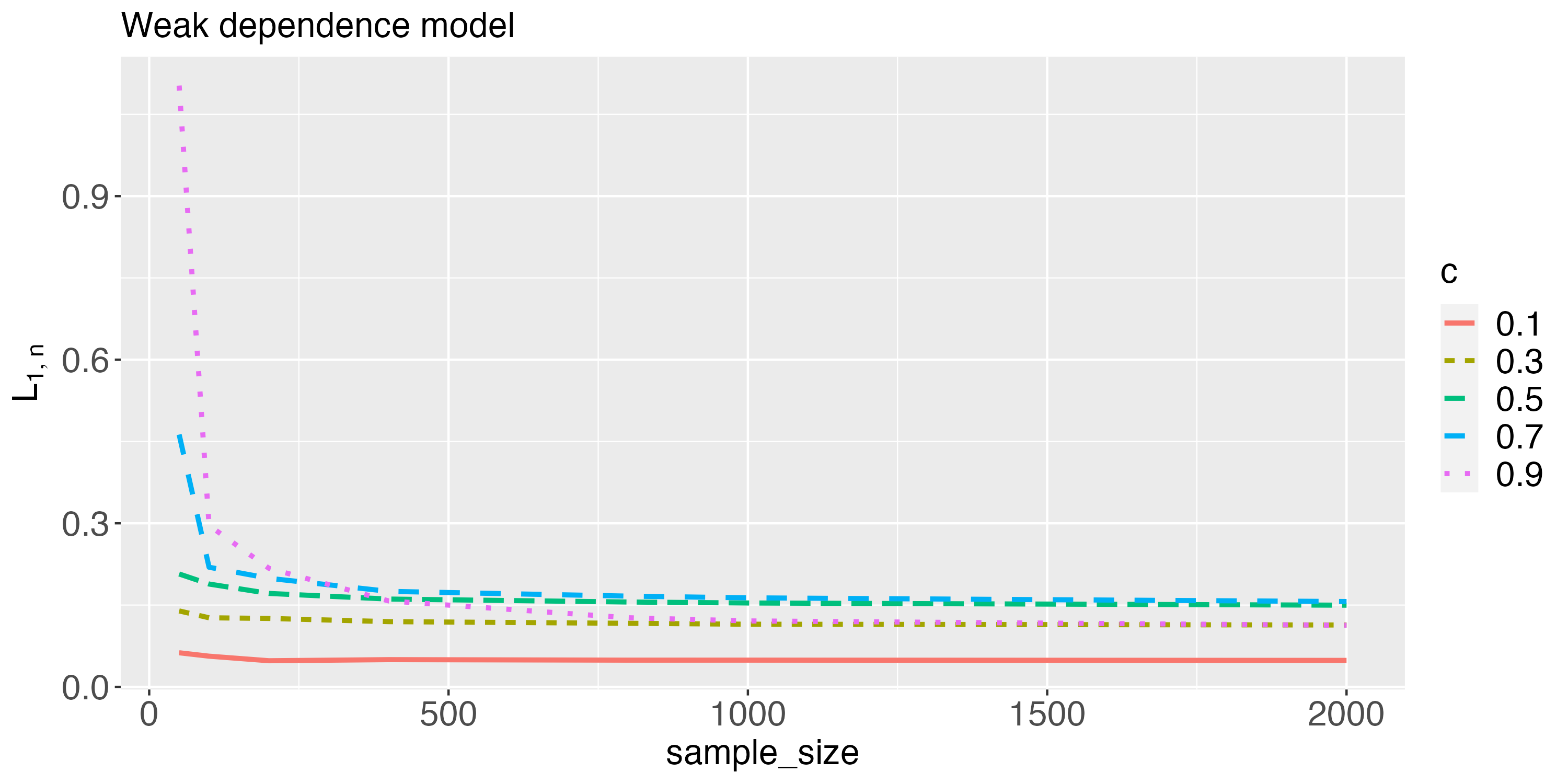}
   \end{subfigure}
   \begin{subfigure}{1\textwidth}
   \centering
       \includegraphics[width = 7.5cm, height = 4cm]{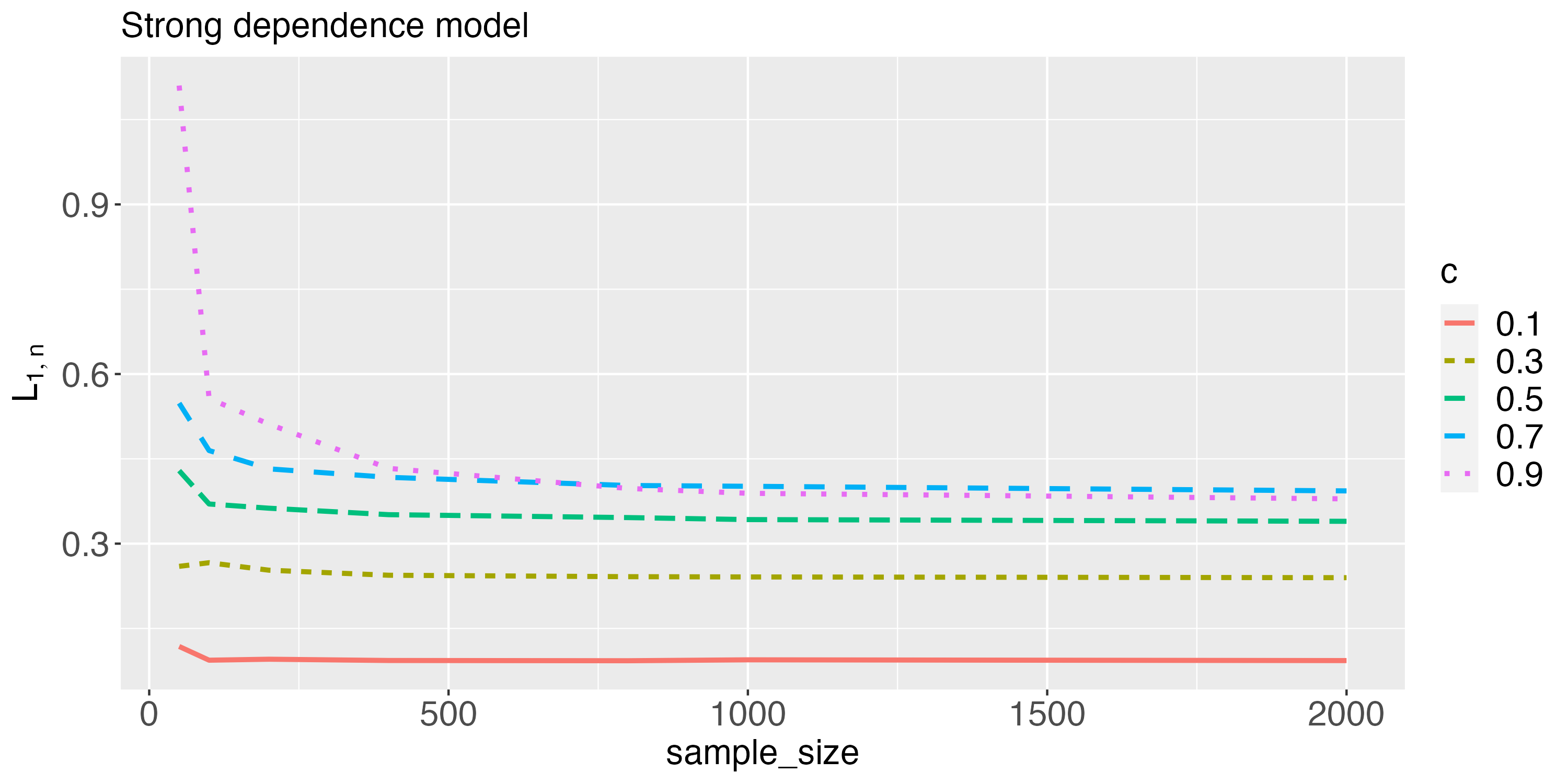}
   \end{subfigure}
   \caption{
   Behavior of $\mathbf{L}_{1,n}$ with $n$ for concentration ratios $c = 0.1 , 0.3 , 0.5 , 0.7 , 0.9$. The sample size $n = 50, 100, 200, 400, 800, 1000, 2000$. 
   }
   \label{fig:loss_convergence_main}
\end{figure}

For all these cases, we report the behavior of $\mathbf{L}_{1,n}$ with $n$ in Figure \ref{fig:loss_convergence_main} where we vary the concentration ratio $c$ from $0.1$ to $0.9$ with increments of $0.2$ and the sample size is varied within $\{50, 100, 200, 400, 800, 1000, 2000\}$. The results have been averaged over 100 replications. As can be seen from the results, {the difference between the empirical and the limiting loss is getting smaller for the {\bf Independence model} as $n$ grows, however, the convergence is very slow. This is perhaps not surprising given that almost sure convergence of even $F_n$ to $F$ can be as slow as $n^{-1/6}$ \citep[Theorem 8.23]{bai2010spectral}, especially when $c$ is close to 1. For the other two models, the empirical loss also seems to be converging.}

{
\section{Discussion on the risk of the proposed estimators}\label{supp_discuss_risk_sec}

In this section, we provide  a discussion on the risk of the proposed shrinkage rules. The posterior expectation $\mathbb{E}(\beta^{(t)} \mid \hat{\beta}^{(t)}) = (\mathrm{I} - C)\hat{\beta}^{(t)}$ where $C = Q^{1/2}(\mathrm{I} + \Omega)^{-1} Q^{-1/2}$, which is the basis of the linear shrinkage rule. Equivalently, in matrix notation we obtain the decision rule $\widetilde{B}(\widehat{B}) = \widehat{B} -\widehat{B}  [Q^{1/2}(\mathrm{I} + \Omega)Q^{1/2}]^{-1} Q $. To put things into a general framework, in the following discussion we use the matrix-variate normal distribution \cite{gupta2018matrix}. A matrix $X^{n\times p}$ is said to have a matrix-variate normal distribution with mean matrix $M^{n\times p} $ and covariance matrices $U^{n\times n}$, $V^{p \times p}$ (denoted as $X \sim \text{MN}(M, U, V)$) if its density function is of the form:
$$p(X \mid M, U, V) \propto \exp\left[-\frac{1}{2} \mathrm{tr}\left\lbrace V^{-1}(X - M)^\T U^{-1} (X-M) \right\rbrace \right].$$

Our current model and prior can be equivalently characterized as $\widehat{B} \mid B \sim \text{MN}(B, \mathrm{I}_n, Q)$ and $B \sim \text{MN}(0, \mathrm{I}_n, Q^{1/2}\Omega Q^{1/2})$, where $\widehat{B}$ and $B$ are $n \times p$ matrices with each row containing the vector $\hat{\beta}^{(t)}$ and $\beta^{(t)}$, respectively. The marginal distribution of $\widehat{B}$ is also a matrix-variate normal distribution: $\widehat{B} \sim \text{MN}(0, \mathrm{I}_n, Q^{1/2} (\mathrm{I} + \Omega) Q^{1/2})$. We write $m(\widehat{B})$ as the density of the marginal distribution. The linear shrinkage rule  $ \widetilde{B}(\widehat{B}) = \widehat{B} + [\nabla \log m(\widehat{B})]Q$. This is a generalization of Tweedie's formula \citep{efron2011tweedie} to the matrix-variate case. For the sequel, let us assume $Q = \mathrm{I}$. As long as $Q$ is known, this assumption does not lose generality. An adaptation of Stein's risk result \citep{stein1981estimation} for estimators $\widetilde{B}(\widehat{B}) = \widehat{B} + \nabla \log m(\widehat{B})$ in this context gives
$$\mathbf{R}(\widetilde{B}, B) = \mathbb{E}_{B}[\mathrm{tr}(\widetilde{B} - B)(\widetilde{B} - B)^\T] = np - \mathbb{E}_{B} [D \widetilde{B}(\widehat{B})],$$
where
$$D \widetilde{B}(\widehat{B}) = \norm{\nabla \log m(\widehat{B})}_F^2 - 2 \nabla^2 m(\widehat{B})/m(\widehat{B}), \quad \nabla^2 m(\widehat{B}) = \sum_{j,k} \nabla^2_{jk} m(\widehat{B}). $$
In the above expression, the expectation is taken with respect to the distribution of $\widehat{B}\mid B$. The equality for $\mathbf{R}(\widetilde{B}, B)$ quantifies the potential risk reduction for the estimator $\widetilde{B}$ compared to the maximum likelihood estimator, which has $np$. For example, if $\nabla^2 m(\widehat{B})<0$ for all $\widehat{B}$, then $\widetilde{B}$ strictly improves over the maximum likelihood estimator.

In the mixture setup, the marginal distribution of $B$ is $m_\star(\widehat{B}) = \sum_{k=1}^K \pi_k m_k(\widehat{B})$, where each $m_k(B)$ is the density of a matrix-variate normal distribution with mean 0, and covariance matrices $\mathrm{I}_n$, $Q^{1/2} \Omega_k Q^{1/2}$. The posterior mean also has a similar decomposition $\widetilde{B}_{mix}(\widehat{B}) = \widehat{B} + \sum_{k=1}^K \pi_k^\star(\widehat{B}) \nabla \log m_k(\widehat{B})$ with $\pi_k^\star(\widehat{B}) = \pi_k m_k(\widehat{B})/m_\star(\widehat{B})$, i.e. $\widetilde{B}_{mix}(\widehat{B}) = \sum_{k=1}^K \pi_k^\star(\widehat{B}) \widetilde{B}_k(\widehat{B})$. The risk of $\widetilde{B}_{mix}(\widehat{B})$ is then $\mathbf{R}(\widetilde{B}_{mix}, B) = np - \mathbb{E}_B[D \widetilde{B}_{mix}(\widehat{B})]$. This can be characterized following \cite[Corollary 3]{george1986minimax}, which gives
\begin{align*}
    D\widetilde{B}_{mix}(\widehat{B}) = \sum_{k=1}^K \pi_k^\star(\widehat{B})\left[ D \widetilde{B}_k (\widehat{B}) - \frac{1}{2} \sum_{l=1}^K \pi_l^\star (\widehat{B}) \norm{\widetilde{B}_k(\widehat{B}) - \widetilde{B}_l(\widehat{B})}_F^2\right]
\end{align*}
Thus, the risk gains of $\widetilde{B}_{mix}$ is a (posterior) weighted combination of risk gains of $D\widetilde{B}_k$ and a term that gives the shrinkage conflict between rules $\widetilde{B}_k$ and $\widetilde{B}_l$.
}

\section{Covariance estimation simulation}

In this section, our main focus is to compare the risk improvement in estimating a covariance matrix $\Sigma$ using the proposed estimator $\widetilde{\Sigma}(h) = U \Delta(h) U^\T$ where the observed sample covariance $S $ has the spectral decomposition $S = U \Lambda U^\T$. Here, $\Delta(h) = \text{diag}(\delta(h))$ and $\delta(h)$ is constructed following Theorem 6.
We choose $h$ according to \cite{ledoit2022quadratic} denoted as $h_0$, and the proposed data-dependent method developed in Section 3.1,
denoted as $\hat{h}$. Additionally, we consider an oracle choice of $h$ which is computed as follows. Given a range of values of $h$, we compute $\widetilde{\Sigma}(h)$, and compute the corresponding risk $\mathbb{E}[\mathbf{L}(\Sigma^{-1}, \widetilde{\Sigma}(h)^{-1})]$ approximated by taking Monte Carlo averages over 100 replications. We then select the oracle $h$ for which the risk is minimum. We write the oracle choice of $h$ as $\underline{h}$ and the resulting estimate of $\Sigma$ as $\underline{\Sigma} = \widetilde{\Sigma}(\underline{h})$. Clearly, this selection of $h$ requires knowledge of the true $\Sigma$.
This forms the baseline of risk improvement that we would consider while computing the percentage relative improvement in average loss (PRIAL) defined as
\begin{equation}
    \text{PRIAL}[\Sigma(h)] = \dfrac{\mathbb{E}[\mathbf{L}(\Sigma^{-1}, S^{-1})] - \mathbb{E}[\mathbf{L}(\Sigma^{-1}, \widetilde{\Sigma}(h)^{-1}]}{\mathbb{E}[\mathbf{L}(\Sigma^{-1}, S^{-1})] - \mathbb{E}[\mathbf{L}(\Sigma^{-1}, \underline{\Sigma}^{-1}]} \times 100\%
\end{equation}
We report the PRIAL as it varies over the concentration ratio $p/n$ over the interval $\{0.1, 0.2,\\ \ldots, 0.9\}$. Here, the product $np$ is fixed at 20000, and then $(n, p)$ are chosen such that the concentration ratio is closest to elements of the set above. The true covariance matrix in all these cases were assumed to have a factor structure, i.e. $\Sigma = \Xi \Xi^\T + \mathrm{I}_p$ where $\Xi \in \mathbb{R}^{p \times k}$. We set $k = 5$ for all cases and generate elements of $\Xi$ independently from $\Gauss(0,1)$. All expectations are approximated by 100 independent Monte Carlo replications.
\begin{figure}
    \centering
    \includegraphics[width=0.8\textwidth, height = 6cm]{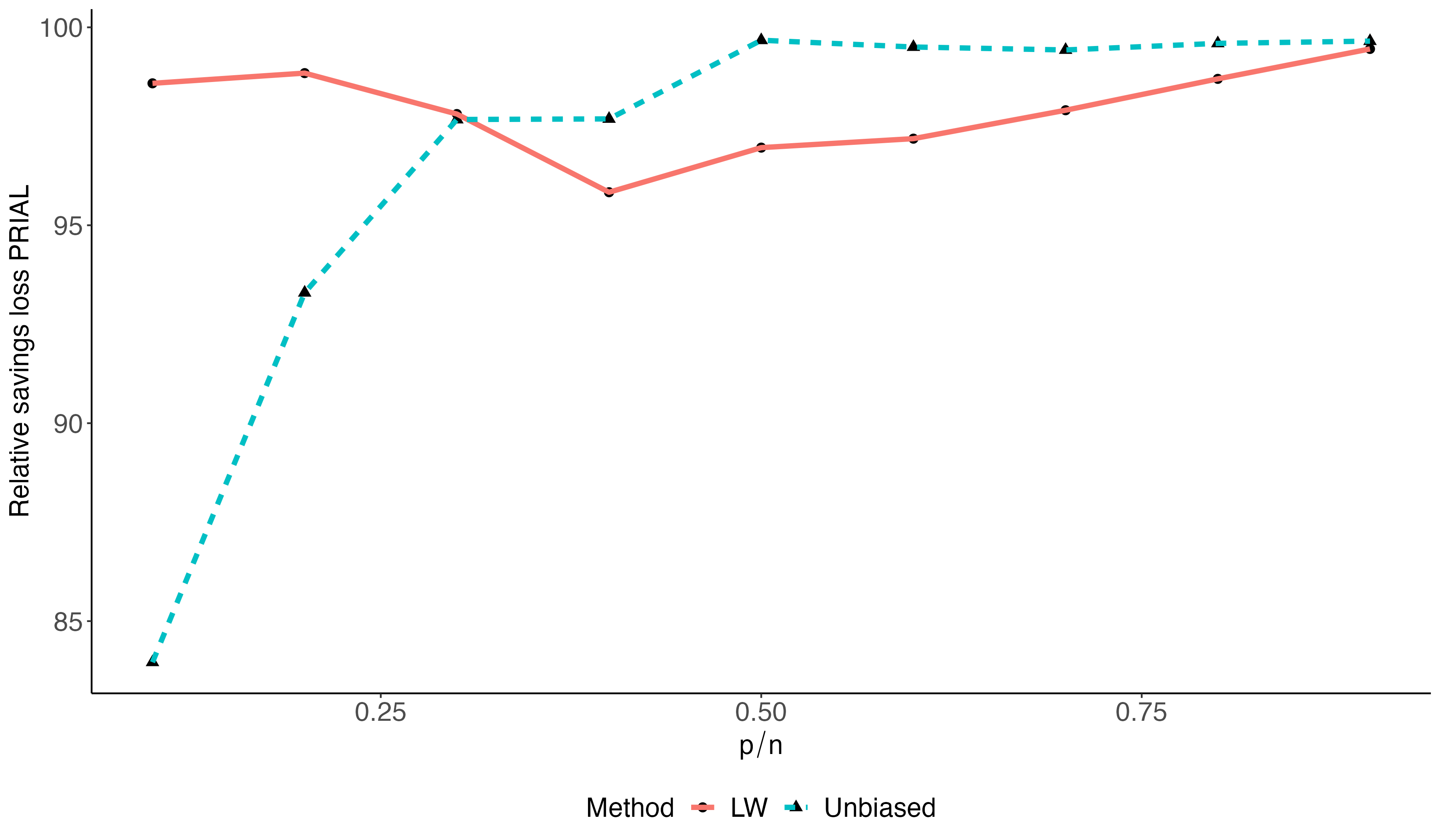}
    \caption{Relative savings loss PRIAL with $h$ chosen following \cite{ledoit2022quadratic} (LW) and the unbiased risk estimate procedure (unbiased) when the concentration ratio $p/n$ varies over the interval $\{0.1, 0.2, \ldots, 0.9\}$.}
    \label{fig:PRIAL}
\end{figure}

The result is summarized in Figure \ref{fig:PRIAL}, which shows that for smaller concentration ratios, the default choice suggested by Ledoit-Wolf gives better results compared to the data-dependent method developed here. However, the benefit of the data-dependent method becomes apparent as we move into cases with a larger $p/n$,
where the risk improvements are significant.

{
\section{Simulations for  high-dimensional settings}\label{sec:supp_sim_high_sec}

In this section, we extend our experiments to settings where the number of tissues/sources is $n = 200$ and $p = 100, 120, 150, 500$. The number of data points within each tissue/source is $N = 1000$ when $(n , p) = (200, 500)$. For all other cases, we set $N = 200$. This is to ensure that the OLS estimator exists for all of these settings. For these high-dimensional examples, we compared our linear shrinkage estimator with default choice of $h$ (LS), the linear shrinkage estimator with the data-dependent choice of $h$ (ULS) versus other methods - UTMOST, ISA. 
We do not report the proposed local linear shrinkage estimator as the results were almost identical with LS/ULS. Also, the existing implementation of MASH frequently ran into convergence issues in these high-dimensional settings. Hence, we do not report results for this method as well. Finally, we note that for the case $(n,p) = (200, 500)$, the ULS estimator does not work since it assumes $p<n$. These results are reported in Tables \ref{tab:MSE_high_dim_1_main}, \ref{tab:MSE_high_dim_2_main}, \ref{tab:PE_high_dim_1_main}, and \ref{tab:PE_high_dim_2_main}. Our conclusions from these new experiments remain consistent from our previous experiments - the proposed method performs significantly better across all different settings. 

\begin{table}[]
    \centering
   \scalebox{0.8}{
    \begin{tabular}{c|c|cccc|cccc}
    \hline
         & & \multicolumn{4}{c}{Low-rank} & \multicolumn{4}{c}{Approximately sparse} \\
         \hline
        & &  LS & ULS & UTMOST & ISA &  LS & ULS & UTMOST & ISA \\
        \hline
          & $\rho = 0$ & 0.0009 & 0.0008 & 2.89 & 0.0003 & 0.002 & 0.002 & 0.013 & 0.039 \\
         \multirow{3}{*}{$p = 100$} & $ \rho = 0.5$ & 0.001 & 0.001 & 2.83 & 0.0006 & 0.004 & 0.004 & 0.015 & 0.04 \\
                   & $\rho = 0.8$ & 0.004 & 0.004 & 4.79 & 0.001 & 0.01 & 0.01 & 0.03 & 0.04 \\
                   \hline
        & $\rho = 0$ & 0.001 & 0.0009 & 2.99 & 0.0003 & 0.002 & 0.002 & 0.015 & 0.039 \\
         \multirow{3}{*}{$p = 120$} & $ \rho = 0.5$ & 0.002 & 0.001 & 3.16 & 0.0007 & 0.004 & 0.004 & 0.015 & 0.041   \\
                   & $\rho = 0.8$ &  0.005 & 0.004 & 5.08 & 0.001 & 0.01 & 0.01 & 0.03 & 0.04 \\
                   \hline
         & $\rho = 0$ & 0.001 & 0.001 & 2.63 & 0.0006 & 0.002 & 0.002 & 0.015 & 0.04 \\
         \multirow{3}{*}{$p = 150$} & $ \rho = 0.5$ &  0.002 & 0.002 & 3.55 & 0.001 & 0.005 & 0.005 & 0.016 & 0.045  \\
                   & $\rho = 0.8$ & 0.005 & 0.005 & 5.79 & 0.002 & 0.012 & 0.012 & 0.031 & 0.049\\
                   \hline
        & $\rho = 0$ & 0.001 & - & 2.78 & 0.0002 & 0.002 & - & 0.016 & 0.07 \\
         \multirow{3}{*}{$p = 500$}& $ \rho = 0.5$ & 0.002 & - & 4.90 & 0.0003 & 
         0.004 & - & 0.027 & 0.09\\
                   & $\rho = 0.8$ & 0.005 & - & 6.85 & 0.001 & 0.009 & - & 0.036 & 0.09 \\
                   \hline
    \end{tabular}
  }
    \caption{MSE of different estimators over 20 replications in high-dimensional settings.}
    \label{tab:MSE_high_dim_1_main}
\end{table}

\begin{table}[]
    \centering
   \scalebox{0.8}{
    \begin{tabular}{c|c|cccc|cccc}
    \hline
         & & \multicolumn{4}{c}{Horseshoe} & \multicolumn{4}{c}{Mixture} \\
         \hline
        & &  LS & ULS & UTMOST & ISA & LS & ULS & UTMOST & ISA \\
        \hline
          & $\rho = 0$  & 0.002 & 0.002 & 13.01 & 14.57 & 0.002 & 0.002 & 23.93 & 73.27\\
         \multirow{3}{*}{$p = 100$} & $ \rho = 0.5$ &  0.004 & 0.004 & 14. 21 & 11.54 & 0.005 & 0.005 & 22.86 & 71.60\\
                   & $\rho = 0.8$ &  0.01 & 0.01 & 16.88 & 13.78 & 0.01 & 0.01 & 28.37 & 71.05\\
                   \hline
        & $\rho = 0$ &  0.002 & 0.002 & 17.42 & 15.83 & 0.002 & 0.002 & 23.03 & 68.91\\
         \multirow{3}{*}{$p = 120$} & $ \rho = 0.5$ &  0.005 & 0.005 & 18.21 & 15.21 & 0.005 & 0.005 & 24.01 & 69.68  \\
                   & $\rho = 0.8$ &  0.013 & 22.81 & 16.37 & 0.012 & 0.012 & 25.77 & 72.31 \\
                   \hline
         & $\rho = 0$ &  0.003 & 0.003 & 15.21 & 13.87 & 0.01 & 0.01 & 22.71 & 53.65\\
         \multirow{3}{*}{$p = 150$} & $ \rho = 0.5$ &   0.005 & 0.005 & 15.52 & 14.50 & 0.005 & 0.005 & 24.93 & 57.90 \\
                   & $\rho = 0.8$ & 0.01 & 0.01 & 16.90 & 14.66 & 0.01 & 0.01 & 28.72 & 65.22\\
                   \hline
        & $\rho = 0$ &  0.003 & - & 150.01 & 45.04 & 0.002 & - & 25.03 & 67.32\\
         \multirow{3}{*}{$p = 500$}& $ \rho = 0.5$ & 0.004 & - & 153.77 & 44.39 & 0.004 & - &  48.57 & 74.01\\
                   & $\rho = 0.8$  & 0.01 & - & 152.87 & 59.31 & 0.005 & - & 53.97 & 76.29\\
                   \hline
    \end{tabular}
   }
    \caption{MSE of different estimators over 20 replications in high-dimensional settings.}
    \label{tab:MSE_high_dim_2_main}
\end{table}

\begin{table}[]
    \centering
   \scalebox{0.8}{
    \begin{tabular}{c|c|cccc|cccc}
    \hline
         & & \multicolumn{4}{c}{Low-rank} & \multicolumn{4}{c}{Approximately sparse} \\
         \hline
        & &  LS & ULS & UTMOST & ISA &  LS & ULS & UTMOST & ISA  \\
        \hline
          & $\rho = 0$ & 1.09 & 1.08 & 146.92 & 1.01 & 1.22 & 1.20 & 2.43 & 5.01  \\
         \multirow{3}{*}{$p = 100$} & $ \rho = 0.5$ &1.07 & 1.05 & 149.99 & 1.01 & 1.27 & 1.25 & 1.77 & 6.35  \\
                   & $\rho = 0.8$ & 1.12 & 1.08 & 86.27 & 1.24 & 1.20 & 1.14 & 1.54 & 5.54 \\
        \hline
        & $\rho = 0$ & 1.07 & 1.06 & 436.4 & 1.01 & 1.33 & 1.32 & 2.72 & 5.88 \\
         \multirow{3}{*}{$p = 120$} & $ \rho = 0.5$ & 1.12 & 1.06 & 189.70 & 1.06 & 1.20 & 1.17 & 1.73 & 4.85  \\
                   & $\rho = 0.8$ & 1.16 & 1.12 & 133.76 & 1.22 & 1.28 & 1.26 & 1.85 & 7.41 \\
        \hline
         & $\rho = 0$ & 1.15 & 1.13 & 167.41 & 1.06 & 1.43 & 1.35 & 3.33 & 7.10 \\
         \multirow{3}{*}{$p = 150$} & $ \rho = 0.5$ & 1.21 & 1.05 & 183.3 & 1.11 & 1.48 & 1.43 & 2.33 & 7.14 \\
                   & $\rho = 0.8$ & 1.15 & 1.12 & 182.83 & 1.10 & 1.47 & 1.40 & 2.05 & 5.42 \\
                   \hline
        & $\rho = 0$ & 1.47 & - & 713.4 & 1.08 & 1.93 & - & 8.60 & 20.17 \\
         \multirow{3}{*}{$p = 500$}& $ \rho = 0.5$ & 1.57 & - & 881.4 & 1.12 & 1.96 & - & 8.03 & 16.94 \\
                   & $\rho = 0.8$ & 1.47 & - & 525.2 & 1.20 & 1.91 & - & 4.65 & 36.13 \\
                   \hline
    \end{tabular}
   }
    \caption{{PE} of different estimators over 20 replications in high-dimensional settings.}
    \label{tab:PE_high_dim_1_main}
\end{table}

\begin{table}[]
    \centering
   \scalebox{0.8}{
    \begin{tabular}{c|c|cccc|cccc}
    \hline
         & & \multicolumn{4}{c}{Horseshoe} & \multicolumn{4}{c}{Mixture} \\
         \hline
        & &  LS & ULS & UTMOST & ISA &  LS & ULS & UTMOST & ISA  \\
        \hline
          & $\rho = 0$ &  1.23 & 1.20 & 406.1 & 1115.2 & 1.28 & 1.21 & 1072.9 & 4938.9 \\
         \multirow{3}{*}{$p = 100$} & $ \rho = 0.5$ & 1.26 & 1.19 & 401.7 & 1025.5 & 1.24 & 1.07 & 1158.5 & 5877.7  \\
                   & $\rho = 0.8$ &  1.26 & 1.24 & 352.2 & 1067.2 & 1.23 & 1.17 & 953.2 & 6744.6\\
                   \hline
        & $\rho = 0$ &  1.29 & 1.25 & 652.8 & 1724.1 & 1.39 & 1.33 & 2921.2 & 8493.4\\
         \multirow{3}{*}{$p = 120$} & $ \rho = 0.5$ &  1.35 & 1.27 & 768.1 & 1672.5 & 1.29 & 1.12 & 1402.7 &  9736.2 \\
                   & $\rho = 0.8$ &  1.31 & 1.29 & 807.3 & 1227.7 & 1.30 & 1.26 & 1276.3 & 8506.1\\
        \hline
         & $\rho = 0$ &  1.47 & 1.42 & 1382.8 & 2207.8 & 1.39 & 1.33 & 1610.5 & 8432.9\\
         \multirow{3}{*}{$p = 150$} & $ \rho = 0.5$ &  1.53 & 1.50 & 1872.2 & 1483.6 & 1.41 & 1.32 & 1849.1 & 7992.8\\
                   & $\rho = 0.8$ &  1.55 & 1.52 & 1923.5 & 1973.1 & 1.46 & 1.44 & 1689.6 & 8112.6\\
                   \hline
        & $\rho = 0$ &  2.07 & - & 5 $\times 10^4$ & 23328.9 & 1.93 & - & 11913.1 & 31440.7\\
         \multirow{3}{*}{$p = 500$}& $ \rho = 0.5$ &  2.43 & - & 5 $\times 10^4$ & 20461.4 & 1.96 & - & 12135.4 & 34729.4\\
                   & $\rho = 0.8$ &  2.63 & - & $5 \times 10^4$ & 22646.5 & 1.99 & - & 12282.4 & 33947.3\\
                   \hline
    \end{tabular}
   }
    \caption{{PE} of different estimators over 20 replications in high-dimensional settings.}
    \label{tab:PE_high_dim_2_main}
\end{table}

}

{
\section{Analysis of Yeast Cell Cycle dataset}\label{supp_real_app_sec}

 In this section, we apply the proposed LS method to the Yeast Cell Cycle dataset \cite{chun2010sparse},
and compare it with the existing ordinary least squares (OLS),  the Unified Test for MOlecular SignaTures (UTMOST) \citep{hu2019statistical}, the Iterated stable autoencoder (ISA) \citep{josse2016bootstrap}, and the Multivariate Adaptive Shrinkage (MASH) \citep{kim2024flexible} methods.
The Yeast Cell Cycle dataset contains $18$ responses and $106$ covariates for $542$ genes (i.e., sample size $N=542$). 
Each of the responses 
represents mRNA levels measured at every 7 minutes during
119 minutes. The covariates consist of the binding information for $106$ transcription
factors.

We conduct a $10$-fold cross-validation analysis for each response to evaluate the prediction performance of each method.
Specifically, for each response, we randomly split all the observed samples into $10$ equally sized folds, and name them Folds $1-10$. For each $i = 1,\dots,10$, we treat Fold $i$ in all the tissues as a testing set, and the remaining folds in all the tissues together as a training set. We then use the average prediction mean squared error (PMSE) across all the folds and responses for the evaluation of prediction accuracy.

The average PMSEs of all the methods are provided in Table \ref{Table_PMSE_yeast}. The results show that the proposed method outperforms all the existing methods in terms of the average PMSE. We also provide the PMSE of each response in the left plot of Figure \ref{yeast_plots}, which shows that the proposed method produces the smallest PMSE among all the methods for most of responses.
In addition, we calculate the Pearson correlation between the predicted values and true values for each response and each method, and present the results in the right plot of Figure \ref{yeast_plots}.
We can observe that the correlation corresponding to the proposed method is higher than those of other methods for most responses.

In summary, the proposed method works well in the multi-response regression for the Yeast Cell Cycle dataset  which is not related to TWAS.

\begin{table}[hbt!]
\begin{center}
\vspace{2mm}
\resizebox{0.9\columnwidth}{!}{%
\begin{tabular}{l|ccccc}
\hline
Methods & Proposed (LS) & OLS & UTMOST & ISA & MASH \\ 
\hline
PMSE & 0.164 (0.095) & 0.216 (0.103)  & 0.188 (0.091) & 0.171 (0.089) & 0.188 (0.087) \\
\hline
\end{tabular}}
\end{center}
\vspace{-5mm}
\caption{Average prediction mean squared errors (PMSEs) for different methods, with standard deviation (SD) in the parentheses. 
``PMSE'' represents the average PMSE across all the 18 responses.
}\label{Table_PMSE_yeast}
\end{table}

\begin{figure}
    \centering
    \includegraphics[height = 7cm, width = 0.45\textwidth]{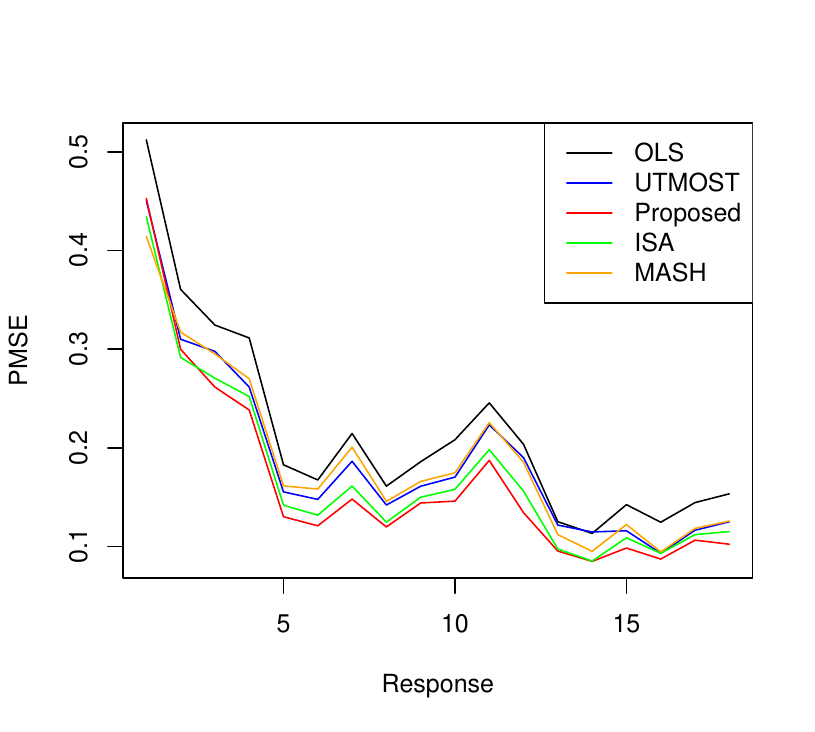}
     \includegraphics[height = 7cm, width = 0.45\textwidth]{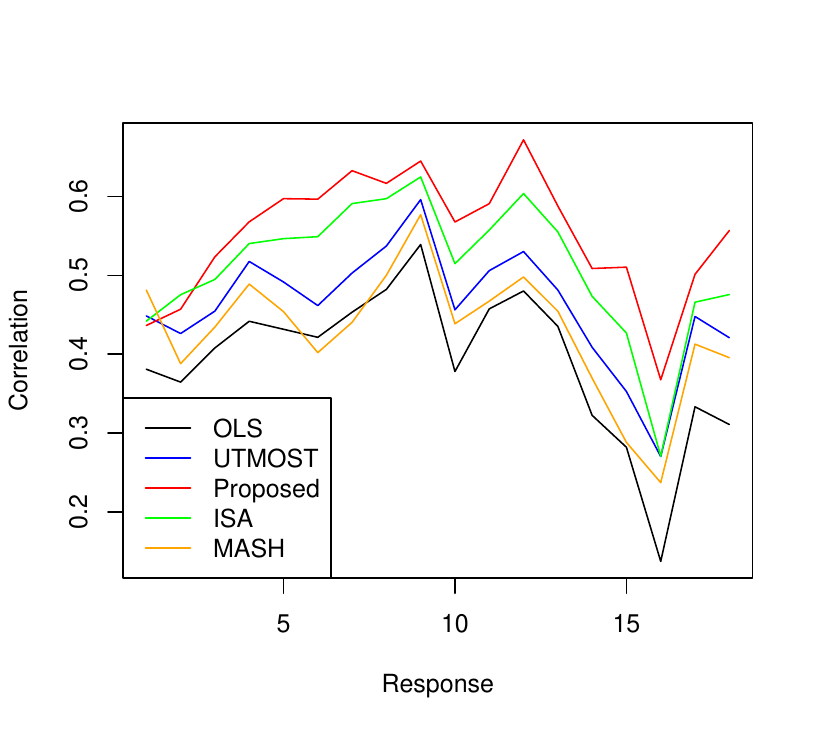}
    \caption{Prediction mean squared error (PMSE) and correlation for each response and each the method.}
    \label{yeast_plots}
\end{figure}

}

\section{Proofs of Sections 2
and 3 
}
\subsection{Proof of Proposition 1
}
\label{prop:equivalence_proof}
Without loss of generality, set $Q = \mathrm{I}$. Under the assumed model $\mathbb{E}(B \mid \widehat{B}) = \widehat{B}(\mathrm{I} - C) $. Thus,
\begin{align*}
&\mathbb{E}[\mathbf{L}(B, \widetilde{B})] = \mathbb{E}_{\widehat{B}}\mathbb{E}_{B\mid \widehat{B}}[\mathbf{L}(B, \widetilde{B} )]\\
= &\mathbb{E}_{\widehat{B}}\mathbb{E}_{B \mid \widehat{B}}\tr [\{\widehat{B} (\mathrm{I} -  \widetilde{C})- B\}\{\widehat{B} (\mathrm{I} -  \widetilde{C})- B\}^\T] \\
=& \mathbb{E}_{\widehat{B}} \mathbb{E}_{B\mid \widehat{B}}[\tr (B B^\T)] 
- \mathbb{E}_{\widehat{B}}\mathbb{E}_{B \mid \widehat{B}} \left[\tr \{B  (I - \widetilde{C}^\T ) \widehat{B}^\T)\}\right] - \mathbb{E}_{\widehat{B}}\mathbb{E}_{B\mid \widehat{B}} \left[\tr \{ \widehat{B}(\mathrm{I} -  \widetilde{C})  B^\T\} \right] + \\
& +  \mathbb{E}_{\widehat{B}} \mathbb{E}_{B\mid \widehat{B}} [\tr \{ \widehat{B}(\mathrm{I} -  \widetilde{C})  (\mathrm{I} -    \widetilde{C}^\T) \widehat{B}^\T\}] \\
=& \mathbb{E}_{\widehat{B}} \mathbb{E}_{B\mid \widehat{B}}[\tr (B B^\T)]
- \mathbb{E}_{\widehat{B}}  \left[\tr \{\widehat{B}(\mathrm{I} - C ) (\mathrm{I} - \widetilde{C}^\T)\widehat{B}^\T\}\right] - \mathbb{E}_{\widehat{B}}  \left[\tr \{\widehat{B}(\mathrm{I} - \widetilde{C} ) (\mathrm{I} - C^\T)\widehat{B}^\T\}\right]\\
+ & \mathbb{E}_{\widehat{B}}  [\tr \{ \widehat{B}(\mathrm{I} -  \widetilde{C}) (\mathrm{I} -    \widetilde{C}^\T) \widehat{B}^\T\}] .
\end{align*}
Since $\mathbb{E}_{\widehat{B}} \mathbb{E}_{B\mid \widehat{B}} [\tr (B B^\T)] = \mathbb{E}_{\widehat{B}}\left\{\mathbb{E}_{B\mid \widehat{B}}\left[\sum_{t=1}^T  {\beta^{(t)}}^\T  \beta^{(t)} \right]\right\}$, and 
\begin{align*}
\mathbb{E}_{B\mid \widehat{B}} [\tr (B\Psi B^\T)] 
=& T \tr[ (\mathrm{I} - C^\T)] + \mathbb{E}_{\widehat{B}} [ \tr (\widehat{B}(\mathrm{I} -  C) (\mathrm{I} - C^\T)\widehat{B}^\T)],
\end{align*}
we have
\begin{align*}
\mathbb{E}_{\widehat{B}} \mathbb{E}_{B\mid \widehat{B}}[\mathbf{L}(B, \widetilde{B})] &= \mathbb{E}_{\widehat{B}} [ \tr (\widehat{B}(\mathrm{I} -  C) (\mathrm{I} - C^\T)\widehat{B}^\T)] - \mathbb{E}_{\widehat{B}}  \left[\tr \{\widehat{B}(\mathrm{I} - C ) (\mathrm{I} - \widetilde{C}^\T)\widehat{B}^\T\}\right] \\
- \mathbb{E}_{\widehat{B}}  &\left[\tr \{\widehat{B}(\mathrm{I} - \widetilde{C} ) (\mathrm{I} - C^\T)\widehat{B}^\T\}\right] + \mathbb{E}_{\widehat{B}}  [\tr \{ \widehat{B}(\mathrm{I} -  \widetilde{C})  (\mathrm{I} -    \widetilde{C}^\T) \widehat{B}^\T\}] +\text{constant}\\
&=\mathbb{E}_{\widehat{B}}[\tr \{(\mathrm{I} - C) - (\mathrm{I} - \widetilde{C})\}  \{(\mathrm{I} - C) - (\mathrm{I} - \widetilde{C})\}^\T \widehat{B}^\T \widehat{B}] + \text{constant}\\
& = \mathbb{E}_{\widehat{B}}[\tr \{(\widetilde{C} - C) (\widetilde{C} - C)^\T\}\widehat{B}^\T \widehat{B}]+ \text{constant}\\
&= \mathbb{E}_{\widehat{B}}[\tr \{(\widetilde{\Sigma}^{-1} - \Sigma^{-1})^2 \}\widehat{B}^\T \widehat{B}] +\text{constant},
\end{align*}
which was to show.

\subsection{Proof of Theorem 5 
}
Recall that
    \begin{align*}
    \mathbf{L}_{m,n}(\Sigma_n^{-1}, \tilde{\Sigma}_n^{-1}; \mathrm{I}) 
    & =  \int_{-\infty}^{\infty} x^m d\Phi_n^{(-2)} (x) - 2 \int_{-\infty}^{\infty} \frac{x^m}{\delta_{n}(x)} d\Phi_{n}^{(-1)}(x) + \int_{-\infty}^{\infty} \frac{x^m}{\delta^2_n(x)} dF_n(x).
\end{align*}
Since $x^m$ is a continuous function and $\Phi_n^{(-2)} (x)$ converges weakly to $\Phi^{(-2)} (x)$ by Lemma 1,
\begin{align*}
    \int_{-\infty}^{\infty} x^m d\Phi_n^{(-2)} (x) \overset{a.s.}{\to} \int_{-\infty}^{\infty} x^m d\Phi^{(-2)} (x).
\end{align*}
Assumption 4
and the continuous mapping theorem imply that 
\begin{align*}
    \frac{x^m}{\delta_{n}(x)} \overset{a.s.}{\to} \frac{x^m}{\delta(x)} \quad \text{ and } \quad 
    \frac{x^m}{\delta_{n}^2(x)} \overset{a.s.}{\to} \frac{x^m}{\delta^2(x)}
\end{align*}
for $x\in\supp (F)$.
 In addition, the convergence is {uniform} for $x\in \cup_{k=1}^K [a_k+\eta, b_k-\eta]$ for any small $\eta>0$.
    Furthermore, there exists a finite nonrandom constant $\widetilde{M}$  such that $|x^m/\delta_n(x)|$ and $|x^m/\delta_n^2(x)|$ are uniformly bounded 
{from above} by $\widetilde{M}$ almost surely for all $x\in \cup_{k=1}^K [a_k-\eta, b_k+\eta]$, large $n$, and small $\eta>0$.

By \citep[Lemma 11.1]{ledoit2018optimal}, under our working assumptions, $\Phi_n^{(-1)} (x)$ converges weakly to $\Phi^{(-1)} (x)$, $\Phi^{(-1)} (x)$ is continuously differentiable on $\mathbb{R}$, and  
    \begin{align*}
        \Phi^{(-1)}(x) = \int_{-\infty}^x \phi^{(-1)}(\xi) dF(\xi),
    \end{align*}
for $\forall x \in \mathbb{R}$. Note that, by \cite{silverstein1995analysis}, \cite{silverstein1995empirical}, and \cite{silverstein1995strong}, we also have
\begin{align*}
    F_n(x) \overset{a.s.}{\to} F(x) \quad \forall x\in\mathbb{R},
\end{align*}
and $F(x)$ is continuously differentiable.
{
Thus, similar to the proofs of \citep[Lemma 11.2]{ledoit2018optimal}, we can have
\begin{align*}
    \int_{-\infty}^{\infty} \frac{x^m}{\delta_{n}(x)} d\Phi_{n}^{(-1)}(x) \overset{a.s.}{\to} \sum_{k=1}^K \int_{a_k}^{b_k} \frac{x^m}{\delta(x)}\phi^{(-1)}(x) dF(x), 
\end{align*}
and 
\begin{align*}
        \int_{-\infty}^{\infty} \frac{x^m}{\delta^2_n(x)} dF_n(x) \overset{a.s.}{\to}     \sum_{k=1}^K \int_{a_k}^{b_k} \frac{x^m}{\delta^2(x)} dF(x).
\end{align*}
\subsection{Proof of Corollary 1 
}
To find a function $\delta (x)$ that minimizes the limit in Equation (3.4),
for each fixed $x$, we take derivative of 
\begin{align*}
-2\frac{x^m}{\delta(x)}\phi^{(-1)}(x)  
    + \frac{x^m}{\delta^2(x)}    
\end{align*}
with respect to $\delta$, and let it equal zero. Here we do not consider the first term in Equation (3.4)
since it does not involve $\delta(x)$. 
Then, the minimizer is $1/\phi^{(-1)}(x)$.
\subsection{Proof of Theorem 6
}
By the proof of \cite[Theorem 3.1]{ledoit2022quadratic}, we obtain $\delta_{n}^* (x) \overset{p}{\to} \delta^*(x)$ for any $x \in \supp (F)$.
}
\section{Proof of Theorem 7
}
We record the following results from \cite{haddouche2021scale,boukehil2021estimation} which will be useful in proving Theorem 7. 
\begin{lemma}\label{lemma:stein_haff}
    Suppose $Q \sim \mathrm{W}_p(\Sigma, n), \, n>p$ and $G(Q)$ is a $p\times p$ weakly differentiable matrix function. If $\mathbb{E}_{\Sigma}\left[\mid \mathrm{tr}\{\Sigma Q G(Q)\}\mid\right] < \infty$, then 
    $$\mathbb{E}_{\Sigma}\left[ \mathrm{tr}\{\Sigma^{-1} Q G(Q)\}\right] = \mathbb{E}_{\Sigma} \left[ \mathrm{tr}\left\lbrace (n-p - 1) G(Q) + 2 D_{Q}(G(Q)^\T Q) \right\rbrace\right],$$
    where $D_{Q}$ is a differential operator defined as $D_Q = \left\lbrace \frac{1}{2}(1 + d_{ij}) \frac{\partial}{\partial Q_{ij}}\right\rbrace$ for $1\leq i,j\leq p$ with $d_{ij} = 1$ if $i =j$ and $0$ otherwise.
\end{lemma}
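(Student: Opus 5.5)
The plan is to prove this Stein--Haff type identity by Gaussian integration by parts on the matrix $Z$ underlying the Wishart matrix, rather than working with the Wishart density directly. Since $n>p$, we can represent $Q = Z^\T Z$ with $Z$ an $n\times p$ matrix whose rows are i.i.d. $\Gauss(0,\Sigma)$. Then $\mathrm{tr}\{\Sigma^{-1} Q G(Q)\} = \mathrm{tr}\{Z\Sigma^{-1}\, G(Q) Z^\T\}$. The key observation is that the rows $z_i$ satisfy $\Sigma^{-1} z_i \, f = -\nabla_{z_i} f$ applied to the density weight. Consequently Stein's lemma gives $\mathbb{E}[(\Sigma^{-1}z_i)_a\, h(Z)] = \mathbb{E}[\partial h/\partial Z_{ia}]$ for weakly differentiable $h$ with the stated integrability.

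Applying this with $h = (G(Q)Z^\T)_{a i}$ and summing over $i,a$ yields
\[
\mathbb{E}\,\mathrm{tr}\{\Sigma^{-1}QG(Q)\} = \mathbb{E}\sum_{i,a}\frac{\partial (G(Q)Z^\T)_{ai}}{\partial Z_{ia}}.
\]
By the product rule this splits into two terms. The derivative hitting $Z^\T$ gives $n\,\mathrm{tr}\,G(Q)$. The derivative hitting $G(Q)$ is handled by the chain rule through $Q_{kl}=\sum_i Z_{ik}Z_{il}$, using $\partial Q_{kl}/\partial Z_{ia} = \delta_{ka}Z_{il}+\delta_{la}Z_{ik}$. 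This second piece becomes
\[
\sum_{a,b,k,l}\frac{\partial G_{ab}}{\partial Q_{kl}}\,(\delta_{ka}Q_{lb}+\delta_{la}Q_{kb}).
\]

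The main obstacle is bookkeeping. Here $G$ is regarded as a function of the symmetric matrix $Q$, with $Q_{kl}=Q_{lk}$ not independent. We must identify the combination above with $\mathrm{tr}\{2D_Q(G^\T Q)\} - (p+1)\mathrm{tr}\,G$, where $D_Q$ carries the factor $\tfrac12(1+d_{ij})$ precisely to convert derivatives with respect to symmetric coordinates into the symmetric sum. I would expand $\mathrm{tr}\{D_Q(G^\T Q)\} = \sum_{i,j}\tfrac12(1+d_{ij})\,\partial (G^\T Q)_{ji}/\partial Q_{ij}$, again by the product rule. The part differentiating $Q$ gives $\sum_{i,j}\tfrac12(1+d_{ij})G_{ij}\cdot(\text{count of }Q_{ij}\text{ appearances})$, which should produce $\tfrac{p+1}{2}\mathrm{tr}\,G$. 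The part differentiating $G$ should match half of the chain-rule term. Carefully tracking the diagonal versus off-diagonal cases is exactly where the $(1+d_{ij})/2$ factor cancels the doubled contribution of off-diagonal entries. Combining the pieces gives $n\,\mathrm{tr}\,G + 2\mathrm{tr}\,D_Q(G^\T Q) - (p+1)\mathrm{tr}\,G$, which is the claim.

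Finally, I would justify the integration by parts. The integrability hypothesis, together with weak differentiability of $G$ (hence of $h$ as a function of $Z$), allows Stein's lemma in its weak form, since boundary terms vanish by the Gaussian decay. Alternatively, if the computation becomes unwieldy, I would cite \cite{haff1979estimation} and \cite{haddouche2021scale}, where this exact identity is established.
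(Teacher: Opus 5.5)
The paper does not prove this lemma; it records it from \cite{haddouche2021scale,boukehil2021estimation}. It is Haff's identity $\mathbb{E}\,\mathrm{tr}(\Sigma^{-1}H)=\mathbb{E}[(n-p-1)\mathrm{tr}(Q^{-1}H)+2\,\mathrm{tr}\,D_QH]$ with $H=QG$, combined with $\mathrm{tr}\,D_Q(QG)=\mathrm{tr}\,D_Q(G^\T Q)$. The classical derivation integrates by parts against the Wishart density, so $n-p-1$ comes out in one step from $D_Q|Q|^{(n-p-1)/2}$.

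Your route instead applies Stein's lemma to the Gaussian rows of $Z$ with $Q=Z^\T Z$, which is exactly the paper's own setting $\underline{S}=Z^\T Z$. You get $n\,\mathrm{tr}\,G$ from differentiating $Z^\T$, and $-(p+1)\mathrm{tr}\,G$ from the $Q$-derivative inside $D_Q(\cdot)$. The bookkeeping you flagged does work out:
\begin{itemize}
\item running the chain rule over the free coordinates $k\le l$ inserts exactly the weight $\tfrac12(1+\delta_{kl})$;
\item the $G$-derivative term is then twice the $G$-derivative part of $\mathrm{tr}\,D_Q(G^\T Q)$, once the transpose issue below is fixed;
\item the $Q$-derivative part is $\sum_{i,j}\tfrac12(1+d_{ij})G_{jj}=\tfrac{p+1}{2}\mathrm{tr}\,G$.
\end{itemize}
Your approach gives a self-contained argument using only univariate Gaussian integration by parts and the chain rule. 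It has two costs. First, it requires integer $n$; that is harmless here since $n$ counts data sources, but the density-based proof has no such restriction. Second, Stein's lemma needs integrability of each term $(\Sigma^{-1}z_i)_a h$ and $\partial h/\partial Z_{ia}$ separately. That is more than the single trace condition in the statement (which is presumably meant with $\Sigma^{-1}$ in place of $\Sigma$), so your closing justification is too quick.

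There is one slip to fix. By cyclicity, $\mathrm{tr}\{\Sigma^{-1}QG\}=\mathrm{tr}\{ZG\Sigma^{-1}Z^\T\}$, whereas your $\mathrm{tr}\{Z\Sigma^{-1}GZ^\T\}$ equals $\mathrm{tr}\{\Sigma^{-1}GQ\}$. The difference is $\mathrm{tr}\{(G\Sigma^{-1}-\Sigma^{-1}G)Q\}$, which vanishes for symmetric $G$ but not in general. Consistently with this, your choice $h=(GZ^\T)_{ai}$ produces the combination $2\,\mathrm{tr}\,D_Q(GQ)-(p+1)\mathrm{tr}\,G$, not $2\,\mathrm{tr}\,D_Q(G^\T Q)-(p+1)\mathrm{tr}\,G$. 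So as written you prove the identity for $G^\T$ in place of $G$. Take $h=(G^\T Z^\T)_{ai}$ instead, or relabel $G\to G^\T$ at the end. This does not affect the application in Theorem \ref{thm:SURE}, where $G=\widetilde{\Sigma}^{-1}$ is symmetric.
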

\begin{lemma}\label{lemma:D_operator}
    Suppose $S = P L P^\T$ and $G(S) = P \Psi(L) P^\T$ are the spectral decomposition of S and $G(S)$, respectively for symmetric positive definite $S$. Then,
    $$D_{S}\{G(S)\} = P \Psi^{(1)} P^\T + \frac{1}{2} \mathrm{tr}\{L^{-1}\Psi(L)\} (\mathrm{I}_p - PP^\T),$$
    where the $j$-th element of the diagonal matrix $\Psi^{(1)} $ is $\psi_j^{(1)} = \frac{\partial \psi_i}{\partial l_i} + \frac{1}{2} \sum_{i\neq j}^{p} \dfrac{\psi_j - \psi_i}{l_j - l_i}$.
\end{lemma}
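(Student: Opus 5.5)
The plan is a direct computation with first-order perturbation formulas for the spectral decomposition $S = PLP^\T$. The eigenvalues of $S$ are distinct almost surely when $S$ is Wishart-distributed. So I will first prove the identity on the open set of symmetric positive definite matrices with simple spectrum. On that set $P$ and $L$ are smooth functions of $S$, and the extension to all of $S$ will follow by weak differentiability and continuity, which is all Lemma~\ref{lemma:stein_haff} needs.

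\textbf{Step 1 (perturbation formulas).} Differentiate $S=PLP^\T$ and write $dP = PA$; orthogonality $P^\T P=\mathrm{I}$ forces $A$ to be antisymmetric. Then
\[
P^\T dS\, P = dL + AL - LA,
\]
so $dl_a = (P^\T dS P)_{aa}$ and $A_{ab} = (P^\T dS P)_{ab}/(l_b-l_a)$ for $a\ne b$. Since $S$ is symmetric, a perturbation of the entry $S_{ij}$ also moves $S_{ji}$. This gives
\[
\frac{\partial l_a}{\partial S_{ij}}=(2-d_{ij})\,p_{ia}p_{ja},
\]
and an analogous expression for $\partial p_{ka}/\partial S_{ij}$. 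The key observation is that $\tfrac12(1+d_{ij})(2-d_{ij})=1$ in both cases $i=j$ and $i\ne j$. Therefore the operator $D_S$ acts as if $S$ had independent entries with symmetrized rank-one perturbation direction $e_ie_j^\T$.

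\textbf{Step 2 (apply $D_S$ to $G$).} Write $G(S)=P\Psi(L)P^\T$ and
\[
dG = P\bigl(d\Psi + A\Psi - \Psi A\bigr)P^\T .
\]
Compute $(D_S G)_{ik}=\sum_j \tfrac12(1+d_{ij})\,\partial G_{jk}/\partial S_{ij}$ by the chain rule, and collapse the sums over $j$ using $\sum_j p_{ja}p_{jb}=d_{ab}$.
\begin{itemize}
\item The $d\Psi$ term contributes $P\,\mathrm{diag}(\partial\psi_a/\partial l_a)\,P^\T$. The cross-derivatives $\partial\psi_a/\partial l_b$ with $b\ne a$ drop out after contraction.
\item The commutator term contributes $\tfrac12\sum_{b\ne a}(\psi_a-\psi_b)/(l_a-l_b)$ on the $a$-th diagonal entry in the eigenbasis. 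The off-diagonal contributions cancel by the antisymmetry of $A$.
\end{itemize}
Together these give $P\Psi^{(1)}P^\T$. The remaining term $\tfrac12\mathrm{tr}\{L^{-1}\Psi(L)\}(\mathrm{I}_p-PP^\T)$ vanishes when $P$ is square orthogonal. I would note that it is only nonzero in the singular or semi-orthogonal version treated in \cite{haddouche2021scale,boukehil2021estimation}. In that version the same computation is done with $P$ having orthonormal columns spanning the range of $S$, and the projection onto the orthogonal complement produces that term.

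\textbf{Main obstacle.} The difficulty is the index bookkeeping in Step 2: keeping the $\tfrac12(1+d_{ij})$ factors straight, and checking that the off-diagonal pieces of the eigenvector derivatives really cancel, leaving only the factor $\tfrac12$ on the divided differences. I would handle this by working entirely in the eigenbasis. Concretely, I would verify the identity on the test direction $dS = P(e_ae_b^\T + e_be_a^\T)P^\T$ before summing.

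A secondary point is justifying the step from the distinct-eigenvalue case to general $S$. For this I would use that the divided differences extend continuously (to $\psi'$) as eigenvalues merge, so $G$ is weakly differentiable and the formula holds almost everywhere.
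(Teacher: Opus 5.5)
Your argument is correct, but it does not follow the paper: the paper gives no proof of this lemma and simply records it from \cite{haddouche2021scale,boukehil2021estimation}. There the identity is stated for possibly singular $S$, with $P$ having orthonormal columns spanning the range of $S$; that is the only reason the term $\frac12\mathrm{tr}\{L^{-1}\Psi(L)\}(\mathrm{I}_p-PP^\T)$ appears.

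Your first-order perturbation derivation is self-contained and makes explicit three things the citation hides:
\begin{enumerate}
\item For positive definite $S$ the trace term is identically zero, so the corresponding term carried along in the proof of Theorem~\ref{thm:SURE} vanishes.
\item The diagonal entry should read $\partial\psi_j/\partial l_j$; the $\partial\psi_i/\partial l_i$ in the statement is a typo.
\item The cross-derivatives $\partial\psi_a/\partial l_c$, $a\neq c$, drop out after contracting with $\sum_j p_{ja}p_{jc}=d_{ac}$. This is exactly what that proof needs, because $\zeta_j=\lambda_j^*/\delta_j$ depends on all eigenvalues through $g_n^*$.
\end{enumerate}
The citation, in exchange, covers the rank-deficient case at no cost.

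One correction to your description of the commutator term: nothing cancels by antisymmetry of $A$. Write $M^{(ij)}=P^\T\,\tfrac12(e_ie_j^\T+e_je_i^\T)\,P$. Then $\sum_j p_{ja}M^{(ij)}_{ab}=\tfrac12(p_{ia}d_{ab}+p_{ib})$. For $a\neq b$ the first piece is killed by orthogonality of $P$, and the second is already diagonal in the eigenbasis. This yields $\tfrac12\sum_{a\neq b}(\psi_b-\psi_a)/(l_b-l_a)$ in the $b$-th diagonal slot, so there are no off-diagonal contributions to cancel in the first place.
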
 
\noindent We are now ready to prove Theorem 7.
Setting $G(\underline{S}) =  \widetilde{\Sigma}(h)^{-1}$ and applying Lemma \ref{lemma:stein_haff} we get 
\begin{align*}
    \mathbb{E}_{\Sigma} &\left[\mathrm{tr}\left(\Sigma^{-1}  \underline{S} \widetilde{\Sigma}(h) ^{-1} \right)\right] =  \mathbb{E}_{\Sigma} \left[ \mathrm{tr}\left\lbrace (n-p - 1) G(\underline{S}) + 2 D_{\underline{S}}(G(\underline{S})^\T \underline{S}) \right\rbrace\right] \\
    & = \mathbb{E}_{\Sigma} \left[ \mathrm{tr}\left\lbrace (n-p - 1) G(\underline{S}) + 2 D_{\underline{S}}(\widetilde{\Sigma}_h^{-1} \underline{S}) \right\rbrace\right]\\
    & = \mathbb{E}_{\Sigma} \left[ \mathrm{tr}\left\lbrace (n-p - 1) G(\underline{S}) + 2 D_{\underline{S}}(U \Delta^{-1}\Lambda^* U^\T) \right\rbrace\right]\\
    & = \mathbb{E}_{\Sigma} \left[ \mathrm{tr}\left\lbrace (n-p - 1) G(\underline{S}) + 2 D_{\underline{S}}(U \zeta(\Lambda^*) U^\T) \right\rbrace\right]\\
    & = \mathbb{E}_{\Sigma} \left[ \mathrm{tr}\left\lbrace (n-p - 1) G(\underline{S}) + 2 U \zeta^{(1)}(\Lambda^*) U^\T + \mathrm{tr}\{{\Lambda^*}^{-1} \zeta(\Lambda^*) \} (\mathrm{I_p - U U^\T})\right\rbrace \right]
\end{align*}
where $\zeta(\Lambda^*)$ is a diagonal matrix with $j$-th element $\zeta_j = \lambda_j^*/\delta_j, \, j = 1, \ldots, p$, and the last equality follows by applying Lemma \ref{lemma:D_operator}. Now, recall the definition of $\delta$:
\begin{align*}
    &\delta_{j}^{-1} = c_{1} \lambda_{j}^{-1} + c_{2} \lambda_{j}^{-1} g_n^*\left(\lambda_{j}^{-1}\right),\, \,
    g_n^*(x) = \frac{1}{p} \sum_{i=1}^p \lambda_{i}^{-1} \dfrac{\lambda_{i}^{-1} - x}{(\lambda_{i}^{-1} - x)^2 + h^2 \lambda_{i}^{-2}},  
\end{align*}
where $c_1 = (1 - p/n)$ and $c_2 = 2 (p/n)$. Hence,
\begin{align*}
    &\frac{\partial \delta_{j}^{-1}}{\partial \lambda_{j}^{-1}}  = c_{1} + c_{2} \hat{\theta}\left(\lambda_{j}^{-1}\right) + c_{2} \lambda_{j}^{-1} \frac{d g_n^*(\lambda_{j}^{-1})}{d \lambda_{j}^{-1}},\,\, 
    & \frac{d g_n^* (\lambda_{j}^{-1})}{d \lambda_{j}^{-1}}  =  \frac{1}{p} \sum_{i=1}^p \lambda_{i}^{-1} \dfrac{(\lambda_{i}^{-1} - \lambda_{j}^{-1})^2 - h^2 \lambda_{i}^{-2}}{\{(\lambda_{i}^{-1} - \lambda_{j}^{-1})^2 + h^2 \lambda_{i}^{-2}\}^2}.
\end{align*}
Therefore,
\begin{align*}
    \frac{\partial \zeta_{j}}{\partial {\lambda_{j}^*}} &= \frac{1}{\delta_j} - \frac{\lambda_j^*}{\delta_j^2}  \frac{\partial \delta_{j}}{\partial {\lambda_{j}^*}} 
 = \frac{1}{\delta_j} - \frac{\lambda_j^*}{n\delta_j^2}  \frac{\partial \delta_{j}}{\partial {\lambda_{j}}}  = \frac{1}{\delta_j} - \frac{1}{\lambda_j}  \frac{\partial \delta_{j}^{-1}}{\partial {\lambda_{j}^{-1}}}
\end{align*}
\section{Proofs of Section 3.2
}
\subsection{Proof of Theorem 8
}
The difference between the cases $m=0$ and $m>1$ comes from the presence of the term $x^m$ for $m>1$. Specifically, for $m>1$, using Lemma from \cite[Lemma 14.1]{ledoit2018optimal} we get that $\mathbf{L}_{m,n}$ has the almost sure limit
    \begin{align*}
      \mathbf{L}_m =  &\int_{-\infty}^{\infty} x^m d\Phi^{(-2)} (x) - \frac{2}{c}\sum_{k=1}^K \int_{a_k}^{b_k} \frac{x^m}{\delta(x)} \phi^{(-1)}(x) d\underline{F}(x)+ \frac{1}{c}\sum_{k=1}^K \int_{a_k}^{b_k} \frac{x^m}{\delta^2(x)} d\underline{F}(x) \\
      &+ \frac{c-1}{c} \left[ \frac{0^m}{\delta^2(0)}\phi^{(-1)}(0) - 2 \frac{\phi^{(-1)}(0) 0^m}{\delta(0)}\right]\\
      & = \int_{-\infty}^{\infty} x^m d\Phi^{(-2)} (x) - 2\sum_{k=1}^K \int_{a_k}^{b_k} \frac{x^m}{\delta(x)} \phi^{(-1)}(x) dF(x)+ \sum_{k=1}^K \int_{a_k}^{b_k} \frac{x^m}{\delta^2(x)} dF(x),
    \end{align*}
since $dF(x) = (1/c)d\underline{F}(x)$.    
For $m = 0$, $\mathbf{L}_{m, n} = \int_{-\infty}^\infty d\Phi_n^{(-2)}(x) - 2 \int_{-\infty}^{\infty} \frac{1}{\delta_n(x)}d\Phi_n^{(-1)}(x) + \int_{-\infty}^{\infty} \frac{1}{\delta_n^2(x)}dF_n(x)$. The limit can then be calculated using similar arguments as for the case $m>1$.  
\subsection{Proof of Corollary 2
}
The proof is similar to Corollary 1.
\subsection{Proof of Theorem 9
}
Recall that $\breve{m}_{\underline{F}}(x) = c\breve{m}_{F}(x) + (c-1)/x$ when $p>n$. Define $\underline{\Phi}(x) = 1 - \underline{F}(1/x)$ if $x>0$ and $0$ otherwise. Let $\underline{\Psi}(x) = \int_{-\infty}^x t d\underline{\Phi}(t)$, and for any real-valued function $g$,
$$\mathcal{H}_g(x) = \frac{1}{\pi} PV \int_{-\infty}^\infty g(t) \dfrac{1}{t-x} dt $$
denote the Hilbert transform of $g$. The following relations are true \citep[Appendix C, D]{ledoit2022quadratic}
$$\text{Re}[\breve{m}_{\underline{\Psi}}(1/x)] = - x\text{Re}[\breve{m}_{\underline{F}}(x)] \, \forall x \in \text{Supp}(\underline{F}), \,\, \text{Re}[\breve{m}_{\underline{\Psi}}(x)] = \pi \mathcal{H}_{\underline{\psi}}(x),$$
where $\psi = d\Psi$.
Then,
\begin{align*}
    \delta^*(x) &= \dfrac{x}{1 - c - 2cx \text{Re}[\breve{m}_F(x)]}\\
    & = \dfrac{x}{1 - c - 2cx \text{Re}[(1/c) \breve{m}_{\underline{F}}(x) - \{(c-1)/cx\}]}\\
    & = \dfrac{x}{c -1 - 2x \text{Re}[ \breve{m}_{\underline{F}}(x)]}\\
    & = \dfrac{x}{c -1 + 2 \text{Re}[ \breve{m}_{\underline{\Psi}}(1/x)]}\\
    & = \dfrac{x}{c -1 + 2 \pi \mathcal{H}_{\underline{\psi}}(1/x)} = \dfrac{1}{(c -1)x^{-1} + 2 \pi \mathcal{H}_{\underline{\psi}}(1/x)x^{-1}}.
\end{align*}
Next consider the shrinkage function
$$\delta_{n}^* (x) = \left[ \left(\frac{p}{n} - 1 \right) x^{-1} +   2x^{-1}  g^*_n(x^{-1}) \right]^{-1} = \left[ \left(\frac{p}{n} - 1 \right) x^{-1} +   2x^{-1}  \pi \mathcal{H}_{\underline{\psi}_n}(x^{-1})) \right]^{-1},$$
where $\underline{\Phi}_n(x) = 1 - \underline{F}_n(1/x)$,  $\underline{\Psi}_n(x) = \int_{-\infty}^x t d\underline{\Phi}_n(t)$, and $\underline{\psi}_n = d\underline{\Psi}_n$. The result follows from \cite[Theorem D.1]{ledoit2022quadratic}.

\section{Algorithm to estimate $(\pi_k, \Sigma_k)$ from Section 4
}\label{sec:algo}
    \textbf{Input:} A matrix $\widehat{B}_\star$ with $\widehat{\beta}^{(t)}_\star$ as the $t$-th row for $t = 1, \ldots, n$. 
    
    \noindent \textbf{Output:}  Mean of $\mathbb{E}(\beta^{(t)} \mid \widehat{\beta}^{(t)})$ across $T$ samples.
    \begin{enumerate}
        \item Initialize $z_1, \ldots, z_n\in \{0,1\}^{K \times 1}$, where $z_i$ is the latent indicator vector corresponding to the $i$-th row in $\widehat{B}_\star$.
        \item Set $\widehat{B}_k$ as a matrix by extracting the $i$-th row in  $\widehat{B}_\star$ 
        if and only if  $z_{ik} = 1$, and let $n_k =$ number of rows of $\widehat{B}_k$, where $z_{ik}$ is the $k$-th element in $z_i$.
        \item If $n_k > p $, estimate $\Sigma_k$ using the estimator from Theorem 6
with data $\widehat{B}_k$.
        \item If $n_k <p$, estimate $\Sigma_k$ using the estimator from Theorem 9 
        with data $\widehat{B}_k$.
        \item Estimate $\pi_k = \sum_{t=1}^n z_k/n$.
        \item Sample $z^{(t)} \sim \text{Multinomial}(p_1^{(t)}, \ldots, p_K^{(t)})$ where $p_k^{(t)} \propto \pi_k f(\widehat{\beta}^{(t)}_\star; 0, \Sigma_k) $
        \item Compute $\mathbb{E}(\beta^{(t)} \mid \widehat{\beta}^{(t)}) = \sum_{k=1}^K p_k^{(t)}(\mathrm{I} - C_k) \beta^{(t)}$ where $C_k = Q^{-1/2} \Sigma_k Q^{-1/2}$.
        \item Repeat steps 2-7 $T$ times. 
    \end{enumerate}

\newpage
\begin{landscape}
    \begin{table}[!t]
    \centering
    \scalebox{.6}{
    \begin{NiceTabular}{c|c|c||cccccccc||cccccccc}
    \toprule
    & & & \multicolumn{8}{c}{Horseshoe} & \multicolumn{8}{c}{Mixture}\\
    \midrule \addlinespace
    & & & LS  & ULS  & LLS-2 & LLS-3 & LLS-4 & UTMOST & ISA & MASH & LS  & ULS  & LLS-2 & LLS-3 & LLS-4 & UTMOST & ISA & MASH  \\
    \midrule
    \multirow{9}{*}{$n = 40$}  & \multirow{3}{*}{$p = 10$} & $\rho = 0$  & 0.005 &	0.005 &	0.001 &	0.001 &	0.003 &	0.389 &	0.222 &	41.78 & 0.005	& 0.005 &	0.00009	& 0.0004	& 0.0004 &	18.734	& 69.816	& 69.815 \\
       & &$\rho = 0.5$ & 0.009 &	0.009	& 0.002	& 0.001	& 0.002	& 0.099	& 0.467	& 11.757 & 0.009&	0.009&	0.0004 & 	0.001	& 0.001	& 22.493	& 67.585& 67.584 \\
       & & $\rho = 0.8$ & 0.019	&0.019	&0.008	& 0.009	& 0.012	& 0.646	& 0.225	& 43.664 & 0.025 &	0.025&	0.001&	0.002&	0.004	& 36.026	& 68.666 &	68.665 \\
    \cmidrule{2-19} 
      & \multirow{3}{*}{$p = 20$} & $\rho = 0$ & 0.005&	0.005&	0.001&	0.003&	0.001&	5.455&	0.291&	29.18 & 0.006 &	0.006 &	0.001 &	0.001& 	0.0004	& 15.983 & 	45.682 &	54.527 \\
       & &$\rho = 0.5$ & 0.01 &	0.01&	0.003	&0.004	&0.002	&2.262	&0.205	&21.588 & 0.01 &	0.01 &	0.001 &	0.004 &	0.005 &	24.738	& 71.846	&72.975 \\
       & & $\rho = 0.8$ & 0.024&	0.024 &	0.011 &	0.009 &	0.01&	2.616 &	0.224 &	69.25 & 0.027&	0.027 &	0.003 &	0.017 &	0.012&	41.648&	72.82&	73.879\\
      \cmidrule{2-19} 
        & \multirow{3}{*}{$p = 30$} & $\rho = 0$ & 0.006 &	0.006 &	0.002 &	0.001& 	0.0003&	0.34	&0.15	&21.305 & 0.006 &	0.006&	0.0003&	0.0002&	0.0001	&22.668&	54.901 & 65.854 \\
       & &$\rho = 0.5$ & 0.011&	0.011	& 0.002&	0.001&	0.001&	0.45&	0.149&	5.335 & 0.011	& 0.011	& 0.002 & 	0.0004 &	0.0003 & 	24.979 &	58.82	& 71.139 \\
       & & $\rho = 0.8$ & 0.027 &	0.027 &	0.014 &	0.004 &	0.015 &	0.765 &	0.155 &	9.985 & 0.028 &	0.028 &	0.003 &	0.001 &	0.001 &	37.602 &	57.827 &	72.047\\
       \bottomrule
       \multirow{9}{*}{$n = 50$}  & \multirow{3}{*}{$p = 20$} & $\rho = 0$ & 0.005	&0.005	&0.0003&	0.001&	0.001&	0.18&	0.316&	29.052 & 0.006 &	0.006 &	0.0001 &	0.014	& 0.0004&	20.947	&69.986	&69.986\\
       & &$\rho = 0.5$ & 0.01	& 0.01	&0.002	&0.002	&0.002	&0.288	&0.263	&13.965 & 0.011	&0.011	&0.001	&0.001	&0.002	&23.517	&70.569	&70.569 \\
       & & $\rho = 0.8$ & 0.023&	0.023&	0.005 &	0.008	& 0.01&	14.937&	0.201	&33.898 & 0.026	& 0.026	&0.001&	0.004 &	0.003&	41.638&	72.403	&72.403 \\
    \cmidrule{2-19} 
      & \multirow{3}{*}{$p = 30$} & $\rho = 0$ & 0.006 &	0.006 &	0.004 &	0.001&	0.0003	&2.107&	0.119&	10.879 & 0.006 &	0.006 &	0.003 &	0.001& 	0.0001	&24.988&	68.049&	74.969\\
       & &$\rho = 0.5$ & 0.007 &	0.007 &	0.005 &	0.009& 	0.005 &	3.638 &	0.004 &	8.861 & 0.011&	0.011&	0.002 &	0.001 &	0.001 &	23.57	& 62.323 &	67.754 \\
       & & $\rho = 0.8$ & 0.028 &	0.028	& 0.491	& 0.0003	&0.002&	83.803&	4.601	&227.233 & 0.029	&0.029&	0.009&	0.002&	0.002&	25.804&	62.456&	67.89\\
      \cmidrule{2-19} 
        & \multirow{3}{*}{$p = 40$} & $\rho = 0$ & 0.006 &	0.006 &	0.004 & 0.0005 &	0.0005 &	20.473 &	3.221 &	3841.965 & 0.006 &	0.006 & 	0.0002	& 0.0001 & 	0.0001 &	30.739 &	59.232 &	72.502\\
       & &$\rho = 0.5$ & 0.0121 &	0.0121	& 0.0002 &	0.0001	& 0.0004	& 10.9984	& 4.515 &	1253.4127 & 0.012 &	0.012 &	0.001 & 	0.0004 &	0.0003 &	25.182 &	51.436 &	68.82\\
       & & $\rho = 0.8$ &  0.031	& 0.031 &	0.003 &	0.0001	& 0.00009	& 7.493	& 5.558	& 238.306 & 0.03 &	0.03 &	0.002 &	0.001 &	0.001 &	36.298 &	51.744 &	69.121 \\
        
         \bottomrule
    \end{NiceTabular}
    
    }
    \caption{Mean of MSE 
    for Horseshoe and mixture settings with the same settings considered above. }
    \label{tab:MSE_T=40_p2}
\end{table}
\end{landscape}
\clearpage

\begin{landscape}
    \begin{table}
    \centering
    \scalebox{.55}{
    \begin{NiceTabular}{c|c|c||cccccccc||cccccccc}
    \toprule
    & & & \multicolumn{8}{c}{Low-rank} & \multicolumn{8}{c}{Approximately sparse}\\
    \midrule \addlinespace
    & & & LS  & ULS  & LLS-2 & LLS-3 & LLS-4 & UTMOST & ISA & MASH & LS  & ULS  & LLS-2 & LLS-3 & LLS-4 & UTMOST & ISA & MASH  \\
    \midrule
    \multirow{9}{*}{$n = 40$}  & \multirow{3}{*}{$p = 10$} & $\rho = 0$  & 1.052 &	1.052 &	1.056 &	1.06 &	1.052 &	32.243 &	35.378 &	75.498 & 1.061 &	1.061	& 1.071 &	1.074 &	1.084 &	1.195 &	1.429 &	1.06 \\
       & &$\rho = 0.5$ & 1.042	&1.042 &	1.046 &	1.046 &	1.044 &	14.891 &	2.652 &	45.399 & 1.038	&1.038 &	1.044 &	1.051 &	1.049 &	1.099 &	1.408	& 1.168\\
       & & $\rho = 0.8$ & 1.025	&1.025 &	1.037 &	1.183 &	1.03 &	16.589 &	2.394&	59.654 & 1.039	&1.039 &	1.055&	1.049 &	1.045&	1.1	& 1.405 &	1.297\\
    \cmidrule{2-19} 
      & \multirow{3}{*}{$p = 20$} & $\rho = 0$ & 1.065&	1.063 &	1.082 &	1.109 &	1.064 &	59.323 &	1.048 &	146.886 & 1.077 &	1.076 &	1.121 &	1.23	& 1.106	&1.361	&1.781 &	1.073 \\
       & &$\rho = 0.5$ & 1.052 &	1.051 &	1.116 &	1.232 &	1.063 &	38.53 &	1.058 &	159.744 & 1.094 &	1.093	&1.135 &	1.15 &	1.14 &	1.189 &	1.857 &	1.419\\
       & & $\rho = 0.8$ & 1.064	&1.064 &	1.101 &	1.075 &	1.1	&27.509 &	1.129 &	157.787 & 1.037	&1.036 &	1.068 &	1.092 &	1.088 &	1.124	& 1.728	& 1.616  \\
      \cmidrule{2-19} 
        & \multirow{3}{*}{$p = 30$} & $\rho = 0$ & 1.146 &	1.138 &	1.146 &	1.167 &	1.147 &	99.17 &	1.081 &	249.926 & 1.185 &	1.181 &	1.188 &	1.198 &	1.207 &	1.552 &	2.202 &	1.186  \\
       & &$\rho = 0.5$ & 1.135	& 1.128	& 1.135	& 167.986&	1.134 &	51.3 &	1.087 &	205.935 & 1.157&	1.153 &	1.269 &	1.176 &	1.19 &	1.265 &	2.253 &	1.828 \\
       & & $\rho = 0.8$ & 1.137&	1.136 &	1.185	& 1.137 &	1.137 &	22.861 &	1.203 &	207.818 & 1.134&	1.131 &	1.357&	1.161 &	1.467 &	1.176&	2.105&	2.07\\
       \bottomrule
       \multirow{9}{*}{$n = 50$}  & \multirow{3}{*}{$p = 20$} & $\rho = 0$ & 1.078	&1.078&	1.229&	1.092&	1.112 &	67.854&	1.062&	169.632& 1.079 &	1.078&	1.114&	1.105&	1.181&	1.326&	1.768&	1.114\\
       & &$\rho = 0.5$ & 1.068&	1.069&	1.468&	1.136&	1.068&	36.253&	1.083&	149.564 & 1.104 &	1.104&	1.125&	1.233&	1.148&	1.201&	1.787&	1.703\\
       & & $\rho = 0.8$ & 1.076&	1.075	&1.084&	1.092&	1.08	&26.82&	1.163&	176.953 & 1.09&	1.089&	1.137&	1.209&	1.128&	1.157&	1.911&	1.896\\
    \cmidrule{2-19} 
      & \multirow{3}{*}{$p = 30$} & $\rho = 0$ & 1.072&	1.069&	1.469&	1.166	&1.072&	85.227&	1.04&	217.24& 1.207 &	1.207&	1.215&	1.238&	1.212	& 1.622&	2.358&	1.277\\
       & &$\rho = 0.5$ & 1.078	&1.078&	1.078	&1.16&	1.078&	61.998&	1.05&	282.444 & 1.168 &	1.166&	1.18	&1.255&	1.169&	1.285&	2.235&	2.039\\
       & & $\rho = 0.8$ & 1.073&	1.07&	1.073&	1.073&	1.073&	28.372&	1.13&	240.155& 1.137&	1.136	&1.437&	1.22&	2.042&	1.146&	2.073&	2.057\\
      \cmidrule{2-19} 
        & \multirow{3}{*}{$p = 40$} & $\rho = 0$ & 1.196 &	1.183 &	1.196 &	1.196 &	1.196 &	126.587	& 1.087 &	308.133 & 1.291 &	1.288 &	1.287	& 1.295 &	1.267 &	1.726 &	2.621	&1.429 \\
       & &$\rho = 0.5$ & 1.239 &	1.235 &	1.239 &	1.239 &	1.239 &	62.742 &	1.189 &	325.571 & 1.28 &	1.271 &	1.973 &	1.313	& 1.336 &	1.408 &	2.714	& 2.561 \\
       & & $\rho = 0.8$ & 1.236 &	1.248	& 1.236 &	1.236 &	1.236 &	29.764 &	1.239 &	284.038 &  1.239 &	1.234 &	1.397    	& 1.266 &	1.247 &	1.288	& 2.496	& 2.493\\
        
         \bottomrule
    \end{NiceTabular}
    
    }
    \caption{Mean of PE 
    for low-rank and approximately sparse settings when the number of tissues $n = 40, 50$. For $n = 40$, the number of covariates considered are $p = 10, 20, 30$, and for $n =50$ they are $p = 20, 30, 40$. The $\rho$ denotes correlation among the covariates. }
   \label{tab:PE_T=50_p1}
\end{table}
\end{landscape}
\newpage
\begin{landscape}
    \begin{table}[!t]
    \centering
    \scalebox{.6}{
    \begin{NiceTabular}{c|c|c||cccccccc||cccccccc}
    \toprule
    & & & \multicolumn{8}{c}{Horseshoe} & \multicolumn{8}{c}{Mixture}\\
    \midrule \addlinespace
    & & & LS  & ULS  & LLS-2 & LLS-3 & LLS-4 & UTMOST & ISA & MASH & LS  & ULS  & LLS-2 & LLS-3 & LLS-4 & UTMOST & ISA & MASH  \\
    \midrule
    \multirow{9}{*}{$n = 40$}  & \multirow{3}{*}{$p = 10$} & $\rho = 0$  & 1.03	&1.03	&1.033 &	1.031 &	1.06 &	7.905 &	3.253 &	62.921 & 1.072 &	1.072 &	1.072 &	1.077 &	1.071 &	174.86 &	663.571 &	663.568  \\
       & &$\rho = 0.5$ & 1.087	&1.087	&1.088	&1.102	&1.089	&2.074	&3.477	&106.468 & 1.044 &	1.044&	1.043&	1.044&	1.045&	123.324&	704.839	&704.835\\
       & & $\rho = 0.8$ &  1.052	&1.052 &	1.051 &	1.057 &	1.063 &	3.455&	3.216 &	349.183 & 1.042 &	1.042 &	1.041 &	1.038 &	1.037 &	155.127 &	744.975 &	744.971\\
    \cmidrule{2-19} 
      & \multirow{3}{*}{$p = 20$} & $\rho = 0$ & 1.112 &	1.113 &	1.128 &	1.168 &	1.134 &	85.393&	6.364 &	320.396 & 1.107	& 1.107&	1.103 &	1.155	& 1.15	& 444.867 &	1376.433 &	1387.943\\
       & &$\rho = 0.5$ & 1.118	&1.118 &	1.139 &	1.146 &	1.129 &	57.086 &	5.168 &	507.273 & 1.131&	1.131&	1.133 &	1.19 &	1.153	&247.869 &	1341.508 &	1355.575\\
       & & $\rho = 0.8$ & 1.114	&1.114 &	1.148 &	1.136 &	1.143 &	14.528 &	5.073 &	1240.998 & 1.098 &	1.098 &	1.099 &	1.138 &	1.112&	251.947	&1518.572 &	1539.977\\
      \cmidrule{2-19} 
        & \multirow{3}{*}{$p = 30$} & $\rho = 0$ & 1.167&	1.167&	1.197&	1.188&	1.163&	15.187&	5.683&	631.574 & 1.167	& 1.166	&1.16 &	1.161&	1.16 &	680.277 &	1632.223 &	1934.98\\
       & &$\rho = 0.5$ & 1.154	&1.153 &	1.173 &	1.162 &	1.17 &	14.385 &	5.784 &	158.461 & 1.163	& 1.163 &	1.174 &	1.156 &	1.153 &	379.048&	1743.015&	1979.399  \\
       & & $\rho = 0.8$ & 1.169	&1.167	&1.239&	1.181	&1.266 &	8.398&	5.099&	270.007 & 1.19	&1.19 &	1.198&	1.18 &	1.18 &	332.021&	1780.478 &	2203.79\\
       \bottomrule
       \multirow{9}{*}{$n = 50$}  & \multirow{3}{*}{$p = 20$} & $\rho = 0$ & 1.109&	1.109	&1.11&	1.118&	1.12&	4.448&	7.113&	504.101 & 1.111&	1.111&	1.111&	1.295&	1.111&	428.534&	1447.038&	1447.033 \\
       & &$\rho = 0.5$ & 1.13&	1.13&	1.139&	1.145&	1.143&	4.844&	6.933&	304.907 & 1.106&	1.107&	1.105&	1.11&	1.117&	242.846&	1477.281&	1477.276\\
       & & $\rho = 0.8$ & 1.099&	1.099&	1.096&	1.113&	1.118&	143.691&	5.566&	711.598 & 1.068 &	1.068	&1.064&	1.069	&1.065&	253.064&	1492.048&	1492.043 \\
    \cmidrule{2-19} 
      & \multirow{3}{*}{$p = 30$} & $\rho = 0$ & 1.149&	1.15&	1.214&	1.158	&1.153&	50.395&	4.671&	358.337 & 1.142&	1.142&	1.217&	1.164&	1.137&	742.768&	2078.155&	2299.509\\
       & &$\rho = 0.5$ & 1.182	&1.181	&1.2	&1.203&	1.198&	15.029&	6.539&	121.082 & 1.152	&1.152&	1.164&	1.138&	1.135&	389.566&	2042.295&	2123.863\\
       & & $\rho = 0.8$ & 1.144&	1.144&	2.894&	1.149&	1.162&	960.661&	122.802&	3322.108 & 1.149&	1.149&	1.162&	1.129&	1.13&	160.469	&1464.253&	1678.723\\
      \cmidrule{2-19} 
        & \multirow{3}{*}{$p = 40$} & $\rho = 0$ & 1.247	&1.247 &	1.344 &	1.251 &	1.252 &	340.666 &	130.216 &	7640.402 & 1.25 &	1.249 &	1.24 &	1.229 &	1.234 &	1212.75 &	2345.017 &	2818.027\\
       & &$\rho = 0.5$ & 1.194	& 1.194 &	1.2 &	1.193 &	1.193 &	317.62	& 185.923 &	55318.149 & 1.237 &	1.237	&1.234 &	1.231 &	1.227 &	535.298 &	2061.745 &	2699.554\\
       & & $\rho = 0.8$ & 1.229 &	1.228 &	1.27 &	1.228 &	1.227 &	79.532 &	286.479&	11073.277 & 1.204 &	1.202 &	1.199 &	1.192 &	1.193	&356.081 &	1434.132 &	2051.96\\
        
         \bottomrule
    \end{NiceTabular}
    
    }
    \caption{Mean of PE 
    for Horseshoe and mixture settings with the same settings considered above.  }
    \label{tab:PE_T=50_p2}
\end{table}
\end{landscape}

\end{document}